\setlist[enumerate]{leftmargin=.5in}
\setlist[itemize]{leftmargin=.5in}
\DeclareMathOperator{\diag}{diag}
\xdef\csname bf\x \endcsname{\noexpand\ensuremath{\noexpand\mathbf{\x}}}
\xdef\csname bm\x \endcsname{\noexpand\ensuremath{\noexpand\boldsymbol{\x}}}
\xdef\csname bs\x \endcsname{\noexpand\ensuremath{\noexpand\boldsymbol{\x}}}
\xdef\csname bf\x \endcsname{\noexpand\ensuremath{\noexpand\mathbf{\x}}}
\xdef\csname bb\x \endcsname{\noexpand\ensuremath{\noexpand\mathbb{\x}}}
\xdef\csname ds\x \endcsname{\noexpand\ensuremath{\noexpand\mathds{\x}}}
\xdef\csname cal\x \endcsname{\noexpand\ensuremath{\noexpand\mathcal{\x}}}
\DeclareMathOperator{\trace}{Tr}
\DeclareMathOperator{\vectorize}{vec}
\DeclareMathOperator{\rank}{rank}
\def\transp{\top}
\def\herm{\mathsf{H}}
\DeclareMathOperator*{\argmin}{arg\,min}
\DeclareMathOperator*{\real}{real}
\DeclareMathOperator*{\imag}{imag}
\DeclareMathOperator*{\prox}{\textbf{prox}}
\newcommand{\conj}[1]{\overline{#1}}
\renewcommand{\bmj}{\jmath}
\newtheorem{theorem}[]{Theorem}
\newtheorem{remark}[]{Remark}
\newtheorem{proposition}[]{Proposition}
\newtheorem{example}[]{Example}
\newtheorem{lemma}[]{Lemma}
\newcommand{\fourier}[1]{\mathfrak{#1}}
\newcommand{\fourierv}[1]{\boldsymbol{\mathfrak{#1}}}
\newcommand{\vect}[1]{{#1}_{\mathrm{vec}}}
\newcommand{\vectx}{\boldsymbol{\xi}}
\newcommand{\vectX}{\boldsymbol{\Xi}}
\newcommand{\vectz}{\boldsymbol{\psi}}
\newcommand{\vectZ}{\boldsymbol{\Psi}}
\def\multmat#1#2{\mathbf{M}_{#2}(\mathbf{#1})}
\def\pco#1#2{#1_{#2}}
\def\sylv#1{\mathcal{S}_{#1}}
\def\hankel#1{\mathcal{H}_{#1}}
\DeclareMathOperator{\colspan}{colspan}
\newcommand{\FourierGamma}{\boldsymbol{\mathfrak{F}}}
\newcommand{\ie}{\emph{i.e.}~}
\begin{document}

\begin{frontmatter}

  \title{Polarimetric phase retrieval:\\ uniqueness and algorithms\tnoteref{funding}}

  \author[cranaddress]{Julien Flamant\corref{mycorrespondingauthor}}
  \cortext[mycorrespondingauthor]{Corresponding author}
  \ead{julien.flamant@cnrs.fr}
  \author[cranaddress]{Konstantin Usevich}
  \ead{konstantin.usevich@univ-lorraine.fr}
  \author[iecladdress]{Marianne Clausel}
  \ead{marianne.clausel@univ-lorraine.fr}
  \author[cranaddress]{David Brie}
  \ead{david.brie@univ-lorraine.fr}
  \address[cranaddress]{CNRS, Université de Lorraine, CRAN, F-54000 Nancy France}
  \address[iecladdress]{CNRS, Université de Lorraine, Institut Elie Cartan de Lorraine, F-54000 Nancy France}
  \tnotetext[funding]{This work was funded by CNRS and GdR ISIS under the 2019-2021 OPENING exploratory research project grant.}

  \begin{abstract}
    This work introduces a novel Fourier phase retrieval model, called \emph{polarimetric phase retrieval} that enables a systematic use of polarization information in Fourier phase retrieval problems.
    We provide a complete characterization of uniqueness properties of this new model by unraveling equivalencies with a peculiar polynomial factorization problem.
    We introduce two different but complementary categories of reconstruction methods.
    The first one is algebraic and relies on the use of approximate greatest common divisor computations using Sylvester matrices.
    The second one carefully adapts existing algorithms for Fourier phase retrieval, namely semidefinite positive relaxation and Wirtinger-Flow, to solve the polarimetric phase retrieval problem.
    Finally, a set of numerical experiments permits a detailed assessment of the numerical behavior and relative performances of each proposed reconstruction strategy.
    We further highlight a reconstruction strategy that combines both approaches for scalable, computationally efficient and asymptotically MSE optimal performance.
  \end{abstract}

  \begin{keyword}
    Fourier phase retrieval, polarization, approximate greatest common divisor, semidefinite positive relaxation, Wirtinger Flow
  \end{keyword}

\end{frontmatter}

\section{Introduction}
\label{sec:introduction}

The problem of Fourier phase retrieval, \ie the recovery of a signal given the magnitude of its Fourier transform, has a long and rich history dating back from the 1950s \cite{sayre_implications_1952}.
The Fourier phase retrieval problem has been -- and continues to be -- of tremendous importance for many applications areas involving optics, such as crystallography \cite{elser2003phase,elser2018benchmark,millane1990phase}, astronomy \cite{fienup_reconstruction_1978,Fienup:93}, coherent diffraction imaging (also known as lensless imaging) \cite{miao1999extending,maiden2009improved,shechtman_phase_2015}, among others.
Such problem arises in optics since \emph{phase information} of light cannot be measured directly due to the high oscillating frequency of the electromagnetic field: indeed there is no conventional detector that can sample at a rate of $\sim 10^{12}$ Hz (infrared) up to $\sim 10^{18}$ Hz (hard x-rays).
This means that in such imaging applications, only intensity measurements can be performed, and that the phase should be recovered numerically afterwards.
Moreover, during the last decade, the phase retrieval problem has gained a lot of interest in the signal processing and applied mathematics community \cite{balan_signal_2006,candes_phase_2011,candes_phase_2013,bandeira2014phase}.
However, it is important to note that most of works in this community focus on generalized phase retrieval problems, where Fourier measurements are replaced or combined with random projections.
While this allows the derivation of several important results using probabilistic considerations, e.g. uniqueness or stability guarantees, these results are not directly applicable to the original  (deterministic)  Fourier phase retrieval problem.
Indeed, it is well known that one-dimensional univariate Fourier phase retrieval does not admit a unique solution in general \cite{beinert2015ambiguities}.
We refer the reader to \cite{boche_fourier_2017} for a recent review of proposed (deterministic) strategies to recover uniqueness of Fourier phase retrieval as well as associated algorithms.

Just like color (wavelength), \emph{polarization} is a fundamental property of light.
It encodes the geometry of oscillations of the electromagnetic field, which describes an ellipse in the 2D plane perpendicular to the propagation direction for vacuum-like media \cite{chipman_polarized_2018}.
As polarized light propagates in media, its polarization can change, thus revealing key properties, such as medium anisotropy or architectural order that are inaccessible to conventional, non-polarized light \cite{perez_polarized_nodate}.
As a result, polarized light imaging has found many applications such as in material characterization \cite{Gordon2000}, remote sensing \cite{tyo2006review} or bio-imaging \cite{guo_revealing_2020}.
Despite the important practical interests of polarization, only a few authors have considered leveraging this fundamental attribute of light in phase retrieval problems.
The authors in \cite{smirnova_attosecond_2009,raz_vectorial_2011} pioneered the use of polarization in Fourier phase retrieval in the context of ultrashort (e.g. attosecond, $\sim 10^{-18}s$) laser pulse characterization.
Figure \ref{fig:ultrashortExp} depicts a simplified experimental setup of such an experiment.
The goal here is to recover the polarized pulse after the medium, given \emph{polarimetric} projections of its Fourier transform -- recorded by the spectrometer.
Comparing the reconstructed output pulse with the input laser pulse, one is able to recover key anisotropic properties of the studied medium.
More recently, authors have developed \emph{vectorial ptychography} \cite{ferrand_ptychography_2015,ferrand_quantitative_2018}, a promising lensless imaging technique that simultaneously uses polarization and tilted measurements.
This allows quantitative imaging of complex anisotropic media, such as biominerals \cite{baroni_extending_2020,baroni_reference-free_2020}.

This work introduces a novel Fourier phase retrieval model, called \emph{polarimetric phase retrieval} that enables a systematic use of polarization information in Fourier phase retrieval problems.
The rationale is the following: we consider the \emph{polarimetric phase retrieval} problem as the problem of recovering the 1D bivariate signal representing polarized light from scalar quadratic Fourier magnitude measurements.
Importantly, this new model leverages physical acquisition schemes relevant to polarization measurement.
Notably, it encompasses as special cases previous models proposed in the literature \cite{jaganathan_reconstruction_2019,raz_vectorial_2013} originally developed to account for vectorial-like diversity in phase retrieval.
Our contributions can be stated as follows.
We first provide a complete characterization of uniqueness properties of the polarimetric phase retrieval model, by unraveling equivalencies with a peculiar polynomial factorization problem.
Notably, we show that unlike standard 1D Fourier phase retrieval, almost all bivariate signals can be uniquely recovered from polarimetric Fourier measurements.
We also introduce two different but complementary categories of reconstruction methods.
The first one is algebraic and relies on the use of approximate greatest common divisor computations using Sylvester matrices.
The second one carefully adapts existing algorithms for Fourier phase retrieval, namely semidefinite positive relaxation and Wirtinger-Flow, to solve the polarimetric phase retrieval problem.
Finally, a set of numerical experiments permits a thorough assessment of the numerical behavior and relative performances of each proposed reconstruction strategy.
We further highlight a reconstruction strategy that combines both approaches for scalable, computationally efficient and asymptotically MSE optimal performance.

This paper is organized as follows.
Section \ref{sec:model} introduces the polarimetric phase retrieval model and discusses its equivalent formulations as well as trivial ambiguities.
Section \ref{sec:uniqueness} provides a complete study of the uniqueness properties of the polarimetric phase retrieval model, by leveraging a polynomial factorization representation of the problem.
Section \ref{sec:solvingPPR_algebraic} exploits uniqueness results to propose two algebraic reconstruction methods based on approximate greatest common divisor computations.
Section \ref{sec:solvingPPR_iterative} takes a complementary path, by developing two iterative algorithms to solve the polarimetric phase retrieval problem.
Section \ref{sec:numericalExperiments} details several numerical experiments to illustrate and assess the practical performances of the proposed reconstruction algorithms.
Section \ref{sec:conclusion} collects concluding remarks and Appendices gather technical details and supplementary results.

\begin{figure}[t]
  \centering
  \includegraphics[width=.8\textwidth]{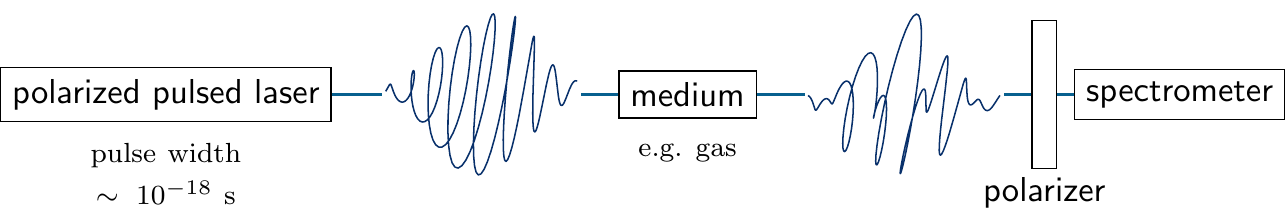}
  \caption{Simplified experimental setup for measuring ultrashort (e.g. a few attoseconds $10^{-18}$ s) electromagnetic polarized pulses.
    Inspired by \cite{smirnova_attosecond_2009,raz_vectorial_2011}.}
  \label{fig:ultrashortExp}
\end{figure}

\section{Polarimetric phase retrieval model}
\label{sec:model}

\subsection{General formulation}
\label{sub:generalFormulationPPR}

Consider a discrete bivariate signal $\bfx[n] = (x_1[n], x_2[n])^\transp \in \bbC^2$ defined for $n=0, 1, \ldots N-1$.
Let $\bfX \in \bbC^{N\times 2}$ be its matrix representation obtained by stacking samples row-wise, \ie such that
\begin{equation}
  \bfX \triangleq \begin{bmatrix}
    x_1[0]   & x_2[0]   \\
    x_1[1]   & x_2[1]   \\
    \vdots   & \vdots   \\
    x_1[N-1] & x_2[N-1]
  \end{bmatrix}\:.
\end{equation}
We define the \emph{polarimetric phase retrieval} (PPR) problem as the problem of recovering the bivariate signal $\lbrace \bfx[n]\rbrace_{n=0, 1, \ldots, N-1}$ given scalar quadratic Fourier measurements.
Formally,
\begin{equation}
  \begin{split}
    \text{find } \bfX \in \bbC^{N\times 2} \text{ given measurements } y_{m, p} = \left\vert \bfa_m^\herm \bfX\bfb_p\right\vert^2\\
    m=0, 1, \ldots M-1, \quad  p = 0, 1, \ldots P-1
  \end{split}
  \tag{\sf{PPR}} \label{prob:PPR}
\end{equation}
where $\bfa_m \in \bbC^N$ is the discrete Fourier vector corresponding to frequency $\frac{2\pi m}{M}$, \ie $a_m[n] = \exp(\bmj\frac{2\pi m}{M}n)$ for $n=0, 1, \ldots N-1$, and the $\bfb_p \in \bbC^2$ denote $P$ arbitrary projection vectors which are supposed to be normalized such that $\Vert \bfb_p \Vert_2^2 = 1$.
\ref{prob:PPR} measurements encode the physics of the acquisition in coherent diffraction imaging, where only intensity measurements can be performed: Fourier vectors $\bfa_m$ model Fraunhofer diffraction, whereas the $\bfb_p$'s represent the different polarizers (or polarization analysers) required to measure polarization.
This agrees completely with the experimental setup described in Figure \ref{fig:ultrashortExp} in the context of ultra-short polarized electromagnetic pulse characterization.
Note also that while we focus here on \emph{physically realizable} measurement schemes, there is no obstacle from a mathematical viewpoint to extend the measurement scheme of \ref{prob:PPR} can be extended to arbitrary $\bfa_m$ sensing vectors.
This includes, for instance, random gaussian vectors as in \emph{generalized phase retrieval} problems, see \emph{e.g.} \cite{candes_phase_2011,balan_signal_2006,bandeira_saving_2014} to cite only a few.

\subsection{Relation with Fourier matrix measurements}
\label{sub:matrixBPR}

A closely related problem to \ref{prob:PPR} is the \emph{bivariate phase retrieval} (BPR) problem.
Let us introduce the discrete Fourier transform of the bivariate signal $\lbrace\bfx[n]\rbrace_{n=0, 1, \ldots, N-1}$ as
\begin{equation}\label{eq:defDFT}
  \fourierv{X}[m]  \triangleq \sum_{n=0}^{N-1}\bfx[n]\exp\left(-2\pi \bmj \frac{mn}{M}\right) = \begin{bmatrix}
    \fourier{X}_1[m] \\ \fourier{X}_2[m]
  \end{bmatrix} = (\bfa_m^\herm \bfX)^\transp \in \bbC^2
\end{equation}
for $m = 0, 1, \ldots M-1$.
Then let $\FourierGamma[m]$ denote the rank-one 2-by-2 complex spectral matrix at frequency indexed by $m$,
\begin{equation}\label{eq:Gamma_m}
  \FourierGamma[m] \triangleq \fourierv{X}[m]\fourierv{X}[m]^\herm = \begin{bmatrix}
    \vert \fourier{X}_1[m]\vert^2           & \fourier{X}_1[m]\conj{\fourier{X}_2[m]} \\
    \fourier{X}_2[m]\conj{\fourier{X}_1[m]} & \vert \fourier{X}_2[m]\vert^2
  \end{bmatrix} \in \bbC^{2\times 2}\:.
\end{equation}
For each $m$, the spectral matrix $\FourierGamma[m]$ collects the squared amplitude of Fourier transforms of the two components $x_1[n]$ and $x_2[n]$ of the bivariate signal $\bfx[n]$ as well as their relative Fourier phase.
BPR is then formulated as the problem of recovering the original bivariate signal from its spectral matrices, that is:
\begin{equation}
  \begin{split}
    \text{find } \bfX \in \bbC^{N\times 2} \text{ given measurements } \FourierGamma[m], \quad
    m=0, 1, \ldots M-1\:.
  \end{split}
  \tag{\sf{BPR}} \label{prob:BPR}\
\end{equation}

Proposition \ref{prop:equivalenceBPRPPR} below shows that \ref{prob:BPR} and \ref{prob:PPR} are equivalent in the noiseless setting under very general assumptions on the projection vectors $\bfb_p$.

\begin{proposition}
  \label{prop:equivalenceBPRPPR}
  Suppose that the collection of projection vectors $\bfb_0, \bfb_1, \ldots \bfb_{P-1} \in \bbC^2$ satisfies the condition
  \begin{equation}
    \mathrm{span}_{\bbR} \left\lbrace \bfb_p\bfb_p^\herm\right\rbrace_{p=0, 1, \ldots P-1}= \left\lbrace \bfM \in \bbC^{2\times 2} \mid \bfM^\herm = \bfM \right\rbrace \tag{\calH} \label{assumption:BPRPPR}
  \end{equation}
  \ie, the $P$ rank-one matrices $\bfb_p\bfb_p^\herm$ are a generating family over $\bbR$ of the space of 2-by-2 Hermitian matrices.
  Then, under assumption \eqref{assumption:BPRPPR}, the problem \ref{prob:PPR} is equivalent to \ref{prob:BPR} in the sense that $\bfX$ is a solution of the problem \ref{prob:PPR} if and only if it is solution of \ref{prob:BPR}.
\end{proposition}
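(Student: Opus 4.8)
The plan is to \emph{linearize} the quadratic \ref{prob:PPR} measurements so that, for each fixed frequency index $m$, they become real-linear functionals of the spectral matrix $\FourierGamma[m]$, and then to invert this per-frequency linear map using the spanning hypothesis \eqref{assumption:BPRPPR}. First I would rewrite a single measurement via the standard rank-one trace identity. Setting $\bfw_m \triangleq \bfX^\herm\bfa_m \in \bbC^2$, so that $\bfa_m^\herm\bfX = \bfw_m^\herm$ and $\bfa_m^\herm\bfX\bfb_p = \bfw_m^\herm\bfb_p$, one gets
\[
  y_{m,p} = \left\vert \bfw_m^\herm\bfb_p\right\vert^2 = \bfw_m^\herm\bfb_p\bfb_p^\herm\bfw_m = \trace\!\left(\bfb_p\bfb_p^\herm\,\bfw_m\bfw_m^\herm\right).
\]
Since $\bfw_m = \conj{\fourierv{X}[m]}$, we have $\bfw_m\bfw_m^\herm = \conj{\fourierv{X}[m]}\,\fourierv{X}[m]^\transp = \conj{\FourierGamma[m]}$, whence $y_{m,p} = \trace(\bfb_p\bfb_p^\herm\,\conj{\FourierGamma[m]})$. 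This displays every \ref{prob:PPR} datum as the real Frobenius pairing of the Hermitian matrix $\conj{\FourierGamma[m]}$ with the rank-one Hermitian matrix $\bfb_p\bfb_p^\herm$, and it is the crux of the whole argument.

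With this identity the equivalence splits into two directions. The ``only if'' direction is immediate: if a candidate $\bfX$ reproduces the \ref{prob:BPR} data, i.e. its spectral matrices $\FourierGamma_{\bfX}[m]$ equal the given $\FourierGamma[m]$, then the boxed identity shows it reproduces each $y_{m,p} = \trace(\bfb_p\bfb_p^\herm\conj{\FourierGamma[m]})$, so it solves \ref{prob:PPR}. For the ``if'' direction, suppose $\bfX$ solves \ref{prob:PPR} and, for each $m$, set $\boldsymbol{\Delta}_m \triangleq \conj{\FourierGamma_{\bfX}[m]} - \conj{\FourierGamma[m]}$, which is Hermitian as a difference of Hermitian matrices. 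The linearized measurement gives $\trace(\bfb_p\bfb_p^\herm\,\boldsymbol{\Delta}_m) = 0$ for every $p$. By \eqref{assumption:BPRPPR} the matrices $\bfb_p\bfb_p^\herm$ span, over $\bbR$, the entire four-dimensional space of $2\times2$ Hermitian matrices, so $\trace(\bfH\,\boldsymbol{\Delta}_m) = 0$ for every Hermitian $\bfH$; taking $\bfH = \boldsymbol{\Delta}_m$ yields $\trace(\boldsymbol{\Delta}_m^2) = \sum_i \lambda_i^2 = 0$ with real eigenvalues $\lambda_i$, forcing $\boldsymbol{\Delta}_m = 0$. Hence $\FourierGamma_{\bfX}[m] = \FourierGamma[m]$ for all $m$, i.e. $\bfX$ solves \ref{prob:BPR}.

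I do not expect any single step to be deep; the real work is the bookkeeping in the first paragraph (keeping the transpose/conjugate conventions of $\fourierv{X}[m] = (\bfa_m^\herm\bfX)^\transp$ straight) and the recognition that real-spanning of the $\bfb_p\bfb_p^\herm$ is precisely what renders the per-frequency map injective on Hermitian matrices. Two minor points close the gaps: complex conjugation is an $\bbR$-linear bijection of the Hermitian space onto itself, so the appearance of $\conj{\FourierGamma[m]}$ rather than $\FourierGamma[m]$ is harmless (spanning and the conclusion $\boldsymbol{\Delta}_m=0 \Rightarrow \FourierGamma_{\bfX}[m]=\FourierGamma[m]$ are unaffected), and the nondegeneracy used above is just the positive-definiteness of the trace form $(\bfH,\bfK)\mapsto\trace(\bfH\bfK)$ on real-symmetric-like Hermitian matrices. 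The only genuine hypothesis consumed is \eqref{assumption:BPRPPR}, used exactly once to pass from the finitely many functionals indexed by $p$ to all Hermitian test matrices.
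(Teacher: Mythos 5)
Your proof is correct and takes essentially the same route as the paper: both express each measurement $y_{m,p}$ as a real trace pairing between the rank-one Hermitian matrix built from $\bfb_p$ and (a conjugate of) the spectral matrix $\FourierGamma[m]$, and then use the spanning hypothesis \eqref{assumption:BPRPPR} to invert this per-frequency linear map, so that the \ref{prob:PPR} and \ref{prob:BPR} data determine each other and the solution sets coincide. If anything, your converse step is more explicit than the paper's --- where the paper asserts that spanning lets $\FourierGamma[m]$ be ``readily obtained'' from the $y_{m,p}$, you prove injectivity outright by passing to $\boldsymbol{\Delta}_m$ and using positive-definiteness of the trace form on Hermitian matrices --- and the only blemish is cosmetic: your ``if''/``only if'' labels are swapped relative to the proposition's wording, though both implications are indeed proved.
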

\begin{proof}
  We first show that that under \eqref{assumption:BPRPPR}, measurements $y_{m,p}$ in \ref{prob:PPR} can be directly expressed in terms of spectral matrices $\FourierGamma[m]$, which shall imply that any solution of the problem \ref{prob:PPR} is a solution of the problem~\ref{prob:BPR}.
  Let us fix $0 \leq m \leq M-1$.
  Then one has for every $0 \leq p \leq P-1$
  $$y_{m,p} = \vert \bfa_m^\herm \bfX \bfb_p\vert^2 = \fourierv{X}[m]^\transp \bfb_p \bfb_p^\herm \conj{\fourierv{X}[m]} = \trace \conj{\bfb_p}\bfb_p^\transp \FourierGamma[m].$$
  Conversely, let us assume that
  $\lbrace \conj{\bfb_p}\bfb_p^\transp\rbrace_{p=0, 1, \ldots, P-1}$ (or equivalently, the set $\left\lbrace \bfb_p\bfb_p^\herm\right\rbrace_{p=0, 1, \ldots P-1}$) is a generating family of the space of 2-by-2 Hermitian matrices.
  Then there exist $\lambda_{m, p} \in \bbR$, $=0, 1, \ldots, M-1$, $p=0, 1, \ldots P-1$  such that
  $$ \FourierGamma[m] = \sum_{p=0}^{P-1}\lambda_{m,p} \conj{\bfb_p}\bfb_p^\transp,$$
  meaning that the spectral matrices $\FourierGamma[m]$ can be readily obtained from scalar polarimetric projections $y_{m,p}$.
  It implies that any solution of the problem \ref{prob:BPR} is a solution of the problem~\ref{prob:PPR}.
  Gathering the two parts of the proof yields Proposition~\ref{prop:equivalenceBPRPPR}.
\end{proof}
\begin{example}
  Let $P=4$ and consider the following projection vectors
  \begin{equation}
    \bfb_0 = \begin{bmatrix}
      1 \\ 0
    \end{bmatrix}, \:     \bfb_1 = \begin{bmatrix}
      0 \\ 1
    \end{bmatrix}, \:
    \bfb_2 = \frac{1}{\sqrt{2}}\begin{bmatrix}
      1 \\ 1
    \end{bmatrix},\:
    \bfb_3 = \frac{1}{\sqrt{2}}\begin{bmatrix}
      1 \\ \bmj
    \end{bmatrix}\:. \label{eq:projectionSchemeSimple}
  \end{equation}
  A direct check shows that rank-one matrices $\bfb_0\bfb_0^\herm$, $\bfb_1\bfb_1^\herm$, $\bfb_2\bfb_2^\herm$, $\bfb_3\bfb_3^\herm$ form a basis over the real vector space of 2-by-2 Hermitian matrices, and as a result, they are a generating family of such matrices.
  Polarimetric measurements in \ref{prob:PPR} read explicitly
  \begin{equation}
    \begin{split}
      &y_{m, 0} = \left\vert \fourier{X}_1[m]\right\vert^2,\quad y_{m, 1} = \left\vert \fourier{X}_2[m]\right\vert^2\\
      &y_{m, 2} =  \frac{1}{2}\left\vert \fourier{X}_1[m] + \fourier{X}_2[m]\right\vert^2,\quad y_{m, 3} =\frac{1}{2} \left\vert \fourier{X}_1[m]+\bmj\fourier{X}_2[m]\right\vert^2
    \end{split}\label{eq:measurementSchemeSimple}
  \end{equation}
  These expressions give directly the diagonal terms of $\FourierGamma[m]$ as $y_{m, 0}$ and $y_{m, 1}$.
  The off-diagonals terms can be recovered easily using polarization identities in the complex case, such that
  \begin{align}
    \real\left(\fourier{X}_1[m]\conj{\fourier{X}_2[m]}\right) & = \frac{1}{2}\left(\left\vert \fourier{X}_1[m] + \fourier{X}_2[m]\right\vert^2-\left\vert \fourier{X}_1[m]\right\vert^2-\left\vert \fourier{X}_2[m]\right\vert^2\right)  = y_{m, 2}- \frac{1}{2}\left(y_{m, 0}+y_{m, 1}\right)\:,                                                                                                              \\
    \imag\left(\fourier{X}_1[m]\conj{\fourier{X}_2[m]}\right) & = \frac{1}{2}\left(\left\vert \fourier{X}_1[m] + \bmj\fourier{X}_2[m]\right\vert^2 -\left\vert \fourier{X}_1[m]\right\vert^2-\left\vert \fourier{X}_2[m]\right\vert^2\right) = y_{m, 3}- \frac{1}{2}\left(y_{m, 0}+y_{m, 1}\right)\:.
  \end{align}
  Remark that the measurement scheme \eqref{eq:projectionSchemeSimple} yield the same quadratic measurements \eqref{eq:measurementSchemeSimple} as proposed by the authors in \cite{raz_vectorial_2013,jaganathan_reconstruction_2019}.
  This shows that \ref{prob:PPR} encompasses existing measurements strategies as a special case, while bringing extra flexibility in the experimental design of measurements.
\end{example}

\subsection{Trivial ambiguities}
\label{sub:trivialAmbiguities}
Thanks to Proposition~\ref{prop:equivalenceBPRPPR}, we can now give a characterization of trivial ambiguities of \ref{prob:PPR} model by leveraging the equivalent \ref{prob:BPR} problem.
Indeed, one can investigate in a rather simple way the trivial ambiguities that characterize \ref{prob:BPR}.
For ease of presentation, let us extend to all $n \in \bbZ$ the bivariate signal $\lbrace \bfx[n]=(x_1[n], x_2[n])^\transp\rbrace_{n=0,\ldots, N-1}$ by zero padding for $n < 0$ and $n\geq N$.
Consider the BPR measurement matrix $\FourierGamma[m]$ defined in  \eqref{eq:Gamma_m} for an arbitrary frequency indexed by $m\in\mathbb{Z}$.
Our goal in this section consists in identifying trivial operations on $\lbrace \bfx[n]\rbrace_{n\in \bbZ}$ that leave the measurements $\lbrace\FourierGamma[m]\rbrace_{m\in \bbZ}$ unchanged.

\paragraph{Global phase ambiguity} Let $\alpha \in \bbR$ and consider the bivariate signal $\bfx'$ such that $\bfx'[n] = \exp(\bmj \alpha) \bfx[n]$ for every $n$.
Then, one has, for any $m \in \bbZ$, $\FourierGamma'[m] = \FourierGamma[m]$ since $\fourier{X}_i'[m] = \exp(\bmj \alpha) \fourier{X}_i[m] $ for $i=1, 2$.

\paragraph{Time shift} Consider the time shifted signal $\bfx'$ such that $x'_1[n] = x_1[n-n_{1}]$ and $x'_2[n] = x'_2[n-n_{2}]$. Then $\FourierGamma'[m] = \FourierGamma[m]$ iff time-shifts are equal $n_{1}= n_{2}$.

\paragraph{Conjugate reflection}
Consider now $\bfx'$ such that $x'_1[n] = \overline{x_1[N-n]}$ and $x'_2[n] = \overline{x_2[N-n]}$.
Then one has for every $m$
\begin{equation}
  \FourierGamma'[m] = \begin{bmatrix}\vert \fourier{X}_1[m]\vert^2               & \fourier{X}_2[m]\overline{\fourier{X}_1}[m] \\
               \fourier{X}_1[m]\overline{\fourier{X}_2}[m] & \vert \fourier{X}_2[m]\vert^2
  \end{bmatrix} = \FourierGamma[m]^\transp.
\end{equation}
This shows that conjugate reflection is not, in general, a trivial ambiguity for complex bivariate phase retrieval.
This contrasts with standard univariate phase retrieval, see \cite{boche_fourier_2017,beinert2015ambiguities}.

Conjugate reflection can still be a trivial ambiguity provided that the measurement matrix is symmetric for every $m$, that is $\FourierGamma[m] = \FourierGamma[m]^\transp$.
Equivalently, $\FourierGamma[m]$ is symmetric iff $\fourier{X}_1[m]\overline{\fourier{X}_2}[m] = \fourier{X}_2[m]\overline{\fourier{X}_1}[m]$.
This means that $\imag\left(\fourier{X}_1[m]\overline{\fourier{X}_2}[m] \right)=0$, \ie components $\fourier{X}_1[m]$, $\fourier{X}_2[m]$ are in phase at every frequency.
Interestingly, this condition is interpreted in physical terms as: conjugate reflection is a trivial ambiguity for bivariate phase retrieval iff $\bfx$ is linearly polarized at all frequencies.

\subsection{1D equivalent model for \ref{prob:PPR}}
\label{sub:1DequivalentModelphysicsModel}

Back to the original problem \ref{prob:PPR}, we see that it defines a new measurement model that perform quadratic scalar projections of the matrix representation $\bfX \in \bbC^{N\times 2}$ of the bivariate signal of interest.
This \emph{matrix representation} of the underlying signal $\lbrace \bfx[n]\rbrace_{n=0, 1, \ldots N-1}$ can be confusing at first: indeed, the bivariate signal is intrinsically one-dimensional, in the sense that it is a function of a single index $n$ -- which can represent time or 1D spatial coordinates, for instance.
Thus, a natural question is the following:
can the \ref{prob:PPR} problem be equivalently rewritten as a one-dimensional phase retrieval problem? If so, what is the physical interpretation of such problem?

Let us denote by $\vectx \triangleq \vectorize \bfX \in \bbC^{2N}$ the long vector obtained by stacking the two columns of $\bfX$.
Using standard properties of matrix products vectorization, one can rewrite \ref{prob:PPR} measurements as
\begin{equation}
  y_{m, p} = \vert \bfa_m^\herm \bfX \bfb_p\vert^2 = \vert (\bfb_p^\transp \otimes \bfa_m^\herm) \vectx\vert^2 = \vert (\conj{\bfb_p} \otimes \bfa_m)^\herm \vectx\vert^2
  \label{eq:PPRmeasurements1Dequivalence}
\end{equation}
for $m=0, 1, \ldots M-1$,  $p = 0, 1, \ldots P-1$ and where $\bfa \otimes \bfb$ stands for the Kronecker product of vectors $\bfa$ and $\bfb$.
Letting $\bfc_{m, p} \triangleq  \conj{\bfb_p} \otimes \bfa_m \in \bbC^{2N}$, the \ref{prob:PPR} problem is equivalent to
\begin{equation}
  \begin{split}
    \text{find } \vectx \in \bbC^{ 2N} \text{ given measurements } y_{m, p} = \left\vert \bfc_{m, p}^\herm \vectx\right\vert^2\\
    m=0, 1, \ldots M-1, \quad  p = 0, 1, \ldots P-1
  \end{split}
  \tag{\sf{PPR-1D}} \label{prob:PPR-1D}
\end{equation}
This shows that \ref{prob:PPR} can be rewritten as a specific instance of 1D phase retrieval with structured measurements vectors $\bfc_{m, p} \in \bbC^{2N}$ called \ref{prob:PPR-1D}.
While being mathematically sound, this equivalent 1D problem brings almost no insights about the bivariate nature of the signal to be recovered.
Moreover, \ref{prob:PPR-1D} cannot be interpreted as a Fourier phase retrieval problem with masks \cite{bandeira2014phase, jaganathan_phase_20152}, since measurements vectors $\bfc_{m,p}$ intertwine Fourier measurements $\bfa_m$ and polarimetric projections $\bfb_p$ using Kronecker product.
Thus, the study of the theoretical properties of \ref{prob:PPR} (and \ref{prob:BPR}) cannot be inferred from standard phase retrieval properties applied to \ref{prob:PPR-1D}.
This requires a dedicated study, which is described in detail in Section \ref{sec:uniqueness}.
Nonetheless, as we shall see in Section \ref{sec:solvingPPR_iterative}, the equivalent formulation \ref{prob:PPR-1D} can still be particularly useful for designing algorithms to solve the original \ref{prob:PPR} problem.

\section{Uniqueness and polynomial formulation}
\label{sec:uniqueness}
This section studies the uniqueness properties of noiseless \ref{prob:PPR} under the set of assumptions \eqref{assumption:BPRPPR} defined in Section~\ref{sub:matrixBPR}.
Thanks to Proposition~\ref{prop:equivalenceBPRPPR}, we see that any solution of the problem \ref{prob:PPR} is a solution of the problem \ref{prob:BPR}, and vice-versa.
This formal equivalence permits to study uniqueness properties of the original \ref{prob:PPR} problem by leveraging its bivariate equivalent \ref{prob:BPR}.
Pushing this idea further, we reformulate \ref{prob:BPR} using a polynomial formalism.
Thus, the problem \ref{prob:BPR} appears as a crucial bridge that enables us to study the uniqueness conditions as uniqueness of a certain factorization of polynomials.
This idea is classic  \cite{boche_fourier_2017,beinert2015ambiguities}, but unlike the previous works, we allow for having roots at infinity, which enables us to establish the one-to-one correspondence between the two formulations, as well as a complete characterization of the uniqueness properties of  \ref{prob:BPR} (and by equivalence, to those of \ref{prob:PPR}) for any signals of bounded support.
In particular, this gives another view on the ambiguities in the phase retrieval problem.

\subsection{Polynomials with the roots at infinity, and operations with them}\label{sub:polynomials}
In this paper, we work with polynomials which may have possibly roots at $\infty$.
We briefly review properties of such polynomials, leaving more details in \ref{app:polynomials} (we also refer an interested reader to \cite[\S I.0]{heinig1984algebraic}, where such polynomials are used to build the algebraic theory of Hankel matrices).
Formally, we define $\bbC_{\le D}[z]$ to be the space of polynomials of degree at most $D$
\[
  A(z) = \sum\limits_{n=0}^{D} a[n] z^n.
\]
defined by a vector of coefficients $\bfa = \begin{bmatrix}a[0] & a[1] & \cdots & a[D] \end{bmatrix}^{\top} \in \bbC^{D+1}$.
We will say that the polynomial has root at $\infty$ (with multiplicity $\mu_k$) if its leading coefficient vanishes (\ie if $a[D]=\cdots =a[D-\mu_k+1]=0$).
With such a convention, the following extended version of the fundamental theorem of algebra holds true:
any nonzero polynomial $A \in \bbC_{\le D}[z]$ can be uniquely (up to permutation of roots) factorized  as
\begin{equation}\label{eq:polyFactorizationFTA}
  A(z) = \lambda \prod_{i=1}^{m}  (z-\alpha_{i})^{\mu_i}
\end{equation}
where $\lambda \in \bbC$, $\alpha_i \in \bbC \cup \{ \infty \}$ are distinct roots  and $\mu_i$ are the multiplicities of $\alpha_i$,
so that their sum  is
\[
  \mu_1 + \cdots +\mu_m =D;
\]
in addition, the multiplication by $(z - \infty)^{d}$ formally means that $d$ leading zero coefficients are appended (see also \ref{app:polynomials} for a formal definition).
\begin{example}\label{ex:simple_poly}
Consider the following polynomial from  $\bbC_{\le 5}[z]$:
\begin{equation}
  \label{eq:ex_poly}
  A(z) = 0 \cdot z^5 + 0 \cdot z^4 + \frac{1}{2} z^3 + \frac{1}{2} z^2 - z  \in \bbC_{\le 5}[z].
\end{equation}
This polynomial has roots $\{\infty, -2, 1, 0\}$, where  the root $\infty$ has multiplicity $2$.
Hence it has the following factorization
\[
A(z) = \frac{1}{2} (z-\infty)^2 (z - 1) (z+2) z.
\]
\end{example}
We will also use the operation of conjugate reflection of $A(z) \in \bbC_{\le D}[z]$,  defined as
\begin{equation}\label{eq:conjReverse}
  \widetilde{A}(z) = z^{D} \overline{A(\overline{z}^{-1})}  =  \sum\limits_{n=0}^{D} \overline{a[D-n]} z^n.
\end{equation}
Then it is easy to see that the conjugate reflection of the polynomial \eqref{eq:polyFactorizationFTA} admits a factorization
\[
\widetilde{A}(z) = \widetilde{\lambda} \prod_{i=1}^{m}  \left(z-\overline{\alpha^{-1}_{i}}\right)^{\mu_i}, \quad \text{where } \widetilde{\lambda} \triangleq \overline{\lambda}  \prod_{\substack{i =1\\ \alpha_i \neq \infty}}^m (-\overline{\alpha_{i}})^{\mu_i},
\]
\ie the roots $\alpha_i$ are mapped to $\overline{\alpha^{-1}_i}$, where $0$ is formally assumed to be the inverse of $\infty$ and vice versa.
\begin{example}
For \Cref{ex:simple_poly}, the conjugate reflection $\widetilde{A}(z)  \in \bbC_{\le 5}[z]$, as well as its factorization  becomes:
\[
\widetilde{A}(z)  = 0 \cdot z^5 -  z^4 + \frac{1}{2} z^3 + \frac{1}{2} z^2  = (-1) (z - \infty) (z+\frac{1}{2})(z-1)z^2.
\]
which  has roots $\{\infty, 1, -\frac{1}{2}, 0\}$, where the root $0$ has multiplicity $2$.
\end{example}
Graphically,  the conjugate reflection of the roots has a nice interpretation in terms of the Riemann sphere: the mapping of the root under conjugate reflection becomes simply a reflection with respect to the plane passing through the equator, see Fig.~\ref{fig:RiemannSphere}.

\begin{figure}[ht!]
  \includegraphics[width=\textwidth]{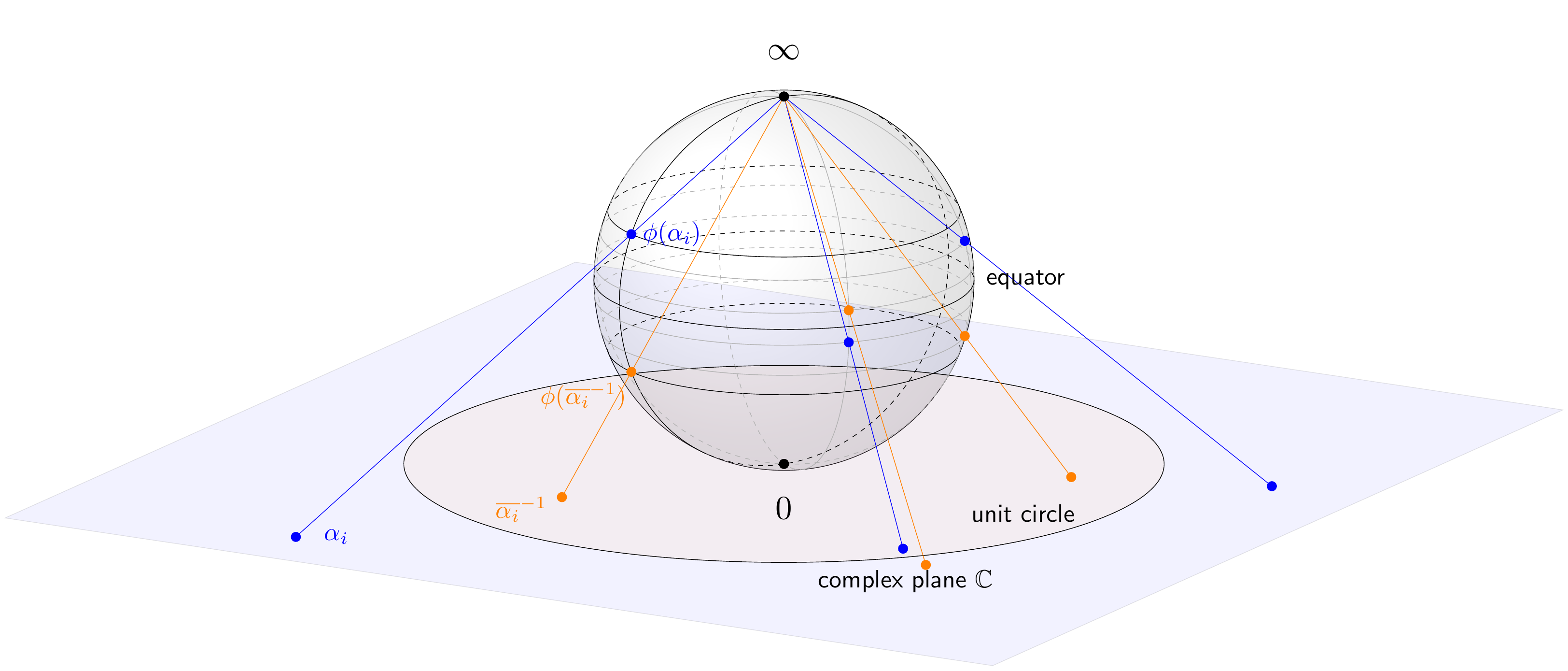}
  \caption{Complex plane and the Riemann sphere (the preimage under the stereographic projection). The conjugate inversion corresponds to reflection with respect to the equator on the Riemann sphere. Here $\phi: \bbC \to \calS^2$ denote the inverse stereographic mapping onto the sphere $\calS^2$.}\label{fig:RiemannSphere}
\end{figure}

Finally, we need to be careful when speaking about multiplication and divisors in such polynomial spaces, see also \ref{app:polynomials} for more details.
The multiplication of polynomials $A \in \bbC_{\le D}[z]$ and $B \in \bbC_{\le D'}[z]$ is the polynomial $C(z) = A(z)B(z)\in\bbC_{\le(D+ D')}[z]$. Conversely, the polynomial $C(z)$ has a divisor $A(z)$ if it can be represented as $C(z) = A(z)B(z)$ in this sense.
Similarly to the standard case, the greatest common divisor exists and is unique (up to a multiplicative constant) if at least one polynomial is non-zero.
Note that divisibility takes into account the roots at $\infty$ and their multiplicities.
\begin{example}
The polynomial $0 \cdot z^3 + 0 \cdot z^2 + z + 2 = (z-\infty)^2 (z+2)  \in \bbC_{\le 3}[z]$ is a divisor of the polynomial $A(z)$ in from \Cref{ex:simple_poly}, but the polynomial $0 \cdot z^4 + 0 \cdot z^3 + 0 \cdot z^2 + z + 2 = (z-\infty)^3(z+2) \in \bbC_{\le 4}[z]$ is not,
because there are not enough infinite roots in the expansion of $A(z)$.
\end{example}

\subsection{Phase retrieval as a polynomial factorization problem}
\label{sub:PAFdefinitionEquivalence}
In this subsection, we are going to provide a polynomial reformulation of \ref{prob:BPR}.
First, we define the following four polynomials as generating polynomials of the components of the bivariate signal $\bfx[n] = (x_1[n], x_2[n])^\transp \in \bbC^2$, $n=0, 1, \ldots N-1$ and their conjugate reflections, all belonging to $\bbC_{\le N-1}[z]$
\begin{align*}
   & X_1(z) =  \sum_{n=0}^{N-1}x_1[n]z^{n}, \quad \widetilde{X}_2(z)  = \sum_{n=0}^{N-1}\overline{x_1}[N-n+1]z^{n}, \\
   & X_2(z)  =  \sum_{n=0}^{N-1}x_2[n]z^{n}, \quad
  \widetilde{X}_2(z)  = \sum_{n=0}^{N-1}\overline{x_2}[N-n+1]z^{n}.                                                 \\
\end{align*}
Then we define the following matrix polynomial
\begin{equation}\label{eq:Gamma_z}
  \boldsymbol{\Gamma}(z) =
  \begin{bmatrix}
    \Gamma_{11}(z) & \Gamma_{12}(z) \\
    \Gamma_{21}(z) & \Gamma_{22}(z)
  \end{bmatrix} =
  \begin{bmatrix}
    X_1(z) \widetilde{X}_1(z) & X_1(z) \widetilde{X}_2(z) \\
    X_2(z)
    \widetilde{X}_1(z)        &
    X_2(z)
    \widetilde{X}_2(z)
  \end{bmatrix}
  =
  \begin{bmatrix}
    X_1(z) \\
    X_2(z)
  \end{bmatrix}
  \begin{bmatrix}
    \widetilde{X}_1(z) &
    \widetilde{X}_2(z)
  \end{bmatrix},
\end{equation}
where each element is a polynomial $\Gamma_{ij} \in \bbC_{\le 2N-2}[z]$, and we will use the notation $\boldsymbol{\Gamma} \in \bbC_{\le 2N-2}^{2 \times 2}[z]$.

The coefficients of these polynomials are nothing but the covariance functions (auto-correlation and cross-correlation) of the signals $X_1$ and $X_2$; in addition, the spectral matrices $\FourierGamma[m]$ appearing in \ref{prob:BPR} are linked to the evaluations of the polynomial $\boldsymbol{\Gamma}(z)$.

\begin{lemma}\label{lem:Fourier2polynomials}
  The coefficients $\Gamma_{ij}$ of the matrix polynomial can be expressed as
  \[
    \Gamma_{ij}(z)=\sum\limits_{n=0}^{2N-2}\gamma_{ij}[n - N+1]z^n  \mbox{ with } \gamma_{ij}[n] = \sum_{k\in \bbZ} x_i[k + n]\overline{x_j[k]},
  \]
  where $x_i[n] = 0$ for $n < 0$ and $n \ge N$ by convention, and $\gamma_{ij}[n]$ are defined for $n \in \{-N+1, \ldots, N-1\}$.
  Moreover, the spectral matrices $\FourierGamma[m]$ appearing in \ref{prob:BPR} can be expressed as
  \begin{equation}\label{eq:Gamma_evaluations}
    \FourierGamma[m] = e^{\bmj 2\pi \frac{m(N-1)}{M}}\boldsymbol{\Gamma}(e^{-\bmj 2\pi \frac{m}{M}})\:.
  \end{equation}
\end{lemma}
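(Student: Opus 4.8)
The plan is to prove the two assertions of the lemma independently: the first is a purely combinatorial identity for the coefficients of the product polynomials $\Gamma_{ij} = X_i \widetilde{X}_j$, and the second is an evaluation identity on the unit circle linking $\boldsymbol{\Gamma}(z)$ of \eqref{eq:Gamma_z} to the spectral matrices $\FourierGamma[m]$ of \eqref{eq:Gamma_m}.

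For the coefficient formula, I would expand $\Gamma_{ij}(z) = X_i(z)\widetilde{X}_j(z)$ as a Cauchy product. Using $X_i(z) = \sum_{k=0}^{N-1} x_i[k] z^k$ and, from the conjugate-reflection definition \eqref{eq:conjReverse} applied with degree bound $N-1$, $\widetilde{X}_j(z) = \sum_{l=0}^{N-1}\conj{x_j[N-1-l]}\, z^l$, the coefficient of $z^n$ is $\sum_{k} x_i[k]\,\conj{x_j[N-1-n+k]}$ for $n = 0,\dots,2N-2$. The change of summation index $l = k - (n-N+1)$ then rewrites this as $\sum_l x_i[l+(n-N+1)]\,\conj{x_j[l]}$, which is exactly $\gamma_{ij}[n-N+1]$. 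The zero-padding convention $x_i[n]=0$ outside $\{0,\dots,N-1\}$ is what lets me freely extend the finite sums to sums over $\bbZ$ and keeps the boundary terms consistent; this yields the stated coefficient expression.

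For the evaluation identity, I would set $\omega_m \triangleq e^{-\bmj 2\pi m/M}$ and exploit the fact that on the unit circle conjugate reflection collapses to ordinary conjugation up to a scalar. First, comparing with \eqref{eq:defDFT}, $X_i(\omega_m) = \sum_{n=0}^{N-1}x_i[n]\,\omega_m^n = \fourier{X}_i[m]$. Next, since $|\omega_m|=1$ we have $\conj{\omega_m}^{-1} = \omega_m$, so \eqref{eq:conjReverse} gives $\widetilde{X}_j(\omega_m) = \omega_m^{N-1}\,\conj{X_j(\conj{\omega_m}^{-1})} = \omega_m^{N-1}\,\conj{\fourier{X}_j[m]}$. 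Multiplying the two factors yields $\Gamma_{ij}(\omega_m) = \omega_m^{N-1}\,\fourier{X}_i[m]\,\conj{\fourier{X}_j[m]}$, that is $\boldsymbol{\Gamma}(\omega_m) = \omega_m^{N-1}\,\fourierv{X}[m]\fourierv{X}[m]^\herm = \omega_m^{N-1}\FourierGamma[m]$ by \eqref{eq:Gamma_m}. Since $\omega_m^{N-1} = e^{-\bmj 2\pi m(N-1)/M}$, solving for $\FourierGamma[m]$ produces \eqref{eq:Gamma_evaluations}.

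Neither part poses a deep obstacle; both are elementary once the conventions are pinned down. The step demanding the most care is the index bookkeeping in the first part, namely correctly identifying the Cauchy-product coefficient with the correlation $\gamma_{ij}$ and verifying that the summation ranges agree under the zero-padding convention. In the second part the only subtlety is invoking $\conj{\omega_m}^{-1} = \omega_m$ for unit-modulus arguments, so that the formal conjugate reflection \eqref{eq:conjReverse} coincides with genuine conjugation of the evaluated DFT; once this is observed, the rank-one structure of $\FourierGamma[m]$ follows immediately from the factored form of $\boldsymbol{\Gamma}(z)$ in \eqref{eq:Gamma_z}.
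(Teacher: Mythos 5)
Your proposal is correct and follows essentially the same route as the paper's proof: the coefficient formula via the Cauchy product (which the paper dispatches by citing the polynomial-multiplication--convolution correspondence in \eqref{eq:polyMultMatrixProduct}), and the evaluation identity via $X_i(e^{-\bmj 2\pi m/M}) = \fourier{X}_i[m]$ together with $\conj{\fourier{X}_j[m]} = e^{\bmj 2\pi m(N-1)/M}\,\widetilde{X}_j(e^{-\bmj 2\pi m/M})$, which is exactly the paper's computation. Your explicit index bookkeeping in the first part and the observation $\conj{\omega_m}^{-1} = \omega_m$ on the unit circle correctly pin down the details the paper leaves terse.
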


This result is well-known and follows from the spectral (Fourier) representation of the signals, but we give a formal proof in  \ref{app:Fourier2polynomials}, because we consider finite support signals and extended polynomials.

Therefore, we will refer to $\Gamma_{ij}(z)$ as \emph{measurement polynomials}.
Note that the coefficients of the measurement polynomials  $\Gamma_{ij}(z)$ can be uniquely identified from the measurement polynomials from the Fourier measurements $\lbrace\FourierGamma[m]\rbrace_{m=0, \ldots, M-1}$, if the number of measurements $M$  exceeds the degree of these polynomials, \ie $2N-2$, by at least one:
\begin{align}
  M & \geq 2N-1
\end{align}
which is the well-known oversampling condition in standard univariate Fourier phase retrieval, see e.g. \cite{boche_fourier_2017}.

Therefore, the problem \ref{prob:BPR} is equivalent to the following recovery problem, which we refer to as Polynomial Autocorrelation Factorization \eqref{prob:PAF} as to emphasize that we factorize the autocovariance measurements in the polynomial form.
The polynomial reformulation of the problem is very helpful for establishing the uniqueness conditions for \ref{prob:BPR}.
Notably it enables a complete characterization of its uniqueness properties in terms of algebraic properties of complex polynomials.
\begin{theorem}\label{thm:BPR_eq_PAF}
  For $M \ge 2N-1$, \ref{prob:BPR} is equivalent to the following problem
  \begin{equation}
    \begin{split}
      \text{\emph{find  the polynomials }}  X_{1}, X_{2} \in \bbC_{\le N-1}[z] \text{ \emph{given measurement polynomials} } \Gamma_{ij}(z) \text{ defined as \eqref{eq:Gamma_z} }
    \end{split}
    \tag{\sf{PAF}} \label{prob:PAF},
  \end{equation}
  \ie there is a one-to-one correspondence  between the data ($\boldsymbol{\Gamma}(z)$ and $\FourierGamma[m]$) as well as the sets of solutions of the problems (polynomials $X_1,X_2$ and bivariate signal components $x_1[n], x_2[n]$).
\end{theorem}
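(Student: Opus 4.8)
The plan is to prove the theorem in two stages, matching the two assertions in the statement: first the one-to-one correspondence between the data objects $\{\FourierGamma[m]\}_{m=0}^{M-1}$ and $\boldsymbol{\Gamma}(z)$, and then the identification of the two solution sets. The whole argument rests on \Cref{lem:Fourier2polynomials}, which is assumed available, together with the elementary observation that the $M$-th roots of unity are distinct sampling nodes on the unit circle.

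\textbf{Data correspondence.} The forward map $\boldsymbol{\Gamma}(z) \mapsto \{\FourierGamma[m]\}$ is supplied directly by the evaluation formula \eqref{eq:Gamma_evaluations}: each spectral matrix is obtained by evaluating the matrix polynomial at $e^{-\bmj 2\pi m/M}$ and rescaling. For the inverse map I would argue entrywise by polynomial interpolation. By \Cref{lem:Fourier2polynomials} each entry $\Gamma_{ij}$ lies in $\bbC_{\le 2N-2}[z]$, hence is specified by the $2N-1$ coefficients $\gamma_{ij}[-N+1], \ldots, \gamma_{ij}[N-1]$, and the values $\Gamma_{ij}(e^{-\bmj 2\pi m/M})$ for $m = 0, \ldots, M-1$ are samples at $M$ distinct points of the unit circle. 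The coefficient-to-sample assignment is therefore a Vandermonde (truncated-DFT) linear map, which has full column rank precisely when $M \ge 2N-1$; inverting it (e.g.\ via an inverse DFT followed by extraction of the relevant band of indices) recovers the correlation sequences $\gamma_{ij}[n]$, and hence $\boldsymbol{\Gamma}(z)$, uniquely from the $\FourierGamma[m]$. This gives the bijection at the level of data.

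\textbf{Solution correspondence.} The assignment $x_i[n] \mapsto X_i(z) = \sum_{n} x_i[n] z^n$ is a linear bijection between length-$N$ signals and $\bbC_{\le N-1}[z]$, and it sends the Fourier transform $\fourierv{X}[m]$ to the evaluations of $(X_1, X_2)$ at $e^{-\bmj 2\pi m/M}$. Under the data bijection just established, the definition \eqref{eq:Gamma_z} of $\boldsymbol{\Gamma}(z)$ as the rank-one outer product of $(X_1,X_2)$ and $(\widetilde{X}_1,\widetilde{X}_2)$ matches, via \Cref{lem:Fourier2polynomials}, the definition \eqref{eq:Gamma_m} of $\FourierGamma[m]$ as $\fourierv{X}[m]\fourierv{X}[m]^\herm$. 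Consequently $(x_1,x_2)$ produces the \ref{prob:BPR} data $\{\FourierGamma[m]\}$ if and only if the associated $(X_1,X_2)$ produces the \ref{prob:PAF} data $\boldsymbol{\Gamma}(z)$. Chaining the data bijection with this identification shows that $\bfX$ solves \ref{prob:BPR} exactly when its corresponding polynomials solve \ref{prob:PAF}, which is the asserted equivalence.

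\textbf{Main obstacle.} Once \Cref{lem:Fourier2polynomials} is in hand, most of the argument is bookkeeping; the single load-bearing step is the invertibility of the sampling map, i.e.\ confirming that $M \ge 2N-1$ is exactly the oversampling threshold making a degree-$(2N-2)$ polynomial recoverable from its values on the $M$-th roots of unity. The delicate point I would flag is consistency with the extended-polynomial conventions of \Cref{sub:polynomials}: I would work throughout with the fixed coefficient vector of length $2N-1$ attached to $\bbC_{\le 2N-2}[z]$ rather than with the actual degree, so that a vanishing leading coefficient (a root at $\infty$) creates no ambiguity. Since interpolation is carried out at finite, nonzero nodes on the unit circle, the full coefficient vector is recovered regardless of any roots at $\infty$, keeping the correspondence genuinely one-to-one.
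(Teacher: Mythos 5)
Your proof is correct and takes essentially the same route as the paper's: both arguments combine the linear coefficient--polynomial isomorphism (giving the solution correspondence) with \Cref{lem:Fourier2polynomials} and the unique recoverability of a polynomial in $\bbC_{\le 2N-2}[z]$ from its evaluations at the $M \ge 2N-1$ distinct roots of unity (giving the data bijection). Your explicit handling of the extended-polynomial convention---working with the fixed length-$(2N-1)$ coefficient vector so that roots at $\infty$ cause no ambiguity---is a worthwhile clarification but does not alter the argument.
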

The proof can be summarized in Figure~\ref{fig:equivalenceBPRandPAF}, and the formal proof can be found in \ref{app:Fourier2polynomials}.

\begin{figure}[hbt!]
  \begin{center}
    \begin{tikzpicture}
      \matrix (m)
      [
        matrix of math nodes,
        row sep    = 5em,
        column sep = 8em
      ]
      {
        \begin{array}{c}
          \text{signals} \\
          \bfx_1,\bfx_2 \in \bbC^{N}
        \end{array} &
        \begin{array}{c}
          \text{Fourier measurements, see \eqref{eq:Gamma_m}} \\
          \{\FourierGamma[m] \}^{M-1}_{m=0} \in (\bbC^{2\times 2})^{M}
        \end{array}   \\
        \begin{array}{c}
          \text{polynomials} \\
          X_1(z),X_2(z) \in \bbC_{\le N-1}[z]
        \end{array} &
        \begin{array}{c}
          \text{measurement polynomials, see \eqref{eq:Gamma_z}} \\
          \boldsymbol{\Gamma} (z) \in \bbC^{2\times 2}_{\le 2N-2}[z]
        \end{array}
        \\
      };
      \path
      (m-1-1) edge [<->] node [left] {\text{one-to-one}} (m-2-1)
      (m-1-2) edge [<->] node [right] {$\begin{array}{c}\text{one-to-one} \\ \text{if } M \ge 2N-1  \end{array}$} (m-2-2)
      (m-1-1.east |- m-1-2)
      edge [|->] node [above] {\ } (m-1-2)
      (m-2-1.east |- m-2-2)
      edge [|->] node [above] {\ } (m-2-2);
    \end{tikzpicture}
  \end{center}
  \caption{Equivalences of data and solutions in problems \ref{prob:BPR} and \ref{prob:PAF}.}
  \label{fig:equivalenceBPRandPAF}
\end{figure}
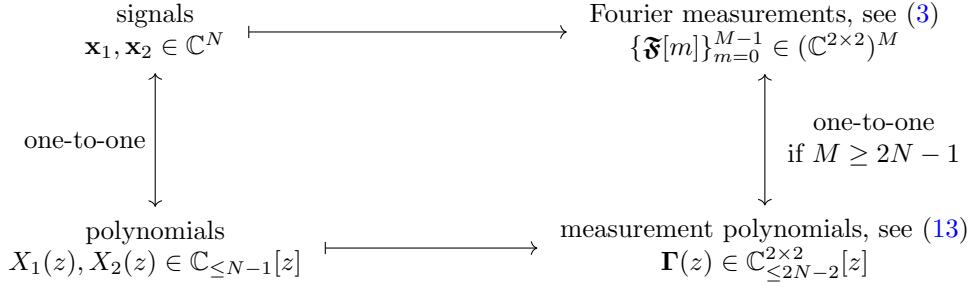

\subsection{General uniqueness result}
\label{sub:uniquenessGeneral}

We now derive a full characterization of the uniqueness properties of the polynomial factorization problem \ref{prob:PAF}.
It is important to keep in mind that by the set of equivalences given in Figure~\ref{fig:equivalenceBPRandPAF}, these results also provide a complete characterization of uniqueness properties of \ref{prob:PPR} and \ref{prob:BPR}.
The following fundamental lemma establishes that the uniqueness properties of the \ref{prob:PAF} problem essentialy boil down to a specific spectral factorization problem.

\begin{lemma}\label{lemma:PAFspectralFactorisation}
 Let $Q(z) \triangleq \gcd(X_1, X_2)$  where $Q \in \bbC_{\le D}[z]$  and $R_1, R_2 \in \bbC_{\le N-D-1}[z]$  be the corresponding quotients, \ie $X_1(z) = Q(z)R_1(z)$ and $X_2(z) = Q(z)R_2(z)$ with $\gcd(R_1,R_2) = 1$.
Then we have that
\begin{enumerate}
\item[1)] the GCD of the autocovariance polynomials $H(z) \triangleq \gcd(\Gamma_{11}, \Gamma_{12}, \Gamma_{21}, \Gamma_{22})$ must have the form
\begin{equation}\label{eq:H_and_Q}
H(z)  = c Q(z)\tilde{Q}(z), \quad c\neq 0;
\end{equation}

\item[2)] given the  quotients $R_{ij}(z)$ (\ie $\Gamma_{ij}(z) = H(z) R_{ij}(z)$, $\gcd(R_{11}, R_{12}, R_{21}, R_{22}) = 1$), the quotients of $X_1$, $X_2$ are determined up to a multiplicative constant as
  \begin{equation}\label{eq:determinationR1R2_H}
    R_1(z)  = \gcd\lbrace  R_{11}(z),  R_{12}(z) \rbrace,\qquad
    R_2(z)  = \gcd\lbrace  R_{21}(z),  R_{22}(z) \rbrace.
  \end{equation}
\end{enumerate}
\end{lemma}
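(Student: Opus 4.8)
The crux is that conjugate reflection is multiplicative: for $A \in \bbC_{\le D}[z]$ and $B \in \bbC_{\le D'}[z]$ one has $\widetilde{AB} = \widetilde{A}\,\widetilde{B}$ in $\bbC_{\le D+D'}[z]$, which is immediate from \eqref{eq:conjReverse} since $z^{D+D'}\overline{A(\overline z^{-1})B(\overline z^{-1})} = \widetilde{A}(z)\widetilde{B}(z)$. Applying this to $X_1 = QR_1$ and $X_2 = QR_2$ gives $\widetilde{X}_1 = \widetilde{Q}\,\widetilde{R}_1$ and $\widetilde{X}_2 = \widetilde{Q}\,\widetilde{R}_2$, and hence from \eqref{eq:Gamma_z} the uniform factorization
\[
  \Gamma_{ij}(z) = X_i(z)\widetilde{X}_j(z) = \big(Q(z)\widetilde{Q}(z)\big)\,\big(R_i(z)\widetilde{R}_j(z)\big), \qquad i,j \in \{1,2\}.
\]
I would then record two auxiliary facts: (i) GCD distributes over a common factor, $\gcd(AC,BC)=C\gcd(A,B)$ up to a constant; and (ii) conjugate reflection commutes with GCD up to a constant, so that $\gcd(R_1,R_2)=1$ forces $\gcd(\widetilde{R}_1,\widetilde{R}_2)=1$. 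Fact (ii) is clearest at the level of roots on the Riemann sphere: reflection acts as the involution $\alpha \mapsto \overline{\alpha^{-1}}$ (with $0 \leftrightarrow \infty$), a bijection on $\bbC \cup \{\infty\}$ preserving multiplicities, so it carries the empty common-root multiset of $R_1,R_2$ to that of $\widetilde{R}_1,\widetilde{R}_2$.

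For part 1), I would factor $Q\widetilde{Q}$ out of all four entries using fact (i) and the associativity of GCD, obtaining $H = Q\widetilde{Q}\cdot\gcd(R_1\widetilde{R}_1, R_1\widetilde{R}_2, R_2\widetilde{R}_1, R_2\widetilde{R}_2)$. It then remains to show the residual GCD is a nonzero constant. Grouping the four terms by the $X$-quotient and applying fact (i) twice,
\[
  \gcd(R_1\widetilde{R}_1, R_1\widetilde{R}_2) = R_1\gcd(\widetilde{R}_1, \widetilde{R}_2) = R_1, \qquad \gcd(R_2\widetilde{R}_1, R_2\widetilde{R}_2) = R_2,
\]
up to constants, using fact (ii); hence the residual GCD equals $\gcd(R_1,R_2)=1$. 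This yields $H = c\,Q\widetilde{Q}$ with $c \neq 0$, which is \eqref{eq:H_and_Q}. For part 2), dividing $\Gamma_{ij} = Q\widetilde{Q}R_i\widetilde{R}_j$ by $H = cQ\widetilde{Q}$ and cancelling the nonzero factor $Q\widetilde{Q}$ identifies the quotients explicitly as $R_{ij} = c^{-1}R_i\widetilde{R}_j$ (consistent with the normalization $\gcd(R_{11},R_{12},R_{21},R_{22})=1$, which is exactly the residual GCD just computed). Then, by the same grouping,
\[
  \gcd(R_{11}, R_{12}) = \gcd(R_1\widetilde{R}_1, R_1\widetilde{R}_2) = R_1, \qquad \gcd(R_{21}, R_{22}) = R_2,
\]
up to a multiplicative constant, which is \eqref{eq:determinationR1R2_H}.

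The main obstacle is not the algebraic skeleton above but carrying it out rigorously in the extended polynomial setting of Section~\ref{sub:polynomials}, where every polynomial of $\bbC_{\le D}[z]$ of actual degree $d$ carries $D-d$ roots at $\infty$. I must check that reflection multiplicativity, the GCD distributivity and associativity of facts (i)--(ii), and the cancellation step all respect the multiplicities at $\infty$ --- that is, that the degree bookkeeping ($Q$ of degree $D$, each $R_i$ of degree at most $N-1-D$, each $\Gamma_{ij}$ of degree at most $2N-2$, and $Q\widetilde{Q}$ of degree at most $2D$) is consistent, and that ``$\gcd = 1$'' genuinely means no shared root anywhere on the sphere, including at $\infty$. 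These are precisely the structural properties developed in \ref{app:polynomials}, so I would invoke them rather than re-derive the ring structure here.
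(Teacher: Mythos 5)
Your proposal is correct and follows essentially the same route as the paper's proof: both rest on the factorization $\Gamma_{ij}(z) = R_i(z)\widetilde{R}_j(z)Q(z)\widetilde{Q}(z)$, then apply associativity of the GCD together with $\gcd(R_1,R_2) = \gcd(\widetilde{R}_1,\widetilde{R}_2) = 1$ to extract $H = c\,Q\widetilde{Q}$ and to identify $R_1, R_2$ from the quotients $R_{ij} = c^{-1}R_i\widetilde{R}_j$. The only difference is one of exposition, not substance: you make explicit (via the Riemann-sphere involution and the appendix's extended-polynomial bookkeeping) the auxiliary facts that reflection is multiplicative and preserves coprimality, which the paper's proof invokes implicitly.
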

\begin{proof}
Start by 1).
Direct calculations show that, for $i,j = 1, 2$,  $\Gamma_{ij}(z)  = X_i(z)\tilde{X}_j(z)= R_i(z)\tilde{R}_j(z)Q(z)\tilde{Q}(z)$.
Then the GCD of autocovariance polynomials can be explicitly computed as
  \begin{align*}
    H(z) & = \gcd(\Gamma_{11}, \Gamma_{12}, \Gamma_{21}, \Gamma_{22})\\
     & = \gcd\left(\gcd\left(\Gamma_{11}, \Gamma_{12}\right),\gcd\left(\Gamma_{21}, \Gamma_{22}\right)\right) &                                                   \\
                                                             & = \gcd\left(R_1(z)Q(z)\tilde{Q}(z), R_2(z)Q(z)\tilde{Q}(z)\right)                                      & \text{ since } \gcd(\tilde{R_1}, \tilde{R}_2) = 1 \\
                                                             & = cQ(z)\tilde{Q}(z)                                                                                     & \text{ since } \gcd({R_1}, {R}_2) = 1
  \end{align*}
Proof of 2).  From the previous point, we have that $R_{ij}(z) =  c^{-1} R_i(z)\tilde{R}_j(z)$. The determination \eqref{eq:determinationR1R2_H} of $R_1$ and $R_2$ is  then straightforward using that $\gcd({R_1}, {R}_2) =  \gcd(\tilde{R_1}, \tilde{R}_2) = 1$ by assumption.

\end{proof}
\cref{lemma:PAFspectralFactorisation} shows that the study of the uniqueness properties of \ref{prob:PAF} is directly related to uniqueness of the spectral factorization \eqref{eq:H_and_Q}, \ie the recovery of $Q(z)$ given $H(z)$.
 Indeed, if $Q(z)$ can be uniquely recovered from $H(z)$, then the polynomials $X_1(z),X_2(z)$ can be found by mutiplying $Q(z)$ with the quotients $R_1(z)$ and $R_2(z)$ obtained in \eqref{eq:determinationR1R2_H}.
Before giving the sufficient and necessary uniqueness condition, we make a remark about the roots of the product $Q(z)\widetilde{Q}(z)$ which are key to understanding uniqueness.

\begin{remark}\label{rem:roots_H}
Let $Q(z) = \lambda \prod_{i=1}^D (z-\alpha_i)$ (with possibly repeating $\alpha_i$).
Then $H(z) = c Q(z)\tilde{Q}(z)$ has the following factorization
\begin{equation}
H(z) =c\lambda \widetilde{\lambda} \prod_{i=1}^D (z-\alpha_i) (z- \overline{\alpha_i^{-1}}).
\end{equation}
Furthermore, if $|\alpha|=1$, then $\alpha = \overline{\alpha^{-1}}$.
Therefore, a unit-modulus $\alpha$ is a root of $Q(z)$ of multiplicity $\mu$ if and only if it is a root of $H(z)$ of multiplicity $2\mu$.
\end{remark}
From \Cref{rem:roots_H}, we see that the unit-modulus roots of $H(z)$ do not contribute to ambiguity of \ref{prob:PAF}. Indeed,  all unit-modulus roots of $Q(z)$ can be uniquely retrieved from $H(z)$.
This helps us  to establish a necessary and sufficient condition for uniqueness.

\begin{theorem}[Uniqueness of \ref{prob:PAF}]\label{corr:uniquenessPAF}
  The following equivalences are true:
  \begin{align}
    \text{\ref{prob:PAF} admits a unique solution} & \Leftrightarrow X_1(z) \text{ and } X_2(z) \text{ have no common roots outside the unit circle.}                        \\
                                                   & \Leftrightarrow H(z) =  \gcd(\Gamma_{11}, \Gamma_{12}, \Gamma_{21}, \Gamma_{22}) \text{ has no roots outside the unit circle}.
  \end{align}
\end{theorem}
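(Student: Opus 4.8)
The plan is to reduce the whole statement to the uniqueness of the spectral factorization $H(z) = cQ(z)\widetilde{Q}(z)$ isolated in Lemma~\ref{lemma:PAFspectralFactorisation}, and then to read uniqueness directly off the root structure of $H$ described in Remark~\ref{rem:roots_H}. Throughout, ``unique solution'' is understood modulo the global-phase ambiguity $(X_1,X_2)\mapsto(e^{\bmj\alpha}X_1,e^{\bmj\alpha}X_2)$, which leaves every $\Gamma_{ij}=X_i\widetilde{X}_j$ unchanged.

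First I would show that, for fixed measurement data, the only residual freedom in reconstructing $(X_1,X_2)$ sits in the factor $Q=\gcd(X_1,X_2)$. By Lemma~\ref{lemma:PAFspectralFactorisation}, both $H$ and the coprime quotients $R_1,R_2$ are computable from the $\Gamma_{ij}$, the latter via \eqref{eq:determinationR1R2_H} (up to a common constant). Hence every solution has the form $X_1=QR_1$, $X_2=QR_2$ where $Q$ ranges over the polynomials satisfying \eqref{eq:H_and_Q}, i.e. the spectral factors of $H$. The crucial observation is that the data cannot distinguish these factors: for every such $Q$ one has $Q\widetilde{Q}=c^{-1}H$ up to a scalar, so $X_i\widetilde{X}_j=(Q\widetilde{Q})(R_i\widetilde{R}_j)$ is, up to a constant fixed by the data, independent of the particular choice of $Q$. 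Therefore the solution set of \ref{prob:PAF} is in bijection (modulo global phase) with the set of spectral factors $Q$ of $H$, and uniqueness of \ref{prob:PAF} is equivalent to uniqueness of $Q$ in \eqref{eq:H_and_Q}.

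Next I would characterize the latter through the involution $\beta\mapsto\overline{\beta^{-1}}$ on the Riemann sphere. By Remark~\ref{rem:roots_H} the roots of $H$ are symmetric under this involution, with $\beta$ and $\overline{\beta^{-1}}$ carrying equal multiplicity $\mu_\beta$ in $H$, and for any admissible factorization $\mu_\beta$ equals the multiplicity of $\beta$ in $Q$ plus that of $\overline{\beta^{-1}}$ in $Q$. A unit-modulus root is a fixed point of the involution, hence has even multiplicity $2\mu$ in $H$ and is forced into $Q$ with multiplicity exactly $\mu$: no choice is possible. By contrast, each off-circle pair $\{\beta,\overline{\beta^{-1}}\}$ with $|\beta|\neq 1$ (including the pair $\{0,\infty\}$ in the extended framework of Section~\ref{sub:polynomials}) may be split between $Q$ and $\widetilde{Q}$ in $\mu_\beta+1\ge 2$ distinct ways, each yielding a genuinely different factor $Q$ of the same degree. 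Thus $Q$ is unique if and only if $H$ has no roots off the unit circle, which is precisely the second equivalence (here ``outside the unit circle'' is read as ``of modulus $\neq 1$'', and by the reflection symmetry of $H$ this is the same as having no root of modulus $>1$).

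Finally I would transfer this conclusion to $X_1,X_2$. Since $H=cQ\widetilde{Q}$ with $Q=\gcd(X_1,X_2)$, and since conjugate reflection sends a root of modulus $\neq 1$ to one of modulus $\neq 1$, the polynomial $H$ has an off-circle root exactly when $Q$ does, i.e. exactly when $X_1$ and $X_2$ share a common root off the unit circle; combined with the previous step this gives the first equivalence. The main obstacle I anticipate is the bookkeeping in the middle step: one must verify that every admissible split of an off-circle pair produces a distinct, degree-admissible solution in $\bbC_{\le N-1}[z]$, and handle the roots at $0$ and $\infty$ correctly within the extended-polynomial calculus of Section~\ref{sub:polynomials}, where $0$ and $\infty$ are mutually reflected and both count as lying off the unit circle.
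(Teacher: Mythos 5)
Your proposal is correct and takes essentially the same route as the paper: both reduce the statement via Lemma~\ref{lemma:PAFspectralFactorisation} to uniqueness of the spectral factor $Q$ in $H = cQ(z)\widetilde{Q}(z)$, and both settle that question through the root-reflection involution (off-circle root pairs can be split in at least two non-proportional ways, exactly as in the paper's explicit construction $Q'(z) = Q(z)(z-\overline{\alpha^{-1}})/(z-\alpha)$, while unit-modulus roots are forced into $Q$ with half their multiplicity in $H$). The only difference is organizational: you prove the second equivalence first and then transfer it to the common roots of $X_1, X_2$, whereas the paper proves the first equivalence directly and treats the second as immediate from the lemma.
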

\begin{proof}
The last equivalence being trivial by \Cref{lemma:PAFspectralFactorisation}, we prove only the first equivalence.
\begin{itemize}
\item  $\boxed{\Rightarrow}$ Suppose that the solution of \ref{prob:PAF} is essentially unique, but the polynomial $Q(z)$ has a root $\alpha$ outside the unit circle.
Then easy calculations show that polynomial $Q'(z) = \frac{Q(z)(z-\overline{\alpha^{-1}})}{(z-\alpha)}$ satisfies
\[
Q'(z)\widetilde{Q'}(z) = Q(z) \widetilde{Q}(z).
\]
On the other hand $Q'(z)$ is not proportional to $Q(z)$, and therefore polynomials $X'_1(z) \triangleq Q'(z) R_1(z)$, $X'_2(z) \triangleq Q'(z) R_2(z)$ are not proportional to $X_1(z)$ and $X_2(z)$, but give an alternative pair of polynomials such that $\Gamma'_{ij}(z) = \Gamma_{ij}(z)$ (a contradiction).

\item $\boxed{\Leftarrow}$ Assume that $H(z)$ has only unit-modulus roots. Then there is a unique monic polynomial $Q(z)$ such that $H(z) = c Q(z)\widetilde{Q}(z) = c(Q(z))^2$.
Therefore, by  \cref{lemma:PAFspectralFactorisation}, we can find polynomials $R'_1(z)$ and $R'_{2}(z)$ such that
\[
X_1(z) = c_1 R'_1(z) Q(z), \quad X_2 = c_2 R'_2(z) Q(z), \quad c_1,c_2 \in \mathbb{C}.
\]
The pair $(c_1,c_2)$ can be determined from the relations \eqref{eq:Gamma_z} up to a common unit-modulus factor accounting for the global rotation ambiguity, see Section \ref{sub:trivialAmbiguities}.  \qedhere
\end{itemize}
\end{proof}

\begin{remark}
  Note that the uniqueness condition given in \cref{corr:uniquenessPAF} clarifies previous statements made in the literature \cite{raz_vectorial_2013,jaganathan_reconstruction_2019}.
  In particular, in \cite[Theorem 1]{raz_vectorial_2013} it was claimed that a necessary and sufficient for uniqueness of the solution of \ref{prob:BPR}  is the coprimeness of the polynomials $X_1(z)$ and $X_2(z)$.
Our  \cref{corr:uniquenessPAF} shows that it was just a sufficient condition, because unimodular roots do not affect uniqueness.
This agrees with a similar behavior observed for standard univariate 1D phase retrieval, see \cite{beinert2015ambiguities}.
\end{remark}

\begin{remark}[Almost everywhere uniqueness of \ref{prob:PAF}]
  Whereas the analysis of non-uniqueness properties of \ref{prob:PAF} follows closely that of the standard phase retrieval problem, a distinctive feature of the bivariate/polarimetric phase retrieval problem is that it is almost everywhere unique.
  This can be seen by observing that the set of polynomials $X_1, X_2 \in \bbC_{\le N-1}[z]$ with at least one common root is  an algebraic variety of dimension smaller than $2N-1$; , hence it is of measure zero.
  Put it differently, this shows that \ref{prob:PAF} has the appealing property that almost all polynomials $X_1, X_2  \in \bbC_{\le N-1}[z]$ can be uniquely recovered by the set of measurement polynomials $\Gamma_{ij}(z)$.
\end{remark}

\subsection{Ambiguities and counting the number of solutions}

In this section, we refine \cref{corr:uniquenessPAF} by providing the number of solutions and describing the set of solution of  \ref{prob:PAF}.
Note that, that \cref{lemma:PAFspectralFactorisation} implies that the uniqueness properties of \ref{prob:PAF} resumes in essence to that of a standard univariate phase retrieval problem (taking into account specificites related to the bivariate/polarimetric phase retrieval setting).
Indeed the uniqueness of the univariate phase retrieval is determined by the uniqueness \cite{beinert2015ambiguities,boche_fourier_2017} of the factorization $H(z) = Q(z)\tilde{Q}(z)$.
In principle, we could invoke the existing results from  \cite{beinert2015ambiguities}, however, we prefer to give a complete characterization that relies upon the formalism with $0$ and $\infty$ roots used in this paper.
In particular, this formalism allows us to treat the time shift ambiguities in a unified manner.

\begin{theorem}[Number of solutions of \ref{prob:PAF}]
  \label{theorem:PAFuniqueness}
Let $H(z) = Q(z)\tilde{Q}(z)$ and   $\mu_1, \ldots, \mu_{N_D}$ be the respective multiplicities of the $N_D$ non-unit-modulus roots pairs $(\delta_1, \conj{\delta}_1^{-1}), \ldots, (\delta_{N_D}, \conj{\delta}_{N_D}^{-1})$ of $H(z)$.
  Then the problem \ref{prob:PAF} admits exactly
  \begin{equation}\label{eq:numberSolutionsPAF}
    \prod_{i=1, \vert \delta_i\vert \neq 1}^{N_D}(\mu_i + 1)
  \end{equation}
  different solutions, where only non-unimodular common roots of $X_1$ and $X_2$ contribute to the total number of solutions.
  In particular, when common roots are all simple and outside the unit circle, there is exactly $2^{N_D}$ different solutions.
\end{theorem}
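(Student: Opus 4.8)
The plan is to reduce the counting to the enumeration of the distinct spectral factors $Q$ of $H$, and then to count these factors root-by-root. By \cref{lemma:PAFspectralFactorisation}, the quotients $R_1, R_2$ are pinned down (up to scalars) by the measurement polynomials, and any solution must have the form $X_1 = c_1 Q R_1$, $X_2 = c_2 Q R_2$, where $Q$ ranges over the monic polynomials satisfying $Q(z)\widetilde{Q}(z) = cH(z)$; as noted in the proof of \cref{corr:uniquenessPAF}, the constants $c_1,c_2$ reduce to a single global-phase factor. Hence, modulo the global rotation ambiguity, solutions of \ref{prob:PAF} are in bijection with the monic spectral factors $Q$ of $H$, and it remains only to count the latter.

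First I would record the self-reflective structure of $H$: since $\widetilde{AB} = \widetilde{A}\,\widetilde{B}$ and conjugate reflection is an involution, one gets $\widetilde{H} = \widetilde{Q}\,\widetilde{\widetilde{Q}} = \widetilde{Q}\,Q = H$ up to a constant. Consequently the roots of $H$ are organized into reflective pairs $(\delta, \conj{\delta}^{-1})$ of equal multiplicity, where $0$ and $\infty$ are exchanged by the involution, matching the extended formalism of \cref{sub:polynomials}. For a unit-modulus root one has $\delta = \conj{\delta}^{-1}$, and \cref{rem:roots_H} forces it to appear in $Q$ with exactly half its multiplicity in $H$; such roots therefore admit no freedom and do not contribute to the count.

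Next I would count the choices at each non-unimodular pair. Fix a pair $(\delta_i, \conj{\delta}_i^{-1})$ with $|\delta_i| \ne 1$, each root of multiplicity $\mu_i$ in $H$. Writing the multiplicities of $\delta_i$ and $\conj{\delta}_i^{-1}$ in $Q$ as $a_i$ and $b_i$, the reflection sends these to $b_i$ and $a_i$ in $\widetilde{Q}$, so that $Q\widetilde{Q}$ carries $\delta_i$ and $\conj{\delta}_i^{-1}$ each with multiplicity $a_i + b_i$. Matching this to $\mu_i$ yields the single constraint $a_i + b_i = \mu_i$, which has exactly $\mu_i + 1$ nonnegative integer solutions. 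Since the choices across distinct pairs are independent and each distinct root multiset produces a distinct monic $Q$ (hence, with $R_1, R_2$ fixed, a genuinely distinct pair $(X_1,X_2)$), multiplying over all non-unimodular pairs gives $\prod_i (\mu_i+1)$ solutions; specializing to simple roots ($\mu_i = 1$) gives a factor $2$ per pair and hence $2^{N_D}$.

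The main obstacle — and the step demanding the most care — is verifying that this enumeration is simultaneously complete and free of double counting. On the completeness side I would invoke the converse direction of \cref{lemma:PAFspectralFactorisation} to guarantee that every solution's $\gcd$ is one of the enumerated $Q$'s; on the validity side I must check that each candidate $Q$ has the correct fixed total degree $D$ in the extended sense, so that $X_i = Q R_i \in \bbC_{\le N-1}[z]$ and $\gcd(X_1,X_2) = Q$ (using $\gcd(R_1,R_2)=1$). The subtle point is that the $(0,\infty)$ pair must be handled on exactly the same footing as any other non-unimodular pair, so that the equal-time-shift ambiguities of \cref{sub:trivialAmbiguities} are absorbed into the product rather than treated separately; confirming that conjugate reflection is correspondingly \emph{not} quotiented out — consistent with the count $2^{N_D}$ rather than $2^{N_D-1}$ — then closes the argument.
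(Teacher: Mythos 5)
Your proof is correct and follows essentially the same route as the paper: reduction via \cref{lemma:PAFspectralFactorisation} to counting spectral factors $Q$ of $H$, then a per-pair count of $\mu_i+1$ choices (your enumeration of nonnegative solutions to $a_i+b_i=\mu_i$ is the same count the paper obtains as draws with replacement of $\mu_i$ items from $2$ elements), multiplied over the non-unimodular pairs. Your added checks — self-reflectivity of $H$, the forced half-multiplicity at unit-modulus roots via \cref{rem:roots_H}, completeness and absence of double counting, and the uniform treatment of the $(0,\infty)$ pair — make explicit details the paper leaves implicit, but do not change the argument.
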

\begin{proof}
  Lemma \ref{lemma:PAFspectralFactorisation} shows that the number of solutions of \ref{prob:PAF} is exactly the number of different (up to multiplication by a scalar) polynomials $Q(z)$ such that $H(z) = Q(z)\tilde{Q}(z)$.
  This \emph{spectral factorization} problem is equivalent to selecting the roots of $Q(z)$ amongst the root pairs $(\delta_i, \conj{\delta}_i^{-1})$ of $H(z)$.
Since  $\mu_1, \ldots, \mu_{N_D}$ are the multiplicities of the root pairs of $H(z)$, then for each root pair $(\delta_i, \conj{\delta}_i^{-1})$ one has to select exactly $\mu_i$ roots among those pairs, leading to a polynomial $Q(z)$ of degree $D = \mu_1 + \ldots+ \mu_{N_D}$.

Consider a non-unit root pair $(\delta_i, \conj{\delta}_i^{-1})$ with multiplicity $\mu_i$; then the number of different combinations of $\mu_i$ roots is that of a random draw of $k = \mu_i$ items with replacement in a set of $n = 2$ elements, \ie $(n+k-1)!/(k!(n-1)!) = \mu_i +1$.
  Repeating the same process for each root pair gives the total number \eqref{eq:numberSolutionsPAF} of solutions to the \ref{prob:PAF} problem.
\end{proof}

\begin{remark}[Counting multiplicities]
The number of solutions \eqref{eq:numberSolutionsPAF} depends in fact on the multiplicities of pairs of roots of $Q(z)$.
This means in particular that if $\delta$ and $\conj{\delta}^{-1}$ are roots of $Q(z)$ with multiplicities $\mu_1$ and $\mu_2$, then the multiplicity of the pair $(\delta, \conj{\delta}^{-1})$ of $H(z)$ is equal to  $\mu_1 + \mu_2$.
  The same  applies  to $0$ and $\infty$ roots.
\end{remark}

\begin{example}
Consider $Q(z) =A(z)$ that is the polynomial from \cref{ex:simple_poly} having double $\infty$ root and simple roots $\{-2,1,0\}$. Then the polynomial $H(z) = A(z) \tilde{A}(z)$, where is
\[
H(z) = Q(z) \tilde{Q}(z) = -\frac{1}{2} (z - \infty)^{3} \left(z+\frac{1}{2}\right)(z+2) (z-1)^2 z^3.
\]
The multiplicity of the root pair $(0,\infty)$ is $3$, while the root pair $(-2,-\frac{1}{2})$ have multiplicity $1$.
This yields a total of $4\cdot 2= 8$ solutions, where the other factorizations are given by permuting $0$ and $\infty$ roots or/and replacing root $-2$ with $-\frac{1}{2}$,
For example, some of possible alternative factorisations are given by $Q(z) = \frac{1}{2} (z+2) (z-1) z^3$ or $Q(z) = (z+\frac{1}{2})(z-1)z^3$.
\end{example}

\begin{remark}[time shifts and common roots at $0$ or $\infty$] One striking benefit of the use of polynomials with roots at the infinity is that it permits to recover the trivial shift ambiguity as part of the set of solutions to \ref{prob:PAF}.
  To illustrate this, let us consider the particular example of a bivariate signal $\lbrace \bfx[n]\rbrace_{n=0, \ldots N-1}$ where $\bfx[0] = \bfx[N-1] = \mathbf{0}$. We further assume that the polynomials associated to the sub-signal $\lbrace \bfx[n]\rbrace_{n=1, \ldots N-2}$ have no roots in common.
  Thus, the polynomials $X_1(z)$ and $X_2(z)$ share only two roots: $0$ and $\infty$, which turn out to be "conjugate reflected" from one another.
  The polynomial $H(z) = Q(z)\tilde{Q}(z)$ has then a double root pair $(0, \infty)$ meaning that by \eqref{eq:numberSolutionsPAF} the \ref{prob:PAF} problem has exactly three different solutions -- which are simply trivially shifted versions from one another.
  Figure \ref{fig:shiftAmbiguity} depicts these three solutions obtained by selecting either the root $0$ or $\infty$ for each single root pair $(0, \infty)$ of $H(z)$.
  \begin{remark}[On conjugate reflection ambiguity]
    Assume that $\bfx[0], \bfx[N-1] \neq \mathbf{0}$, so that time-shifts are not part of the total number of solutions \eqref{eq:numberSolutionsPAF}.
    Then, in the non-unique case, the number of non-trivially different solutions \eqref{eq:numberSolutionsPAF} for \ref{prob:PAF} (and by equivalence, for \ref{prob:PPR} and \ref{prob:BPR}) is twice that of the standard univariate 1D phase retrieval problem \cite{boche_fourier_2017, beinert2015ambiguities}.
    This can be explained by the fact that conjugate reflection is not, in general, a trivial ambiguity for the bivariate case (see Section \ref{sub:trivialAmbiguities}).
    More precisely, this means that unlike the univariate case, exchanging common roots $(\delta_1, \ldots, \delta_{N_D})$ with their conjugated reflected versions $(\overline{\delta_1}^{-1}, \ldots, \overline{\delta_{N_D}}^{-1})$ do not yield to a trivial ambiguity.
  \end{remark}

  \begin{figure}
    \includegraphics[width=\textwidth]{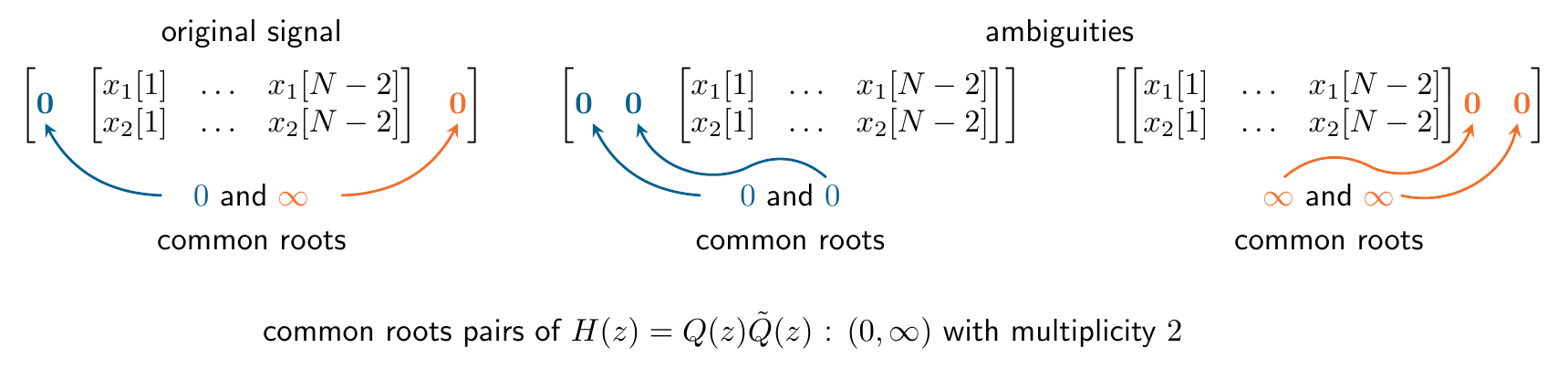}
    \caption{The three solutions of \ref{prob:PAF} obtained by selecting either the root $0$ or $\infty$ for each single root pair $(0, \infty)$ of $H(z)$.
      These three solutions are simple time-shifted versions of one-another, showing that Theorem \ref{theorem:PAFuniqueness} encompasses the time-shift trivial ambiguity as a special case.}\label{fig:shiftAmbiguity}
  \end{figure}
\end{remark}

We conclude the study of uniqueness properties of \ref{prob:PAF} in \cref{theorem:explicitFormPAF} below.
It provides an explicit expression of \ref{prob:PAF} solutions in the simplified case where there are no $0$ or $\infty$ roots in common, meaning that $\bfx[0] \neq \mathbf{0}$ and $\bfx[N-1] \neq \mathbf{0}$.

\begin{theorem}[Expression of \ref{prob:PAF} solutions]\label{theorem:explicitFormPAF}
  Suppose that $H(z) = \gcd(\Gamma_{11}, \Gamma_{12}, \Gamma_{21}, \Gamma_{2}) =Q(z)\tilde{Q}(z)$ with $D$ pair roots $(\delta_i, \conj{\delta}_i^{-1})$.
  Let $X_1(z) = Q(z)R_1(z)$ and $X_2(z) = Q(z)R_2(z)$.
  Denote by $\alpha_{1i}$ and $\alpha_{2i}$ the roots of $R_1(z)$ and $R_2(z)$, respectively,
  Then all solutions $X_1'(z)$ and $X_2'(z)$ to the \ref{prob:PAF} problem can be expressed as
  \begin{align}
    X_1'(z) & = e^{\jmath \theta} \lambda_1 \prod_{i=1}^{D}(z-\beta_{i})\prod_{i=1}^{N-D-1}(z-\alpha_{1i}) \\
    X_2'(z) & = e^{\jmath \theta} \lambda_2\prod_{i=1}^{D}(z-\beta_{i}) \prod_{i=1}^{N-D-1}(z-\alpha_{2i})
  \end{align}
  where each $\beta_i$ is chosen amongst zeros pairs $(\delta_i, \overline{\delta}_i^{-1})$ of $H(z)$; the angle $\theta \in (-\pi, \pi)$ accounts for the global phase trivial ambiguity.
  The constants $\lambda_1, \lambda_2 \in \bbC$ are given by
  \begin{align}
    \lambda_1 & =\sqrt{ \left \vert \gamma_{11}[N-1]\right\vert \prod_{i=1}^{D} | \beta_{i}| ^{-1}\prod_{i=1}^{N-D-1} | \alpha_{1i}|^{-1}}                \\
    \lambda_2 & = e^{\jmath\Delta}\sqrt{ \left\vert \gamma_{22}[N-1]\right\vert \prod_{i=1}^{D} | \beta_{i}| ^{-1}\prod_{i=1}^{N-D-1} | \alpha_{2i}|^{-1}} \:,
  \end{align}
  where $\Delta$ reads
  \begin{equation}
    \Delta =  \pi(N-1)+ \arg  \gamma_{12}[N-1] + \sum_{i=1}^D\arg \beta_i + \sum_{i=1}^{N-D-1}\arg \alpha_{2i}\:.
  \end{equation}
\end{theorem}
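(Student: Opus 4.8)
The plan is to first fix the \emph{root structure} of an arbitrary solution using the results already established, and then pin down the remaining scalar degrees of freedom by matching the top-degree coefficients of the measurement polynomials. First I would invoke \cref{lemma:PAFspectralFactorisation} together with \cref{theorem:PAFuniqueness}. Since $H(z)=Q(z)\tilde{Q}(z)$ and the quotients $R_1,R_2$ (with roots $\alpha_{1i},\alpha_{2i}$) are uniquely determined up to a scalar from the data via \eqref{eq:determinationR1R2_H}, every solution must be of the form $X_1'=Q'R_1$, $X_2'=Q'R_2$, where $Q'$ is \emph{any} polynomial satisfying $Q'\widetilde{Q'}=Q\tilde{Q}$. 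As shown in the proof of \cref{theorem:PAFuniqueness}, such $Q'$ are obtained by selecting exactly one root $\beta_i$ from each root pair $(\delta_i,\conj{\delta}_i^{-1})$ of $H(z)$, so that $Q'(z)\propto\prod_{i=1}^{D}(z-\beta_i)$. Because $\bfx[0]\neq\mathbf{0}$ and $\bfx[N-1]\neq\mathbf{0}$ rule out common roots at $0$ and $\infty$, the degree count is clean and both $X_1'$ and $X_2'$ carry exactly the root sets displayed in the statement; this already establishes the multiplicative form of the solutions, and it remains only to determine $\lambda_1$, $\lambda_2$ and the global phase.

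Next I would argue that it suffices to match a \emph{single} coefficient of each $\Gamma_{ij}$. Since $Q'\widetilde{Q'}=Q\tilde{Q}$ irrespective of the choice of the $\beta_i$, and the quotient roots $\alpha_{1i},\alpha_{2i}$ are fixed, each $\Gamma_{ij}'=X_i'\widetilde{X_j'}$ shares all of its roots with $\Gamma_{ij}$, whence $\Gamma_{ij}'=\kappa_{ij}\,\Gamma_{ij}$ for some constant $\kappa_{ij}$; matching the leading ($z^{2N-2}$) coefficient then forces $\kappa_{ij}=1$. Moreover, since $\Gamma_{21}=\widetilde{\Gamma}_{12}$ (conjugate reflection is an involution and distributes over products), the requirement $\Gamma_{21}'=\Gamma_{21}$ is automatic once $\Gamma_{12}'=\Gamma_{12}$. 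Hence only the three conditions coming from $\Gamma_{11},\Gamma_{22},\Gamma_{12}$ remain, matching exactly the two magnitudes $\lambda_1,|\lambda_2|$ and the single relative phase $\Delta$, while the global phase $\theta$ is left free (the global-phase ambiguity of Section~\ref{sub:trivialAmbiguities}).

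Then I would carry out the coefficient matching explicitly. By \cref{lem:Fourier2polynomials} the leading coefficient of $\Gamma_{ij}$ is $\gamma_{ij}[N-1]=x_i[N-1]\overline{x_j[0]}$. Writing the leading coefficient of $X_i'$ as $e^{\jmath\theta}\lambda_i$ and its constant term as $e^{\jmath\theta}\lambda_i(-1)^{N-1}\prod_i\beta_i\prod_i\alpha_{ji}$, I would equate the moduli of $\gamma_{11}[N-1]$ and $\gamma_{22}[N-1]$ to obtain the stated expressions for $\lambda_1$ and $|\lambda_2|$, and equate the argument of $\gamma_{12}[N-1]$ to recover $\Delta$; here the factor $(-1)^{N-1}=e^{\jmath\pi(N-1)}$ contributes the term $\pi(N-1)$, and the conjugation of the constant term of $X_2'$ produces the contributions of $\arg\gamma_{12}[N-1]$, $\sum_i\arg\beta_i$ and $\sum_i\arg\alpha_{2i}$. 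Note that these constants genuinely depend on the chosen $\beta_i$, so different selections yield genuinely different solutions, consistent with the count in \cref{theorem:PAFuniqueness}. The converse direction—that every admissible choice of $\beta_i$ and $\theta$ together with these $\lambda_1,\lambda_2$ produces a \emph{bona fide} solution—follows by reversing this computation.

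The main obstacle I anticipate is the careful phase and sign bookkeeping in the last step: one must simultaneously track the conjugations hidden in $\widetilde{X_j'}$, the $(-1)^{N-1}$ arising from the products of roots, and the argument of $\gamma_{12}[N-1]$, so that the expression for $\Delta$ comes out exactly as stated. A secondary technical point is the implicit non-degeneracy required for the product formulas to be well defined, namely that all $\alpha_{1i},\alpha_{2i},\beta_i$ are finite and nonzero (equivalently, each of $X_1,X_2$ has full degree $N-1$ and a nonzero constant term); the boundary cases in which a single component alone acquires a root at $0$ or $\infty$ would have to be handled separately, as they make $\gamma_{ii}[N-1]$ vanish.
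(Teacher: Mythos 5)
Your proposal is correct and follows essentially the same route as the paper's own proof in \ref{app:proof:theoremUniqueness}: you fix the root structure of any solution via \cref{lemma:PAFspectralFactorisation} and the spectral factorization $Q'(z)\widetilde{Q'}(z)=Q(z)\tilde{Q}(z)$ (selecting one $\beta_i$ per root pair), and then pin down $\lambda_1$, $\lambda_2$ and $\Delta$ by matching the leading coefficients $\gamma_{11}[N-1]$, $\gamma_{22}[N-1]$, $\gamma_{12}[N-1]$, which is exactly the paper's computation. Your extra touches---observing that $\Gamma_{ij}'\propto\Gamma_{ij}$ automatically, so matching a single coefficient suffices, that $\Gamma_{21}=\widetilde{\Gamma}_{12}$ makes one condition redundant, and that degenerate cases with a component having a root at $0$ or $\infty$ need separate care---are minor refinements of the same argument, not a different approach.
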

\begin{proof}
  See \ref{app:proof:theoremUniqueness}.
\end{proof}

\subsection{Uniqueness in practice}
\label{sub:uniquenessInterpretation}
We provide in this section a small numerical study which illustrates how uniqueness of \ref{prob:PAF} can quickly occur in practice.
Let us consider the case of a bivariate signal with constant polarization state, which is one of the simplest models for bivariate signals.
Formally, such signal $\lbrace \bfx[n]\rbrace_{n=0, 1, \ldots, N-1}$ can be written as
\begin{equation}
  \bfx[n] = \begin{bmatrix}a\\b\end{bmatrix}s[n], \quad \bfs \in \mathbb{C}^N, a, b \in \mathbb{C}\:. \label{eq:constantPolarizedSignal}
\end{equation}
In \eqref{eq:constantPolarizedSignal}, the complex-valued signal $\bfs$ defines the time or 1D-spatial dynamics of the bivariate signal $\bfx[n]$, whereas complex constants $a$ and $b$ define its polarized state.
For instance, when $a \neq 0, b=0$, the signal $\bfx[n]$ is said to be linearly horizontally polarized; similarly for $a = 0, b\neq0$, it is said to be linearly horizontally polarized.
Finally, when e.g. $a = 1$ and $b = \pm \bmj$, \eqref{eq:constantPolarizedSignal} is that of a circularly polarized signal, since the two entries of $\bfx[n]$ are in quadrature with one another.

Regarding the uniqueness of the bivariate signal defined by \eqref{eq:constantPolarizedSignal}, we observe that the polynomials $X_1(z)$ and $X_2(z)$ associated with entries of $\bfx[n]$ share the same $N-1$ roots\footnote{we assume here for simplicity that nor $a=0$ nor $b=0$, so that polynomials are properly defined.}, those of the complex polynomial $S(z) \triangleq \sum_{n=0}^{N-1}s[n]z^n$.
Thus, according to \cref{theorem:PAFuniqueness} there are up to $2^{N-1}$ different solutions for the bivariate phase retrieval problem.
Notably, if $S(z)$ has no roots on the unit circle, and if they are all distinct from one another, then there are exactly $2^{N-1}$ different solutions for the bivariate phase retrieval problem.
We now consider two perturbations models of the constant polarized signal \eqref{eq:constantPolarizedSignal}:
\begin{itemize}
  \item \emph{single entry perturbation}: we consider a perturbated signal $\lbrace\bfx_{n_0, \boldsymbol{\epsilon}}[n]\rbrace_{n=0, 1, \ldots, N-1}$, where $n_0$ is a randomly selected index from the uniform discrete distribution on $[\![0, N-1]\!]$ and
        \begin{equation}
          \bfx_{n_0, \boldsymbol{\epsilon}}[n_0] = \bfx[n] +\boldsymbol{\epsilon}, \boldsymbol{\epsilon} \sim \calN_\bbC(0, \sigma^2\bfI_2),\quad \bfx_{n_0, \boldsymbol{\epsilon}}[n] = \bfx[n]\quad \text{for } n\neq n_0 \label{eq:pertuModel1}
        \end{equation}
        with $\calN_\bbC(0, \sigma^2\bfI_2)$ the complex normal distribution of variance $\sigma^2$.
  \item \emph{full signal perturbation}: this time the perturbation is applied to all entries of $\lbrace \bfx[n]\rbrace_{n=0, 1, \ldots, N-1}$ such that
        \begin{equation}
          \bfx_{\boldsymbol{\epsilon}}[n] = \bfx[n] +\boldsymbol{\epsilon}[n], \qquad \boldsymbol{\epsilon}[n] \sim \calN_\bbC\left(0, \frac{\sigma^2}{N}\bfI_2\right)\label{eq:pertuModel2}
        \end{equation}
        for every $n \in [\![0, N-1]\!]$. Note that we normalized by $N$ the variance of the perturbation to ensure proper comparison between the two models.
\end{itemize}
To assess numerically whether these perturbated signals remains non-uniquely recoverable by \ref{prob:PAF}, we use two different but complementary figures of merits.
The first one is the rank deficiency of the standard Sylvester matrix of polynomials $X_1(z)$ and $X_2(z)$: it is upper bounded by $N / 2$ when $\exists c \in \bbC$ such that $X_2(z) = cX_1(z)$, and lower bounded by 0 when $\gcd \left(X_1(z), X_2(z)\right) = 1$, \ie the Sylvester matrix is full-rank.
See also details in Section \ref{sub:SylvesterGCDmatrices} further below.
The second figure of merit is \emph{root separation}, \ie the minimal Euclidean distance between roots of $X_1(z)$ and $X_2(z)$.

\begin{figure}[t]
  \centering
  \includegraphics[width=\textwidth]{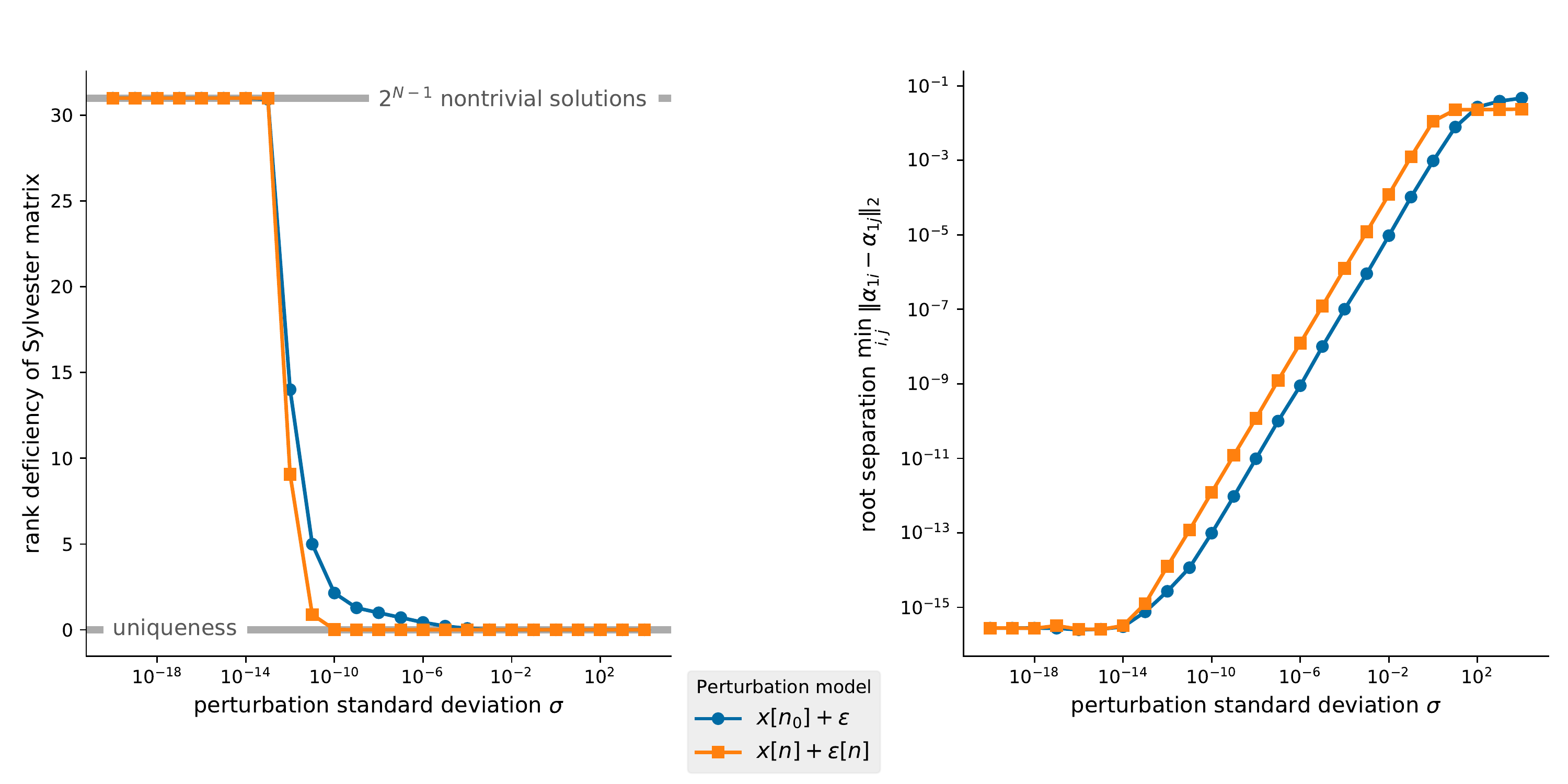}
  \caption{Illustration of the uniqueness properties of \ref{prob:PAF} by studying the evolution of two figures of merit as the standard deviation $\sigma$ of the perturbation increases for models \eqref{eq:pertuModel1} and \eqref{eq:pertuModel2}.}\label{fig:BPRUniqueness}
\end{figure}

Figure \ref{fig:BPRUniqueness} displays the evolution of two non-uniqueness metrics with respect to the perturbation standard deviation $\sigma$ arising in models \eqref{eq:pertuModel1} and \eqref{eq:pertuModel2}.
Results have been obtained by averaging 1000 independent realizations for each value of $\sigma$.
We used an arbitrary randomly generated bivariate signal with constant polarization \eqref{eq:constantPolarizedSignal} for $N = 32$ and unit norm.
Regarding the rank deficiency criteria, both perturbation models lead to maximum rank-deficiency (non-unique case where roots of $X_1(z)$ and $X_2(z)$ all coincide) for very small values of $\sigma < 10^{-13}$.
For $\sigma > 10^{-13}$ we observe a rapid phase transition to zero-rank deficiency (uniqueness) which appears from $\sigma > 10^{-10}$ for the full signal perturbation model \eqref{eq:pertuModel2}.
A similar but slower phenomenon arises for the single-entry perturbation model \eqref{eq:pertuModel1}, where we observe uniqueness for $\sigma \geq 10^{-5}$.
It is interesting to note that the start of the transition coincides with that of root separation criteria, which starts to increase for $\sigma > 10^{-14}$.
Moreover, it appears that an average root separation of the order of $10^{-11}$ is sufficient to ensure uniqueness with model \eqref{eq:pertuModel2}, whereas it should be of the order of $10^{-8}$ with model \eqref{eq:pertuModel1}.

In summary, we see that very small perturbations (say, roughly $\sigma \approx 10^{-6}$) can already transform a non-uniquely recoverable signal from \ref{prob:PAF} into a uniquely recoverable one.
This provides a numerical illustration of the fact that \ref{prob:PAF} is almost everywhere unique, since the set of complex polynomials with at least one common roots is of measure zero.
Compared to standard phase retrieval, where additional measurements are needed to ensure uniqueness (see \cite{boche_fourier_2017} and references therein), \ref{prob:BPR} and its equivalent \ref{prob:PPR} inherently ship with nice uniqueness properties.

\section{Solving PPR: algebraic methods}
\label{sec:solvingPPR_algebraic}
We propose two different but complementary strategies for solving \ref{prob:PPR} in practice.
This first section describes algebraic approaches which exploits the peculiarities of \ref{prob:PPR} through its equivalent \ref{prob:BPR} formulation.
In particular, it leverages the polynomial representation \ref{prob:PAF} which underpins uniqueness results of Section \ref{sec:uniqueness}.
Iterative algorithms for solving \ref{prob:PPR} are described later on in Section \ref{sec:solvingPPR_iterative}.

In the sequel, we assume that measurements are corrupted by additive i.i.d. Gaussian noise such that for $m=0, 1, \ldots M-1$ and $p = 0, 1, \ldots P-1$
\begin{equation}
  y_{m,p} =\vert \bfa_m^\herm \bfX\bfb_p\vert^2+ n_{m,p}, \quad  n_{m, p} \sim \calN(0, \sigma^2)
  \label{eq:noisyMeasuremntsPPR}
\end{equation}
where $\sigma^2$ is the Gaussian noise variance.
The signal-to-noise ratio (SNR) is then defined as
\begin{equation}
  \text{SNR} = \frac{\sum_{m=0}^{M-1}\sum_{p=0}^{P-1}\vert \bfa_m^\herm \bfX\bfb_p\vert^4}{MP\sigma^2}\label{eq:SNRdefAWGNmodel}
\end{equation}

Algebraic approaches exploits two key properties of \ref{prob:PPR} in the noiseless case: \emph{(i)} it is equivalent as \ref{prob:BPR} under the nonrestrictive hypothesis set \eqref{assumption:BPRPPR} and \emph{(ii)} \ref{prob:BPR} itself can be equivalently formulated as a polynomial factorization problem \ref{prob:PAF}, see Section \ref{sub:PAFdefinitionEquivalence} for details.
Notably, a key result from Section \ref{sec:uniqueness} is that polynomials $X_1(z)$ and $X_2(z)$ can be uniquely recovered (up to trivial ambiguities) as greatest common divisors when \ref{prob:BPR} is unique.

In practice, such an approach can be performed in two stages.
One first needs to reconstruct the measurement polynomials $\Gamma_{ij}(z)$, $i=1,2$ given scalar, possibly noisy \ref{prob:PPR} measurements $y_{m,p}$, $m=0,1, \ldots, M-1$, $p=0, 1, \ldots, P-1$.
Second, using techniques from approximate GCD computations \cite{usevich_variable_2017}, we recover $X_1(z)$ and $X_2(z)$ from the measurements polynomials $\Gamma_{ij}(z)$.

\subsection{Reconstruction of measurement polynomials}
Recall that by Lemma \ref{lem:Fourier2polynomials} measurement polynomials $\Gamma_{ij}(z)$ can be readily expressed in  terms of correlation functions $\gamma_{11}[n], \gamma_{12}[m], \gamma_{21}[m], \gamma_{22}[m]$.
Thus, recovery of polynomials $\Gamma_{ij}(z)$'s is identical to the recovery of $\lbrace\gamma_{ij}[n]\rbrace_{n\in \bbZ}$ for $i,j=1,2$.
Equivalently, by discrete Fourier transformation, one must retrieve the spectral matrix $\FourierGamma[m]$ for $m=0, 1, \ldots, M-1$ from \ref{prob:PPR} measurements.

Consider noisy measurements given by \eqref{eq:noisyMeasuremntsPPR}.
Since $\vert \bfa_m^\herm \bfX\bfb_p\vert^2 = \trace \conj{\bfb_p}\bfb_p^\transp \FourierGamma[m]$, an estimate $\hat{\boldsymbol{\Gamma}}[m]$ of $\FourierGamma[m]$ is found for every $m$ by minimizing the following quadratic-loss
\begin{equation}
  \hat{\boldsymbol{\Gamma}}[m] = \argmin_{\substack{\boldsymbol{\tilde{\Gamma}}[m] = \boldsymbol{\tilde{\Gamma}}[m]^\herm \\\rank \boldsymbol{\tilde{\Gamma}}[m]=1}}\sum_{p=0}^{P-1}\left(y_{m,p} - \trace \conj{\bfb_p}\bfb_p^\transp \boldsymbol{\tilde{\Gamma}}[m] \right)^2\:,
  \label{eq:optimizationProblemGamma}
\end{equation}
where the Hermitian and rank-one constraint ensures the estimated spectral matrix $\hat{\boldsymbol{\Gamma}}[m]$ to have the right structure for future polynomial gcd computations.

To solve \eqref{eq:optimizationProblemGamma}, we adopt a heuristic but simple strategy similar to practical polarimetric reconstruction techniques used in optics \cite{,schaefer_measuring_2015,goldstein2017polarized}.
First, we exploit the \emph{Stokes parameters} representation of 2-by-2 Hermitian matrices, which read for an arbitrary Hermitian matrix $\bfM\in \bbC^{2\times 2}$
\begin{equation}
  \bfM = \frac{1}{2}\begin{bmatrix}
    S_{0} +S_{1}     & S_{2} + \bmj S_{3} \\
    S_{2}-\bmj S_{3} & S_{0} - S_{1}
  \end{bmatrix}\qquad  S_0, S_1, S_2, S_3 \in \bbR\:.\label{eq:Stokes2matrix}
\end{equation}
This set of four real-valued parameters are widely used in optics to describe the different polarization states of light.
Formally, Stokes parameters define a bijective map $\calS: \lbrace \bfM \in \bbC^{2\times 2}\vert \bfM = \bfM^\herm\rbrace \rightarrow \bbR^4$ such that $\calS(\bfM) = (S_0,S_1, S_2, S_3)^\transp$.
This allows to express the noiseless measurements as a simple scalar product between Stokes vectors, \ie
\begin{equation}
  \trace \conj{\bfb_p}\bfb_p^\transp \boldsymbol{\tilde{\Gamma}}[m] = \left[\calS\left(\conj{\bfb_p}\bfb_p^\transp\right)\right]^\transp \calS\left(\boldsymbol{\tilde{\Gamma}}[m]\right)\:.
\end{equation}
Thus, for $m$ fixed, we can set $\bfy_{m, :} \triangleq (y_{m, 0}, y_{m, 1}, \ldots, y_{m, P-1})^\transp \in \bbR_+^P$ and the polarization measurement matrix $\bfD \in \bbR^{P\times 4}$ such that its $p$-th row reads $\bfD_p = \left[\calS\left(\conj{\bfb_p}\bfb_p^\transp\right)\right]^\transp$, leading us to rewrite problem \eqref{eq:optimizationProblemGamma} as
\begin{equation}
  \hat{\boldsymbol{\Gamma}}[m]= \argmin_{\substack{\boldsymbol{\tilde{\Gamma}}[m] = \boldsymbol{\tilde{\Gamma}}[m]^\herm \\\rank \boldsymbol{\tilde{\Gamma}}[m]=1}}\left \Vert \bfy_{m,:} -\bfD\calS\left(\boldsymbol{\tilde{\Gamma}}[m]\right)\right\Vert^2_2
  \label{eq:optimizationProblemGammaStokes}\:.
\end{equation}
A possibly sub-optimal yet very simple solution to \eqref{eq:optimizationProblemGammaStokes} consists in finding the best rank-one approximation of the classical least square estimator of Stokes parameters, \ie
\begin{equation}
  \hat{\boldsymbol{\Gamma}}[m]= \mathrm{rank1}\left\lbrace \calS^{-1}\left(\bfD^\dagger \bfy_{m, :}\right)\right\rbrace
\end{equation}
where $\bfD^\dagger$ denotes the Moore-Penrose pseudo-inverse of $\bfD$, $\calS^{-1}$ is the inverse Stokes mapping defined by \eqref{eq:Stokes2matrix}.
The operator $\mathrm{rank1}$ finds the best rank-one approximation of a given matrix with respect to the Frobenius norm.
More precisely, consider the SVD of a 2-by-2 Hermitian matrix $\bfM = \bfU\diag(\sigma_0, \sigma_1)\bfU^\herm$, one has $\mathrm{rank1}(\bfM) = \sigma_0\bfu_0\bfu_0^\herm$, where $\sigma_0$ and $\bfu_0$ are respectively the largest singular value and its corresponding singular vector.

\subsection{Sylvester matrices and GCD}
\label{sub:SylvesterGCDmatrices}
After estimating the matrices $\hat{\boldsymbol{\Gamma}}[m]$, we will build the (estimated) matrix polynomial $\hat{\boldsymbol{\Gamma}}(z)$, and then our goal  is to solve (approximately) the problem \ref{prob:PAF}.
Thanks to Lemma~\ref{lemma:PAFspectralFactorisation} and Theorem~\ref{theorem:explicitFormPAF}, the solution can be found through the GCD (or an approximate common divisors) of the polynomials.
The GCD (or approximate GCD) can be found thanks to the correspondence between (greatest) common divisors and low-rank Sylvester matrices, reviewed below
(see also \cite{usevich_variable_2017} for more details).

For simplicity, we assume that the polynomials have the same (extended) degree, \ie $A,B \in \bbC_{\le L}[z]$.
Then we define the Sylvester-like matrices, parameterized by an integer $D \le L$ (possibly negative)
\begin{equation}\label{eq:sylvesterMatrix}
  \sylv{D}(A,B) \triangleq
  \left[
    \begin{array}{ccc|ccc}
      \pco{a}{0} &        &            & \pco{b}{0} &        &            \\
      \vdots     & \ddots &            & \vdots     & \ddots &            \\
      \pco{a}{L} &        & \pco{a}{0} & \pco{b}{L} &        & \pco{b}{0} \\
                 & \ddots & \vdots     &            & \ddots & \vdots     \\
                 &        & \pco{a}{L} &            &        & \pco{b}{L}
    \end{array}
    \right] \in \bbC^{(2L-D+1) \times 2(L-D+1)}.
\end{equation}
When $D=1$ (\ie the matrix is square $2L \times 2L$), the matrix is the well-known Sylvester matrix.
There are, however, two important extensions of the classic case:
\begin{itemize}
  \item When $1 \le D \le L$, the matrix is tall (the number of columns does not exceed the number of rows), and it is called the \emph{Sylvester subresultant} matrix.
  \item If $D \le 1$ (in general, chosen to be negative), the matrix is fat (the number of rows does not exceed the  number of columns), and such a matrix is called \emph{extended Sylvester} matrix.
\end{itemize}
For an overview of such matrices and the corresponding literature, we refer to \cite{usevich_variable_2017} (note that unlike \cite{usevich_variable_2017} we use the same notation for subresultant and extended Sylvester matrices).
The following theorem is classic.
\begin{theorem}[Sylvester]\label{thm:sylvester}
  Two polynomials  $A, B \in \bbC_{\le L}[z]$ have a non-trivial common divisor if and only if $\sylv{1}(A,B)$ is rank deficient.
  Moreover the (extended) degree $K$ of $\gcd(A(z),B(z))$  is equal to the rank defect of  $\sylv{1}(A,B)$, \ie
  \[
    K = 2L - \rank{\sylv{1}(A,B)}
  \]
  and $\gcd(A(z),B(z)) \in \bbC_{\le K}[z]$.
\end{theorem}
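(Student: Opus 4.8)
The plan is to identify $\sylv{1}(A,B)$ with a linear \emph{Sylvester map} and reduce both assertions to a single kernel-dimension count. By construction \eqref{eq:sylvesterMatrix}, the columns of $\sylv{1}(A,B)$ are exactly the coefficient vectors of the shifted polynomials $z^k A(z)$ and $z^k B(z)$ for $k = 0, \ldots, L-1$, each viewed in $\bbC_{\le 2L-1}[z] \cong \bbC^{2L}$. Hence the matrix represents
\[
  \Phi : \bbC_{\le L-1}[z] \times \bbC_{\le L-1}[z] \to \bbC_{\le 2L-1}[z], \qquad \Phi(U,V) = U(z)A(z) + V(z)B(z),
\]
so that $\rank \sylv{1}(A,B) = \dim \operatorname{im}\Phi$ and, since the domain has dimension $2L$, the rank defect $2L - \rank \sylv{1}(A,B)$ equals $\dim \ker \Phi$ by rank--nullity. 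I emphasize that this identification is purely combinatorial and is insensitive to whether the leading coefficients $a[L], b[L]$ vanish, so it remains valid verbatim in the extended setting of \Cref{sub:polynomials}.

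It then suffices to show $\dim \ker \Phi = K$, the extended degree of $G \triangleq \gcd(A,B)$. This gives the rank formula at once, and the first equivalence follows because $A$ and $B$ share a nontrivial common divisor precisely when $K \ge 1$, \ie precisely when the square matrix $\sylv{1}(A,B)$ is rank deficient. To compute the kernel I would factor $A = G A_1$ and $B = G B_1$ with $\gcd(A_1,B_1) = 1$; then $(U,V) \in \ker \Phi$ means $UA = -VB$, hence $U A_1 = -V B_1$ after cancelling $G$. Coprimeness forces $B_1 \mid U$, say $U = B_1 W$ and therefore $V = -A_1 W$, while the degree bound $\deg U \le L-1$ together with $\deg B_1 = L-K$ yields $\deg W \le K-1$. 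Consequently the linear map $W \mapsto (B_1 W, -A_1 W)$ is an isomorphism from $\bbC_{\le K-1}[z]$ onto $\ker \Phi$, so $\dim \ker \Phi = K$.

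The delicate point is to justify the three algebraic steps above --- cancelling $G$, invoking a Euclid-type divisibility lemma via $\gcd(A_1,B_1)=1$, and the additive degree bookkeeping $\deg A_1 = L-K$, $\deg W = K-1$ --- in the extended setting, where $A$, $B$, $G$ may carry roots at $\infty$ and products are taken in the degree-tracking sense $\bbC_{\le D}[z]\times\bbC_{\le D'}[z]\to\bbC_{\le D+D'}[z]$ of \Cref{sub:polynomials}. Ordinary $\bbC[z]$ arithmetic would miss the contribution of common roots at infinity to $K$, so the cleanest route I would take is to pass to homogeneous coordinates: replace $A,B$ by the bivariate forms $A_h(z,w) = \sum_{n=0}^L a[n] z^n w^{L-n}$ and $B_h(z,w)$ of exact degree $L$, under which a root at $\infty$ of multiplicity $\mu$ becomes a factor $w^\mu$ and the point at infinity loses any special status. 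Unique factorization, the coprimeness implication, and the identity $\deg A_1 = L-K$ all hold in the genuine graded UFD $\bbC[z,w]$, and the homogenized map $\Phi_h$ acts on the degree-$(L-1)$ and degree-$(2L-1)$ graded pieces exactly as $\Phi$ does; the dimension count therefore transfers back unchanged, absorbing the extended-degree convention into the choice of grading and reducing the statement to the classical homogeneous resultant computation.
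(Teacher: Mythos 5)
Your proof is correct, but note that the paper itself does not prove this theorem: it is stated as ``classic'' and deferred to the literature (the reference \cite{usevich_variable_2017}), so your argument is a genuine filling-in rather than a parallel of an existing proof. Your two main ingredients are sound and well chosen. The identification of $\sylv{1}(A,B)$ with the map $\Phi(U,V) = UA+VB$ on $\bbC_{\le L-1}[z]\times\bbC_{\le L-1}[z]$ is exactly the device the authors themselves use in their appendix proof of Proposition~\ref{prop:sylvesterLeftKernel} (where the range of $\sylv{D}(A,B)$ is described as the set of combinations $U(z)A(z)+V(z)B(z)$), so your starting point is fully consistent with the paper's formalism. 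More importantly, you correctly isolated the one genuinely non-classical point: the theorem as stated counts common roots at $\infty$ in the extended degree $K$, and a naive argument in ordinary $\bbC[z]$ arithmetic would get the kernel dimension wrong whenever both leading coefficients vanish (e.g.\ $A = z$, $B = z+1$ viewed in $\bbC_{\le 2}[z]$ has rank defect $1$, not $0$). Passing to binary forms $A_h(z,w)$, $B_h(z,w)$ of exact degree $L$ resolves this cleanly: roots at infinity become factors of $w$, the graded pieces of $\bbC[z,w]$ reproduce the same Sylvester matrix, and Euclid's lemma plus the additive degree count $\deg B_{1,h} = L-K$, $\deg W_h = K-1$ go through in the UFD $\bbC[z,w]$, giving $\dim\ker\Phi = K$ and hence both assertions at once. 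This homogenization is precisely the correspondence the paper gestures at in \ref{app:polynomials} (extended polynomials as elements of $\bbC_{=D}[z,y]$ with roots in $\bbP^1$), so your proof can be read as making that remark do real work. The only cosmetic caveat is that, like the paper, you implicitly assume $A$ and $B$ are not both zero so that the gcd is defined; this is the standard convention and not a gap.
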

\begin{remark}
  Note that we use the term ``extended degree'' of $\gcd(A(z),B(z))$ to highlight the fact that the polynomials may have common  $\infty$ roots. (And therefore the degree in the usual sense may be lower, see also remarks at the end of \ref{app:polynomials}.)
\end{remark}
The GCD itself can be retrieved from the left or right kernel of the Sylvester matrix $\sylv{D}(A,B)$, as summarized in the following propositions (which can be viewed as extensions of Theorem~\ref{thm:sylvester}).
In what follows, we assume that the GCD has (extended) degree $K$ and note $H(z) = \gcd(A(z), B(z)) \in \bbC_{\le K}[z]$.
Moreover, we define
\[
  F(z) = \frac{A(z)}{H(z)}, \quad G(z) = \frac{B(z)}{H(z)}
\]
the corresponding quotient polynomials.
We begin with the result on the right kernel of Sylverster subresultant matrices.

\begin{proposition}[Right kernel, see e.g. {\cite[Lemma 4.6]{usevich_variable_2017}}]\label{prop:sylvesterRightKernel}
  The rank of the Sylvester subresultant matrix $\sylv{K}(A,B)$ is equal to $2(L-K+1)-1$ (\ie it has rank defect equal to $1$). Moreover, for the (unique up to scalar factor) nonzero vector in the right kernel
  \[
    \sylv{K}  (A,B)\begin{bmatrix}\bfu \\ \bfv\end{bmatrix} = 0;
  \]
  with $\bfu,\bfv \in \bbC^{L-K+1}$, the corresponding polynomials are multiples of the quotient polynomials:
  \[
    U(z) = -cG(z), \quad V(z) = c F(z),
  \]
  where $c \in \bbC$ is some constant.
\end{proposition}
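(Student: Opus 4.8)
The plan is to read the right kernel of $\sylv{K}(A,B)$ directly off the polynomial multiplication that this matrix encodes. First I would recall that, by its very construction, $\sylv{K}(A,B)$ is the coefficient matrix of the convolution map
\[
  \Phi : \bbC_{\le L-K}[z] \times \bbC_{\le L-K}[z] \longrightarrow \bbC_{\le 2L-K}[z], \qquad (U, V) \longmapsto A(z)U(z) + B(z)V(z),
\]
where the coefficient vectors $\bfu,\bfv \in \bbC^{L-K+1}$ of $U$ and $V$ are stacked to form the input. Hence the stacked vector of $(U,V)$ lies in the right kernel of $\sylv{K}(A,B)$ if and only if $A(z)U(z) + B(z)V(z) = 0$ as an identity in $\bbC_{\le 2L-K}[z]$.

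Next I would factor out the common divisor. Writing $A = HF$ and $B = HG$ with $\gcd(F,G) = 1$, and using $H \neq 0$, the kernel equation becomes $H(z)\big(F(z)U(z) + G(z)V(z)\big) = 0$, equivalently $F(z)U(z) = -G(z)V(z)$. Invoking unique factorization for extended polynomials (the form of the fundamental theorem of algebra recalled in Section~\ref{sub:polynomials}, which tracks roots at $\infty$), the coprimeness $\gcd(F,G) = 1$ means that $F$ and $G$ share no root in $\bbC \cup \{\infty\}$, so every root of $G$ must be a root of $U$; that is, $G$ divides $U$. I would then write $U = WG$ and substitute back to get $G(FW + V) = 0$, whence $V = -WF$ since $G \neq 0$.

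The decisive step is the degree bookkeeping, which is where the extended formalism pays off. In $\bbC_{\le L-K}[z]$ every element carries exactly $L-K$ roots in $\bbC \cup \{\infty\}$ counted with multiplicity, and in particular $G$ has extended degree exactly $L-K$. Since the product $U = WG$ must again lie in $\bbC_{\le L-K}[z]$, the multiplication rule $\bbC_{\le d_1}[z]\times\bbC_{\le d_2}[z]\to\bbC_{\le d_1+d_2}[z]$ forces $W \in \bbC_{\le 0}[z]$, i.e. $W = c$ is a constant. Thus every kernel element has the form $U = cG$, $V = -cF$, which up to the harmless replacement $c \mapsto -c$ is exactly the claimed $U = -cG$, $V = cF$. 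Conversely $(U,V) = (G,-F)$ solves $AU + BV = HFG - HGF = 0$ and lies in the domain, and it is nonzero because $\gcd(F,G)=1$ precludes $F=G=0$; hence the right kernel is genuinely one-dimensional. A one-dimensional kernel is precisely a rank defect of $1$, giving $\rank \sylv{K}(A,B) = 2(L-K+1)-1$.

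The main obstacle I anticipate is formal rather than conceptual: one must verify that each of the three classical ingredients invoked — that the Sylvester structure faithfully realizes extended polynomial multiplication, that coprimeness together with unique factorization yields the divisibility $G \mid U$, and that every element of $\bbC_{\le d}[z]$ carries exactly $d$ roots in $\bbC \cup \{\infty\}$ — is justified within the extended-polynomial framework of Section~\ref{sub:polynomials}, since it is exactly this exact root count that pins $W$ down to a constant. Once these formalism checks are in place, the argument is the classical reasoning underlying Theorem~\ref{thm:sylvester}, now carried out one degree at a time for the subresultant matrix.
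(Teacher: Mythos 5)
Your proof is correct, but it cannot be said to follow the paper's approach, because the paper never proves this proposition at all: it is imported from the literature (Lemma~4.6 of \cite{usevich_variable_2017}), and only the companion left-kernel statement, Proposition~\ref{prop:sylvesterLeftKernel}, receives a proof in \ref{app:sylvesterAGCD}. Your argument is a sound, self-contained replacement for that citation, and each step is indeed justified by the extended-polynomial formalism of Section~\ref{sub:polynomials}: membership in the right kernel is exactly the identity $A(z)U(z)+B(z)V(z)=0$ with $U,V\in\bbC_{\le L-K}[z]$; cancelling $H$ is legitimate because the coefficient vectors represent ordinary polynomials and multiplication has no zero divisors; coprimality of the quotients in the extended sense means disjoint root multisets in $\bbC\cup\{\infty\}$, so the unique factorization \eqref{eq:polyFactorizationFTA} applied to $F(z)U(z)=-G(z)V(z)$ places the $L-K$ roots of $G$ among the $L-K$ roots of the nonzero polynomial $U$, forcing $U=cG$ and then $V=-cF$; and the explicit nonzero kernel vector built from $(G,-F)$ shows the defect is exactly $1$, hence the rank formula. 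It is worth contrasting your route with the one the paper does use for the left-kernel result: there, instead of counting roots, the authors identify the column span of $\sylv{D}(A,B)$ with the column span of the multiplication matrix $\multmat{h}{2L-D-K}$ (all multiples of $H$ of the appropriate degree) and compare ranks. Your root-multiset argument is the more elementary one and is well adapted to the subresultant case $D=K$, where the kernel is one-dimensional and can be exhibited explicitly; the range-identification argument scales better to the fat case $D\le 1$, where the kernel has dimension $K$ and a direct root-counting description would be clumsier.
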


For the case of extended Sylvester matrices ($D \le 1$), the result on the left kernel matrices is less known in the form that we are using here.
This is the reason why we also provide a short proof in \ref{app:sylvesterAGCD}.
\begin{proposition}[Left kernel]\label{prop:sylvesterLeftKernel}
  Let $D \le  1$ (\ie $\sylv{D}  (A,B)$ is fat with $2L-D+1$ rows). Then the rank  of $\sylv{D}  (A,B)$ is equal to
  \[
    \rank(\sylv{D}  (A,B))  = 2L-D+1-K;
  \]
  therefore the dimension of the left kernel (\ie the rank defect) is equal to $K$ (the extended degree of the GCD).
  Moreover, a vector $\bfu \in \bbC^{2L-D+1}$ is in the left kernel ($\bfu^{\transp} \sylv{D} (A,B) = \mathbf{0}$) if and only if
  the vector of coefficients $\bfh \in \bbC^{K+1}$ of the GCD satisfies
  \[
    \bfh^{\transp} \underbrace{\begin{bmatrix}
        u[0]   & u[1]   & \cdots & u[2L-D-K]   \\
        u[1]   & u[2]   & \cdots & u[2L-D-K+1] \\
        \vdots & \vdots &        & \vdots      \\
        u[K]   & u[K+1] & \cdots & u[2L -D]    \\
      \end{bmatrix}}_{\hankel{K+1} (\bfu)} = 0,
  \]
  \ie  $\bfh$ is in the (left) kernel of the Hankel matrix built from $\bfu$.
\end{proposition}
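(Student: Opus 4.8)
The plan is to recognize $\sylv{D}(A,B)$ as the matrix of a natural polynomial-multiplication map and then to read off both the rank and the left kernel from the structure of its column space. Concretely, the columns of $\sylv{D}(A,B)$ are precisely the coefficient vectors of $z^k A(z)$ and $z^k B(z)$ for $k = 0, \ldots, L-D$. Hence $\sylv{D}(A,B)$ represents the linear map $\Phi : \bbC_{\le L-D}[z] \times \bbC_{\le L-D}[z] \to \bbC_{\le 2L-D}[z]$ given by $(P,Q) \mapsto PA + QB$, and its column space is exactly the set of combinations $PA+QB$ with $\deg P, \deg Q \le L-D$.

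\textbf{Computing the column space and the rank.} The central step is to determine this column space. Writing $A = HF$ and $B = HG$ with $F,G \in \bbC_{\le L-K}[z]$ and $\gcd(F,G)=1$, every combination factors as $PA+QB = H\,(PF+QG)$. Since multiplication by the nonzero $H$ is injective, the kernel of $\Phi$ coincides with the pairs $(P,Q)$ satisfying $PF+QG=0$; coprimeness of $F,G$ forces $G \mid P$, so $P = GT$, $Q = -FT$ with $T \in \bbC_{\le K-D}[z]$, giving $\dim \ker \Phi = K-D+1$. Therefore $\rank \sylv{D}(A,B) = 2(L-D+1) - (K-D+1) = 2L-D+1-K$, and the left-kernel dimension is $(2L-D+1) - (2L-D+1-K) = K$. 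Moreover, the image of $\Phi$ lies inside $H\cdot\bbC_{\le 2L-D-K}[z]$ for degree reasons and has the same dimension $2L-D-K+1$, so the column space equals exactly $H\cdot\bbC_{\le 2L-D-K}[z]$, i.e. the multiples of $H$ of degree at most $2L-D$.

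\textbf{Characterizing the left kernel.} A vector $\bfu$ lies in the left kernel iff it annihilates the column space under the (non-Hermitian) bilinear pairing $\bfu^{\transp}\bfw$, i.e. iff $\bfu^{\transp}(HW) = 0$ for every $W \in \bbC_{\le 2L-D-K}[z]$. Expanding the convolution and reindexing gives $\bfu^{\transp}(HW) = \sum_{m} w_m \bigl(\sum_{j=0}^{K} h_j\, u_{m+j}\bigr)$, where $m$ ranges over $0, \ldots, 2L-D-K$. Letting $W$ range over all monomials shows this vanishes for every $W$ iff $\sum_{j=0}^{K} h_j\, u_{m+j} = 0$ for each $m = 0, \ldots, 2L-D-K$, which is precisely the statement $\bfh^{\transp}\hankel{K+1}(\bfu) = 0$. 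This closes the equivalence claimed in the proposition.

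\textbf{Main obstacle.} The delicate point is not the (routine) convolution bookkeeping of the last paragraph, but handling the extended-polynomial framework of Section~\ref{sub:polynomials} correctly in the divisibility argument: I must justify that $A = HF$, $B = HG$ hold with the stated extended degrees $L-K$ and, crucially, that $\gcd(F,G)=1$ together with $G \mid PF$ still yields $G \mid P$ when leading coefficients may vanish (roots at $\infty$). These facts rest on the extended fundamental theorem of algebra and the divisibility conventions established earlier; once they are in place, the rank formula and the Hankel characterization follow directly from the linear-algebraic argument above.
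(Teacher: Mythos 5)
Your proof is correct, and its core is the same as the paper's: both arguments identify the column space of $\sylv{D}(A,B)$ with $H(z)\cdot\bbC_{\le 2L-D-K}[z]$, the multiples of the GCD of bounded extended degree, and then transpose that identification into the Hankel condition (the paper does the last step through the identity $\bfu^{\transp}\multmat{h}{2L-D-K} = \bfh^{\transp}\hankel{K+1}(\bfu)$; your pairing against monomials is the same convolution computation). The one substantive difference is how the rank formula enters: the paper does not prove it, citing \cite[Theorem 4.7]{usevich_variable_2017}, and indeed its dimension-count step for the left kernel leans on that citation, whereas you derive the rank from scratch by computing $\ker \Phi$ via coprimeness ($PF+QG=0$ forces $P=GT$, $Q=-FT$ with $T\in\bbC_{\le K-D}[z]$), so your argument establishes both halves of the proposition self-containedly. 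The price is the extended-framework bookkeeping you flag at the end, but those facts ($A=HF$ with $F\in\bbC_{\le L-K}[z]$, coprimeness of the quotients, and $G\mid PF \Rightarrow G \mid P$) do follow from the unique factorization with roots at $\infty$ set up in \ref{app:polynomials}, so there is no gap.
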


The next section exploits these properties of the kernel of Sylvester matrices to obtain algebraic reconstruction techniques for the \ref{prob:PAF} problem.

\subsection{Algebraic algorithms for \ref{prob:PAF}}

In this section, we propose two algorithms for solving \ref{prob:PAF}
using the results of Section \ref{sub:SylvesterGCDmatrices}.
The intuition behind the algorithms is that generic polynomials (informally speaking, with probability $1$ if drawn from an absolutely continuous probability distribution) $X_1(z)$ and $X_2(z)$ are coprime, and therefore
$\gcd (\Gamma_{11}(z),\Gamma_{12}(z)) =X_1(z)$ and $\gcd (\Gamma_{21}(z),\Gamma_{22}(z)) =X_2(z)$.

As a consequence, we now assume without loss of generality that the problem \ref{prob:PAF} (or equivalently, \ref{prob:PPR} or \ref{prob:BPR}) admits a unique solution up to trivial ambiguities.
In all the algorithms, we use the singular value decomposition (SVD) as to find the approximate kernels of the matrix.
Thus the proposed reconstuction methods may appear as suboptimal since the Sylvester structure is not taken account when computing the (low-rank) kernels.
This limitation could be overcome with structured low-rank approximations \cite{markovsky2008structured}, to be specifically tailored for the \ref{prob:PAF} problem. Such a study would fall outside the scope of the present work.
Still, as demonstrated by the numerical experiments presented in Section \ref{sec:numericalExperiments}, the SVD already provides excellent reconstruction performance in many scenarios, while maintaining a reasonable computational burden.

\subsubsection{Right kernel Sylvester}
\begin{algorithm}[t]
  \SetAlgoLined
  \KwIn{Matrix polynomial $\boldsymbol{\Gamma}(z)  \in \bbC^{2\times 2}_{\le 2(N-1)}$.}
  Build the matrix $\bfS = \sylv{N-1}(\Gamma_{11}, \Gamma_{21}) \in \bbC^{(3N-2) \times 2N}$\;
  Take $\bfv =  \bfv_{2N} \in \bbC^{2N}$ to be the $2N$-th right singular vector of $\bfS$ (corresponding to the last nontrivial singular value)\;
  Partition $\bfv$ as $\bfv = (-\bfv_2, \bfv_1)$, where $\bfv_1 = c \widehat{\bfx}_1$ and $\bfv_2 = c \widehat{\bfx}_2$ with $c \in \bbC$\;
  Determine $\vert c \vert$ by proper norm scaling as
  $$\vert c \vert = \left(\frac{\Vert \bfv_1 \Vert_2^2 + \Vert \bfv_2 \Vert_2^2}{\gamma_{11}[0] + \gamma_{22}[0]}\right)^{\frac{1}{2}}$$
  Set $\widehat{\bfx}_1 = \bfv_1 / \vert c \vert$ and $\widehat{\bfx}_2 = \bfv_2 / \vert c \vert$\;
  \KwResult{Estimates $\widehat{\bfx}_1$ and $\widehat{\bfx}_2$}
  \caption{Sylvester right kernel for \protect\ref{prob:PAF}}
  \label{algo:rightSylvesterPAF}
\end{algorithm}

The first algorithm is based on the properties of the right kernel of Sylvester matrices described in \cref{prop:sylvesterRightKernel}.
It uses the fact that $X_1(z)$ and $X_2(z)$ are (without noise) quotient polynomials of
\[
  \Gamma_{11}(z) = X_1(z) \widetilde{X}_1(z) \text{ and } \Gamma_{21}(z) = X_2(z) \widetilde{X}_1(z).
\]
Note that $X_1(z)$ and $X_2(z)$ are also quotient polynomials of $\Gamma_{12}(z) = X_1(z) \widetilde{X}_2(z) \text{ and } \Gamma_{22}(z) = X_1(z) \widetilde{X}_2(z)$, which adds some freedom in the choice of measurement polynomials.
For the sake of simplicity, we will work with estimated polynomials $\Gamma_{11}(z)$ and $\Gamma_{21}(z)$ in the following.

The complete right kernel Sylvester approach for solving \ref{prob:PAF} is summarized in \cref{algo:rightSylvesterPAF}.
It estimates the (one-dimensional) right kernel by computing the last nontrivial singular value of the Sylvester matrix $\sylv{N-1}(\Gamma_{11}, \Gamma_{21})$.
According to Proposition \ref{prop:sylvesterRightKernel}, this directly gives the vectors of coefficients $\bfx_1, \bfx_2$ of the polynomials $X_1(z)$ and $X_2(z)$ (or simply, the columns of the matrix $\bfX$ in \ref{prob:PPR} and \ref{prob:BPR}) up to one complex multiplicative constant.
This constant is then computed (up to one unit-modulus factor due to the trivial rotation ambiguity) by scaling the 2-norm of $\bfx_1$ and $\bfx_2$ thanks to the value at the origin of estimated autocovariance functions $\gamma_{11}[0] $ and $\gamma_{22}[0]$.

One of the key advantages of this algorithm lies in its simplicity.
Indeed, it only requires a single SVD of a $(3N-2)\times 2N$  matrix and thus has computational complexity $\calO(N^3)$ (with a naive implementation by computing the full SVD).
Potentially, iterative algorithms for truncated SVD can bring this complexity down to $\calO(N^2)$, or even lower (by using techniques based of the FFT).
For the sake of simplicity, we only considered the naive SVD implementation in our experiments.

\subsubsection{Left kernel Sylvester}

The second algorithm exploits the properties of the left kernel of extended (fat) Sylvester matrices (\ie $\sylv{D}$ for $D \leq 1$) detailed in \cref{prop:sylvesterLeftKernel}.
For simplicity and to reduce the size of the involved matrices we set $D = 1$ in what follows.
Nonetheless, the proposed approach can be adapted to any value of $D \leq 1$ if needed.

Algorithm~\ref{algo:leftSylvesterPAF} summarizes the complete procedure.
Compared to the right kernel Sylvester approach, the polynomials $X_1(z)$ and $X_2(z)$ are obtained by two separate GCD computations, that is
$$X_1(z) = c_1 \gcd (\Gamma_{11}, \Gamma_{12}) \text{ and } X_2(z) = c_2 \gcd (\Gamma_{21}, \Gamma_{22}) $$
where $c_1, c_2$ are complex constants.
The computation of each GCD requires three steps: a first SVD to determine the $N-1$ last left singular vectors of $\sylv{1}$; the construction of a fat, horizontally stacked Hankel matrix $\bfH$ with $N$ rows from these $N-1$ singular vectors; a second SVD to obtain the $N$ coefficients of the GCD as the last left singular vector of $\bfH$.
We refer the reader to \cite{usevich_variable_2017} for further details on this procedure.
Once GCDs have been obtained, determination of constants $c_1$ and $c_2$  (up to a common global phase factor) is carried out by properly scaling the norms of estimated coefficients $\bfx_1$ and $\bfx_2$ (using $\gamma_{11}[0]$ and $\gamma_{22}[0]$) and adjusting the phase factor $\arg c_1 \conj{c_2}$ thanks to the value at origin of the estimated cross-covariance function $\gamma_{12}[0]$.

Importantly, the complexity of the left kernel Sylvester method described in Algorithm~\ref{algo:leftSylvesterPAF} is higher for two main reasons.
First, as explained above, it requires the computations of 2 SVD for each of the two GCD determinations.
Moreover, while the first SVD has a cost of $\calO(N^3)$, the second SVD is performed on a large fat stacked Hankel matrix $\bfH$, with complexity $\calO(N^4)$ for a naive implementation.
This can potentially reduced by computing the SVD of  $\bfH\bfH^{\herm}$ instead, eventhough we do not consider such refinement here.

\begin{algorithm}[t]
  \SetAlgoLined
  \KwIn{Matrix polynomial $\boldsymbol{\Gamma}(z)  \in \bbC^{2\times 2}_{\le 2(N-1)}$}
  \For{j = 1,2}{
    Build the matrix $\bfS = \sylv{1}(\Gamma_{j1}, \Gamma_{j2}) \in \bbC^{(4N-4) \times (4N-4)}$\;
    Take the last $N-1$ left singular vectors of $\bfS$, \ie
    \[
      \bfu_{3N-2}, \ldots, \bfu_{4N-4}.
    \]
    Stack the  Hankel matrices with $N$ rows in the following matrix
    \[
      \bfH = \begin{bmatrix}\bfH_{N} (\bfu_{3N-2}) & \cdots & \bfH_{N} (\bfu_{4N-4})\end{bmatrix} \in \bbC^{N \times (N-1)(3N-3)}
    \]
    Retrieve $\bfw_j = c_j \widehat{\bfx}_j$, $c_j \in \bbC$ as the last left singular vector of  $\bfH$.
  }
  Determine constants $c_1, c_2$ as
  $$ c_1 = \frac{\Vert \bfw_1 \Vert_2}{\sqrt{\gamma_{11}[0]}} \text{ and }  c_2 = \frac{\Vert \bfw_2 \Vert_2}{\sqrt{\gamma_{22}[0]}}\exp\left[\jmath(\arg \gamma_{12}[0] - \arg \bfw_2^\herm \bfw_1)\right] $$
  Set $\widehat{\bfx}_1 = \bfw_1 / c_1 $ and $\widehat{\bfx}_2 = \bfw_2 / c_2$\;
  \KwResult{Estimates $\widehat{\bfx}_1$ and $\widehat{\bfx}_2$}
  \caption{Sylvester left kernel for \protect\ref{prob:PAF}}
  \label{algo:leftSylvesterPAF}
\end{algorithm}

\section{Solving PPR: iterative algorithms}
\label{sec:solvingPPR_iterative}
We now address the design of iterative algorithms to solve the noisy \ref{prob:PPR} problem.
Section \ref{sub:SDPrelaxation} and Section \ref{sub:Wirtingerflow} exploit the \ref{prob:PPR-1D} representation of the original problem to provide a semidefinite programming (SDP) relaxation and Wirtinger flow algorithm, respectively.

\subsection{SDP relaxation}
\label{sub:SDPrelaxation}
Semidefinite programming (SDP) approaches for phase retrieval have been increasingly popular for over a decade \cite{candes_phase_2013,candes_phase_2011}.
In the classical 1D phase retrieval case, SDP approaches exploit that eventhough measurements are quadratic in the unknown signal $\bfx \in \bbC^N$, they are linear in the rank-one matrix $\bfx\bfx^\herm$.
For \ref{prob:PPR}, the 1D equivalent representation \ref{prob:PPR-1D} enables to formulate a SDP relaxation of the original problem, by observing that
\begin{equation}
  \vert \bfc_{m, p}^\herm \vectx\vert^2 = \trace \bfc_{m, p}\bfc_{m,p}^\herm \vectx\vectx^\herm \triangleq \trace \bfC_{m,p}\vectX
\end{equation}
\ie, noiseless measurements can be rewritten as a linear functions of the lifted positive semidefinite rank-one matrix $\vectX \triangleq \vectx\vectx^\herm \in \bbC^{2N\times 2N}$.
Following the classical PhaseLift methodology \cite{candes_phase_2011,candes_phase_2013}, the original nonconvex \ref{prob:PPR} problem can be relaxed into a SDP convex program as
\begin{equation}
  \begin{split}
    \text{minimize}\quad & \frac{1}{2}\sum_{m=0}^{M-1}\sum_{p=0}^{P-1}\left(y_{m,p} - \trace \bfC_{m,p}\vectX\right)^2 + \lambda\Vert \vectX\Vert_\star\\
    \text{subject to}\quad & \vectX \succeq  0\label{eq:SDPRelaxationPPRNoisy}
  \end{split}
\end{equation}
where $\lambda \geq 0$ is an hyperparameter that allows to control the trade-off between the likelihood of observations and the nuclear norm regularization $\Vert \cdot \Vert_\star$.
Note that since $\vectX$ is constrained to be positive semidefinite, the nuclear norm regularization is equivalent to the trace-norm regularization used in \cite{candes_phase_2013} since $\Vert \vectX \Vert_\star = \trace \vectX$ in this case.

The SDP program \eqref{eq:SDPRelaxationPPRNoisy} takes a standard form: thus it can be solved in many ways, including interior point methods \cite{vandenberghe1996semidefinite}, first-order methods \cite{monteiro2003first} or using disciplined convex programming solvers such as \texttt{CVXPY}\footnote{\url{https://www.cvxpy.org/}}.
For completeness, we provide below an explicit algorithm to solve \eqref{eq:SDPRelaxationPPRNoisy} using a proximal gradient approach \cite[Chapter 10]{beck2017first}.
It closely follows the approach described in \cite{candes_phase_2013,goldstein2014field}.

The objective function in \eqref{eq:SDPRelaxationPPRNoisy} can be rewritten as the sum $f(\vectX) + g(\vectX)$, where
\begin{equation}
  f(\vectX) = \frac{1}{2}\sum_{m=0}^{M-1}\sum_{p=0}^{P-1}\left(y_{m,p} - \trace \bfC_{m,p}\vectX\right)^2, \quad g(\vectX) = \lambda\Vert \vectX\Vert_\star + \iota_{\succeq 0}(\vectX)\label{eq:proximalGradientSplitting}
\end{equation}
where $\iota_{\succeq 0}(\cdot)$ denote the indicator function on the positive semidefinite cone.
This ensures the formal equivalence between \eqref{eq:SDPRelaxationPPRNoisy} and the unconstrained minimization problem
\begin{equation}
  \min_{\vectX \in \bbC^{2N\times 2N}} f(\vectX) + g(\vectX)\:.\label{eq:proximalGradientSplitting2}
\end{equation}
The convex optimization problem \eqref{eq:proximalGradientSplitting2} can be efficiently solved by proximal gradient methods, which take advantage of the splitting between $f$ and $g$ of the objective function.
More precisely, we use the fast proximal gradient method which consist, at iteration $k$:
\begin{align}
  \vectX^{(k+1)} & = \prox_{t_k g}\left(\vectZ^{(k)} - t_k \nabla f(\vectZ^{(k)})\right)    \label{eq:ProxGradientStepFISTA} \\
  \eta_{k+1}     & = \frac{1+\sqrt{1+4\eta_k^2}}{2}                                                                          \\
  \vectZ^{(k+1)} & =  \vectX^{(k+1)}+ \left(\frac{\eta_k-1}{\eta_{k+1}}\right)\left( \vectX^{(k+1)} -  \vectX^{(k)}\right)
\end{align}
where $t_k$ is a step-size which is chosen such that the proximal gradient step \eqref{eq:ProxGradientStepFISTA} obey some sufficient decrease condition; see e.g. \cite[p. 271]{beck2017first} for details.
Our choice for the function $g$ in \eqref{eq:proximalGradientSplitting2} enables a simple expression for the associated proximal operator (see \cite{goldstein2014field}):
\begin{equation}
  \begin{split}
    \prox_{\tau g}(\bfX) &\triangleq \min_{\bfZ \succeq 0} \tau \lambda \Vert \bfZ\Vert_\star + \Vert \bfZ- \bfX\Vert_2^2\\
    &= \bfU\mathrm{\textbf{shrink}}(\Sigma, \tau \lambda)\bfU^\herm
    \label{eq:proximalOperatorSDP}
  \end{split}
\end{equation}
where in the last equation, $\bfU \Sigma \bfU^\herm$ is the eigenvalue decomposition of $\bfX$ and the shrink operator is defined entry wise by $\mathrm{shrink}(\sigma_i, \tau\lambda) = \mathrm{sign}(\sigma_i)\max\lbrace \vert \sigma_i\vert-\tau \lambda, 0\rbrace$.

\paragraph{Choice of regularization parameter $\lambda$} In this work, we fix the value of the regularization parameter to $\lambda = 1/\mathrm{SNR}$: we found empirically that this choice provides good results in most scenarios, as it provides a reasonable tradeoff between likelihood of observations and the nuclear norm regularization in the objective function of \eqref{eq:SDPRelaxationPPRNoisy}.

\paragraph{Convergence}  Obviously, as our algorithm is a convex SDP program, the precision towards the optimal cost value can become arbitrarily good as one increases the number of iterations.
In practice, one needs to stop the algorithm when a prescribed tolerance $\varepsilon$ is reached. To this aim we implemented stopping criteria that carefully monitor a normalized residual, see \cite{goldstein2014field} for details.
Moreover, it may happen that the estimated lifted matrix $\hat{\boldsymbol{\Xi}}$ generated by the sequence of $\vectX^{(k)}$ is not rank one: in this case, one first computes the rank-one approximation of $\hat{\boldsymbol{\Xi}}$ (e.g. using SVD) to obtain the estimated signal $\hat{\boldsymbol{\xi}}$.

\paragraph{Complexity}
The computational cost of the proposed algorithm concentrates on the proximal gradient step \eqref{eq:ProxGradientStepFISTA}, where the evaluation of the proximal operator and the computation $\nabla f$ share the computational burden.
More precisely, the eigenvalue decomposition of a $2N \times 2N$ matrix together with the shrink operator leads to $\calO(N^3)$ calculations.
The computation of the gradient leads to $MP$ trace evaluations of order $\calO(N^2)$ flops, meaning that the number of flops per iteration is of order $\calO(MPN^2 + N^3)$.
\bigskip

The full procedure is summarized in Algorithm \ref{algo:PolarPhaseLift}.

\begin{algorithm}[t]
  \SetAlgoLined
  \KwIn{measurements $\bfy \in \bbR^{MP}$, lifted measurement matrices $\bfC_{m,p}\in \bbC^{2N\times 2N}$, regularization parameter $\lambda \geq 0$.}
  set arbitrary $\vectX^{(0)}$\;
  $\vectZ^{(0)} \leftarrow \vectX^{(0)}$\;
  $k \leftarrow 0$\;
  \While{$\text{stopping criterion is not satisfied}$}{
    $\vectX^{(k+1)}  = \prox_{t_k g}\left(\vectZ^{(k)} - t_k \nabla f(\vectZ^{(k)})\right)$ where the proximal operator is given by \protect\eqref{eq:proximalOperatorSDP}\;
    $\eta_{k+1}         = \frac{1+\sqrt{1+4\eta_k^2}}{2} $\;
    $\vectZ^{(k+1)} =  \vectX^{(k+1)}+ \left(\frac{\eta_k-1}{\eta_{k+1}}\right)\left( \vectX^{(k+1)} -  \vectX^{(k)}\right)$\;
    $k\leftarrow  k+1$\;
  }
  $\hat{\boldsymbol{\xi}} \leftarrow \mathrm{rank1}\left(\vectX^{(k)}\right)$\;
  \KwResult{estimate $\hat{\boldsymbol{\xi}} $ }
  \caption{SDP relaxation for \protect\ref{prob:PPR}}
  \label{algo:PolarPhaseLift}
\end{algorithm}

\subsection{Wirtinger flow for PPR}
\label{sub:Wirtingerflow}

Exploiting further the 1D equivalent representation \ref{prob:PPR-1D} of the \ref{prob:PPR} problem, another approach consists in minimizing directly the following nonconvex quadratic objective
\begin{equation}
  \min_{\vectx \in \bbC^{2N}} F(\vectx) \triangleq \frac{1}{2}\Vert \bfy - \vert \bfC \vectx\vert^2 \Vert_2^2\label{eq:WFnonconvexProblem}
\end{equation}
where $\bfy \in \bbR^{MP}$ gather \ref{prob:PPR} measurements and where the rows of $\bfC \in \bbC^{MP\times 2N}$ are given by $\bfc_{m, p}^\herm$, see Section \ref{sub:1DequivalentModelphysicsModel}.
Provided that one can find a initial point $\vectx^{(0)}$ close enough from the global minimizer of \eqref{eq:WFnonconvexProblem}, a simple strategy based on gradient descent can be used to solve \ref{prob:PPR}.
However, such an approach requires special care since the optimization variable $\vectx$ is complex-valued.
In fact, the objective function in \eqref{eq:WFnonconvexProblem} is real-valued, and thus it is not differentiable with respect to complex analysis.
Instead, one needs to resort to the so-called $\bbC\bbR$ or \emph{Wirtinger}-calculus \cite{kreutz-delgado_complex_2009} to provide a meaningful extension of gradient-descent-type algorithms to the complex case.
This is precisely the approach proposed in \cite{candes_phase_2015Wirt} to solve standard phase retrieval, where the complex gradient descent is called \emph{Wirtinger flow}.

Leveraging the original Wirtinger flow approach, we propose below a complex-gradient descent algorithm which solves the nonconvex problem \eqref{eq:WFnonconvexProblem}.
Compared to the original paper \cite{candes_phase_2015Wirt}, we incorporate optimal step size selection \cite{jiang_wirtinger_2016} together with a proposed acceleration scheme \cite{xu_accelerated_2018}.
We further propose an efficient strategy for initialization based on the algebraic methods for \ref{prob:PPR} described in Section \ref{sec:solvingPPR_algebraic}.
The superiority of these initializations over standard ones (e.g. spectral initialization as in \cite{candes_phase_2015Wirt}) will be demonstrated in Section \ref{sub:initWFexp}.

The proposed PPR-WF algorithm is as follows. Starting from two initial points $\vectx^{(0)}$, $\vectx^{(1)}$, the $k$-th iteration reads
\begin{align}
  \beta_k        & = \frac{k+1}{k+3}                                                                 \\
  \vectz^{(k)}   & = \vectx^{(k)} + \beta_k \left(\vectx^{(k)} - \vectx^{(k-1)}\right)               \\
  \vectx^{(k+1)} & =  \vectz^{(k)} - \mu_k\nabla F\left(\vectz^{(k)}\right)\label{eq:WFGradientStep}
\end{align}
where $\beta_k$ is a sequence of accelerated parameters and $\mu_k$ is a carefully chosen stepsize, see further below.
Compared to the standard WF algorithm, PPR-WF takes advantage of the acceleration procedure first proposed in \cite{xu_accelerated_2018} in the context of ptychographic phase retrieval (but using a magnitude loss function instead of a square magnitude loss function as used here).
Note that the complex gradient of $F$ can be computed explicitly as
\begin{equation}
  \nabla F\left(\vectz\right) = \bfC^\herm \left(\vert \bfC \vectz\vert^2 - \bfy\right)\label{eq:complexGradientWFPPR}.
\end{equation}

\paragraph{Optimal step-size selection} We combine acceleration for WF with the optimal step-size selection proposed in \cite{jiang_wirtinger_2016} for the standard WF algorithm.
For completeness, we reproduce here the main ingredients underpinning optimal step size selection in \eqref{eq:WFGradientStep} and refer the reader to \cite{jiang_wirtinger_2016} for further details.
At iteration $k$, the optimal stepsize $\mu_k$ is defined by line search, \ie
\begin{equation}
  \mu_k \triangleq \argmin_\mu F\left(\vectx^{(k+1)}\right) \triangleq F\left(\vectz^{(k)} - \mu\nabla F\left(\vectz^{(k)}\right)\right)\label{eq:WFstepSizeLineSearch}
\end{equation}
The authors in \cite{jiang_wirtinger_2016} showed that the 1D optimization problem \eqref{eq:WFstepSizeLineSearch} boils down to finding the roots of a univariate cubic polynomial with real coefficients, the latter being completely determined by the knowledge of  $\vectz^{(k)}$,  $\nabla F\left(\vectz^{(k)}\right)$ and $\bfy$, see \cite[Eq. (17)]{jiang_wirtinger_2016}.
Roots can be determined in closed-form, and two cases can occur: \emph{(a)} there is only one real root, and thus it gives the optimal step-size $\mu_k$; \emph{(b)} there are three real roots, and in this case $\mu_k$ is set to the real root associated to the minimum objective value.
Note that optimal selection for WF is somewhat inexpensive, with computational cost dominated by the calculation of the cubic polynomial coefficients scaling as $\calO(MP)$.

\paragraph{Initialization}

Since PPR-WF attempts a minimizing a nonconvex quadratic objective \eqref{eq:WFnonconvexProblem}, the choice of initial points $\vectx^{(0)}$, $\vectx^{(1)}$ is crucial to hope that PPR-WF will be able to recover a global minimizer of the objective function.
For simplicity, in this work we set $\vectx^{(1)} = \vectx^{(0)}$, so that we only discuss the selection of $\vectx^{(0)}$.
In this work we consider four different initialization strategies for PPR-WF:
\begin{itemize}
  \item  \emph{spectral initialization} \cite{candes_phase_2015Wirt}: this standard approach consists in computing the eigenvector $\bfv$ corresponding to the largest eigenvalue of the matrix
        \begin{equation}
          \bfY \triangleq \frac{1}{MP} \sum_{r=0}^{MP - 1} y_r \bfc_r\bfc_r^\herm
        \end{equation}
        and to rescale it properly to set
        \begin{equation}
          \vectx^{(0)} = \frac{\bfv}{\lambda}, \quad \lambda = \left(N \dfrac{\sum_{r=0}^{MP-1}y_r}{\sum_{r=0}^{MP-1} \Vert \bfc_r \Vert^2}\right)^{1/2}
        \end{equation}

  \item \emph{random phase initialization}: we first generate a random measurement phase vector  $\boldsymbol{\phi} \in \bbR^{MP}$ with i.i.d. entries $\phi_r \sim \calU([0, 2\pi])$. Then, we set
        \begin{equation}
          \vectx^{(0)} = \bfC^\dagger \tilde{\bfy}, \quad \tilde{\bfy} \triangleq \bfy \odot \exp(\bmj \boldsymbol{\phi})
        \end{equation}
        where $\bfC^\dagger$ is the pseudo-inverse of $\bfC$ and $\odot$ denotes entrywise product between vectors.

  \item \emph{left kernel Sylvester initialization}: we simply set $\vectx^{(0)}$ as the result of the left-kernel Sylvester method.
  \item \emph{right kernel Sylvester initialization} we simply set $\vectx^{(0)}$ as the result of the right-kernel Sylvester method.
\end{itemize}

\paragraph{Convergence monitoring}
We monitor convergence of PPR-WF by computing at each iteration $k$, the normed residual $\Vert\vectx^{(k+1)} - \vectx^{(k)}\Vert_2 / \Vert\vectx^{(k)}\Vert_2$ and stop the algorithm when it goes below a prescribed tolerance $\varepsilon \ll 1$.

\paragraph{Complexity} The computational cost per iteration of PPR-WF is dominated by the evaluation of the complex gradient \eqref{eq:complexGradientWFPPR}, which scales as $\calO(MPN)$.
Note that the optimal step-size selection procedure scales as $\calO(MP)$, so meaning that the whole cost of PPR-WF remains $\calO(MPN)$ per iteration.
Algorithm \ref{algo:WFPPR} summarizes the proposed PPR-WF algorithm.

\begin{algorithm}[t]
  \SetAlgoLined
  \KwIn{measurements $\bfy \in \bbR^{MP}$, measurment matrix $\bfC \in \bbC^{MP\times 2N}$, tolerance $\varepsilon$}
  set $\vectx^{(0)}$ using the desired initialization method\;
  $\vectx^{(1)} \leftarrow\vectx^{(0)}$\;
  $k \leftarrow 1$\;
  \While{$\Vert \vectx^{(i+1)}-\vectx^{(i)}\Vert_2 > \varepsilon\Vert\vectx^{(i)}\Vert_2$}{
    $\beta_k \leftarrow \frac{k+1}{k+3}$\;                                                                         $\vectz^{(k)}  \leftarrow \vectx^{(k)} + \beta_k \left(\vectx^{(k)} - \vectx^{(k-1)}\right)$\;
    compute optimal step-size $\mu_k$ \protect\eqref{eq:WFstepSizeLineSearch}\;
    $\vectx^{(k+1)} \leftarrow   \vectz^{(k)} - \mu_k\nabla F\left(\vectz^{(k)}\right)$\;
    $k\leftarrow  k+1$\;
  }
  $\hat{\boldsymbol{\xi}} \leftarrow \vectx^{(k)}$\;

  \KwResult{estimate $\hat{\boldsymbol{\xi}}$}
  \caption{Wirtinger Flow for \protect\ref{prob:PPR}: PPR-WF}
  \label{algo:WFPPR}
\end{algorithm}

\section{Numerical experiments}
\label{sec:numericalExperiments}
We provide in this section several numerical experiments that address how \ref{prob:PPR} can be solved in practice using both algebraic and algorithmic  approaches described in Section \ref{sec:solvingPPR_algebraic} and Section \ref{sec:solvingPPR_iterative}, respectively.
Importantly, we demonstrate that the use of Wirtinger Flow together with a right-Sylvester initial point achieves the best performance in terms of mean-square error (MSE) with limited computational burden.
This combination of algorithmic and algebraic reconstruction methods provides a scalable, asymptotically MSE optimal, and parameter free inversion procedure for \ref{prob:PPR}.

Just like in standard phase retrieval, the global phase ambiguity in \ref{prob:PPR} requires to properly realign any estimated signal $\hat{\bfX}$ with the ground truth $\bfX$ in order to provide a meaningful MSE value.
Thus, the MSE is defined as
\begin{equation}
  \text{MSE} \triangleq \bfE \Vert \tilde{\hat{\bfX}} - \bfX \Vert_F^2\quad \text{ where }  \tilde{\hat{\bfX}} \triangleq e^{\bmj \Phi_0}\hat{\bfX} \text{ with } \Phi_0 \triangleq \argmin_{\phi \in [0, 2\pi)} \Vert e^{\bmj\phi}\hat{\bfX} - \bfX\Vert_F^2\:.
\end{equation}
Note that in practice, the minimization involved in the realignment procedure can simply be performed by evaluating the complex phase of the standard inner product between the vectors $\hat{\boldsymbol{\xi}}$ and $\vectx$ obtained from matrices $\hat{\bfX}$ and $\bfX$, respectively.

This section is organized as follows.
Section \ref{sub:noiselessReconstruction} presents the reconstruction of a realistic bivariate pulse from noiseless \ref{prob:PPR} measurements using the different approaches presented in the paper.
Section \ref{sub:initWFexp} then discusses the choice of initialization in PPR-WF.
Section \ref{sub:noisyPerf} benchmarks the robustness to noise of proposed reconstructions methods.
Finally, Section \ref{sub:nbMeasurementPerf} provides a first study of the impact of the number of \ref{prob:PPR} measurements on reconstruction performances.

\subsection{Reconstruction of bivariate pulse}
\label{sub:noiselessReconstruction}

As a first experiment, we consider the reconstruction of a bivariate pulse from noiseless \ref{prob:PPR} measurements.
The signal to be recovered defines a typical complex-valued bivariate analytic signal associated to the bivariate electromagnetic field to be estimated in ultra-short electromagnetic pulses experiments, see e.g. \cite{smirnova_attosecond_2009,walmsley_characterization_2009}.
It is defined for $N = 64$ points and we consider the simple noise-free measurement scheme \eqref{eq:measurementSchemeSimple} with $M= 2N-1$ and $K = 4$.
The bivariate pulse exhibits slow variations of the instantaneous polarization state, ensuring uniqueness of the \ref{prob:PPR} solution.
We investigate the capacity of the several methods introduced in Section \ref{sec:solvingPPR_algebraic} and Section \ref{sec:solvingPPR_iterative} to properly recover the bivariate signal of interest.
Note that for Wirtinger Flow, we consider two initialization strategies, one using spectral initialization and the other one based on the solution given by the right kernel Sylvester approach.

Figure \ref{fig:PulseReconstruction} depicts the different reconstructed bivariate signals obtained by each method along with the associated squared error $(\hat{\bfx}[n] - \bfx[n])^2$ for every time index $n$, where the estimated signal $\hat{\bfx}$ is realigned with the ground truth $\bfx$ beforehand.
Excepted Wirtinger Flow with spectral initialization, all methods successfully recover the original bivariate signal, where successful recovery in the noiseless context is decided whenever $\Vert \hat{\bfX} - \bfX \Vert_2^2 < 10^{-20}$.
Left and right-kernel Sylvester and Wirtinger Flow with right-Sylvester initialization provide similar reconstruction quality, with a slight advantage to left-kernel sylvester.
The SDP approach performs also well, yet three or four order of magnitude of MSE above the previous approaches.
Due to the very low error levels involved here, this has little consequence; however, compared to the aforementioned methods SDP exhibits both larger memory usage and overall computational cost, which makes it a less attractive option to solve this \ref{prob:PPR} problem in the noiseless scenario.
Strikingly, one can observe that the Wirtinger Flow approach relying on spectral initialization is not able to recover the ground truth signal.
Intuitively, it may be explained by the fact that spectral initialization provides an initial point too far from the global optimum, resulting in Wirtinger Flow to get stuck in a local minima instead.
This first experiment suggests that the performance of WF-based methods for \ref{prob:PPR} is tightly related to the quality of initial points, which we will  investigate in detail in the next section.

\begin{figure}[t]
  \includegraphics[width=\textwidth]{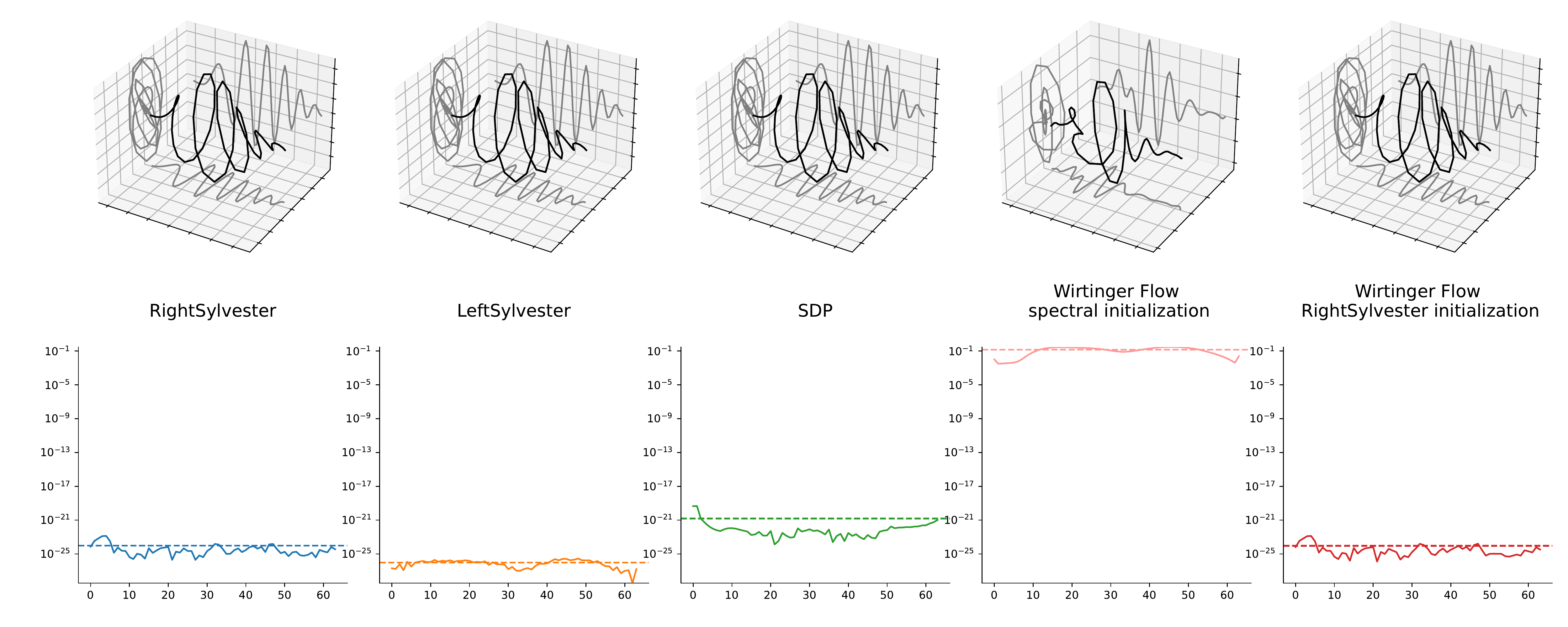}
  \caption{Reconstruction of a bivariate pulse ($N = 64$) from noiseless PPR measurements ($M=2N-1, P=4)$ using the different methods described in this paper. The reconstructed signal trace and squared error per time index $n$ are shown for each approach.}\label{fig:PulseReconstruction}
\end{figure}

\subsection{Comparison of initialization strategies for PPR-WF}
\label{sub:initWFexp}
\begin{figure}[htbp]
  \hspace*{-1cm}
  \begin{tikzpicture}

    \node[right] at (0, 0) {\includegraphics[width=\textwidth]{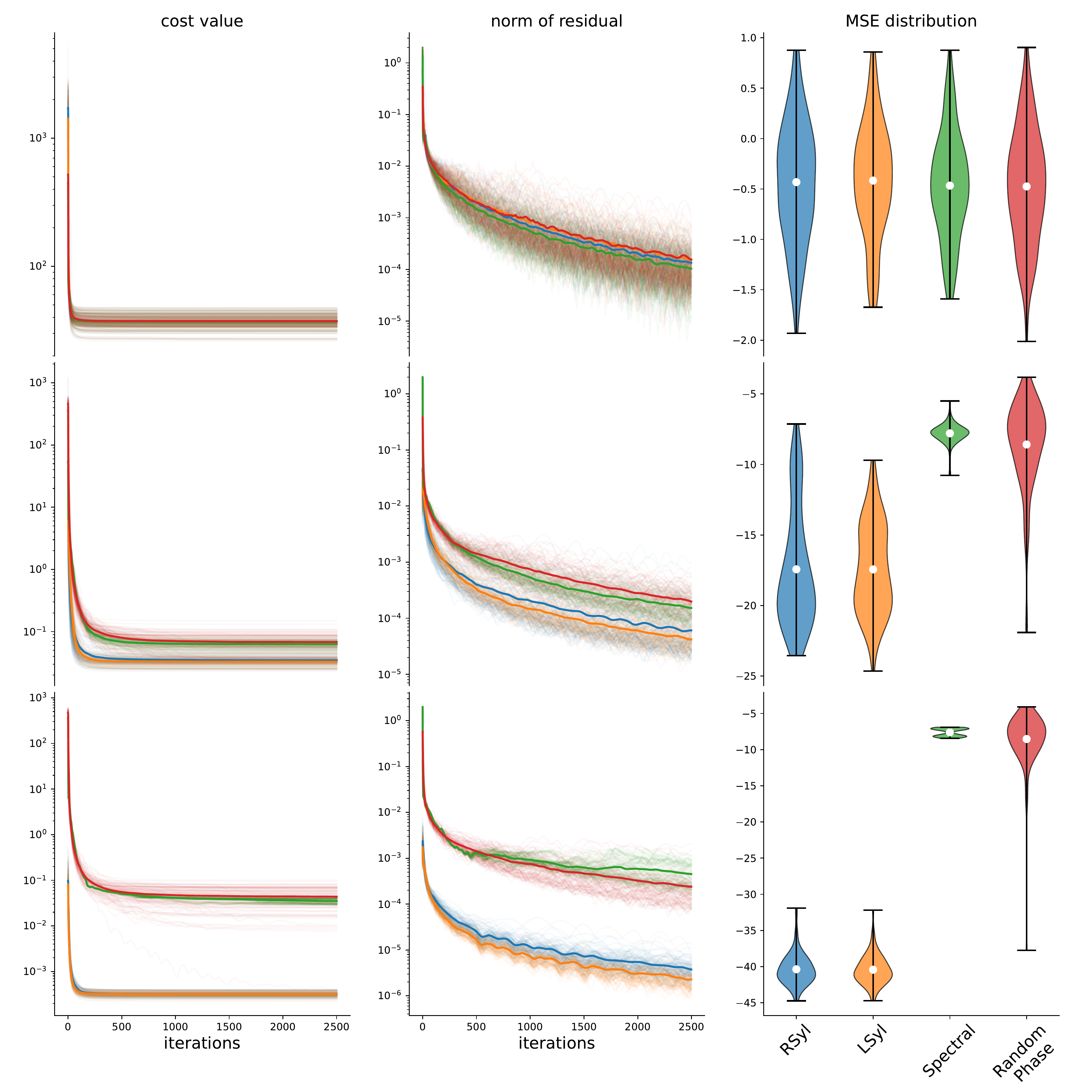}};
    \draw[fill = gray, fill opacity=0.1, draw=none] (-0.5, 2.8) rectangle  (16.5, -2.15);

    \node[rotate=90] at (0, 5.5) {\sf SNR $= 10$ dB};
    \node[rotate=90] at (0, 0.5) {\sf SNR $= 40$ dB};
    \node[rotate=90] at (0, -4.5) {\sf SNR $= 60$ dB};
  \end{tikzpicture}

  \caption{Comparison of initialization strategies for PPR-WF for the recovery of an arbitrary random bivariate signal of length $N=32$ with $M=2N-1$ and $P=4$ noisy measurements.
    We benchmark spectral initialization, random phase initialization, left and right-kernel Sylvester initialization strategies in terms of cost function evolution, normed residual decrease and MSE distribution. Rows corresponds to values of SNR of $10, 40$ and $60$ dB, respectively. For each SNR value, left and middle panels present the evolution of the cost function and residual value with iterations, respectively. For each initialization method, thin colored lines indicate trajectories for each one of the 100 independent trials, and thick colored lines display their average respective average.  The right panel provides violin plots representing a kernel density estimate of MSE distribution associated to each initialization strategy. White dots indicate mean-values and horizontal bars extreme values for each MSE distribution.}\label{fig:InitWFComparison}
\end{figure}

Choice of initial points in nonconvex problems is usually a difficult but crucial task, as it directly impacts whether or not the considered algorithm will be able to recover the global optimum of the problem.
The proposed PPR-WF algorithm does not avoid this key bottleneck, as already illustrated by the bivariate pulse recovery experiment depicted in Figure \ref{fig:PulseReconstruction}.
To assess the role played by initial points in PPR-WF, we carefully benchmark the four initialization methods described in Section \ref{sub:Wirtingerflow}, that is spectral initialization, random phase initialization, left and right-kernel Sylvester.
We generated a random Gaussian complex-valued signal $\bfX \in \bbC^{N\times 2}$ with i.i.d. entries of length $N = 32$ such that $\Vert \bfX \Vert_F = 1$ which was fixed for all experiments.
\ref{prob:PPR} noisy measurements \eqref{eq:noisyMeasuremntsPPR} were considered for the simple measurement scheme \eqref{eq:measurementSchemeSimple} with $M=2N-1$, $P=4$.
We investigated three values of SNR, of $10, 40$ and $60$ dB respectively.
For each SNR value, we generated $100$ independent noisy measurements and run the proposed PPR-WF algorithm using the four aforementioned initialization procedures.

Figure \ref{fig:InitWFComparison} depicts obtained reconstruction results for the three SNR scenarios, where we compare initialization methods in terms of cost function evolution $F(\vectx^{(k)})$ and normed residual $\Vert\vectx^{(k+1)} - \vectx^{(k)}\Vert_2 / \Vert\vectx^{(k)}\Vert_2$ decrease.
Note that we imposed a identical number of 2500 iterations of PPR-WF for each approach to ensure fair comparisons.
We also plot the empirical distribution of MSE values for each initialization for further comparison of the quality of the reconstructed signal (recall that MSE values are calculated after proper realignment of the estimated signal with the ground truth).
For SNR = $10$ dB (which is a very challenging scenario for PPR), there are no noticeable difference between initialization strategies: they provide similar results in terms of cost value decrease, residual evolution and MSE distribution.
For SNR = $40$ dB, one starts to observe significant differences between Sylvester-based approaches and spectral/random phase initializations.
On average, Sylvester-based initial points provides smaller optimal values, faster decrease of the residual and better reconstruction results in terms of MSE.
This behavior is accentuated for SNR $=60$ dB, where spectral and random phase initialization are unable to ensure convergence of PPR-WF to the global optimum.
This agrees with the observations made in Figure \ref{fig:PulseReconstruction} in the noiseless case for spectral initialization.

These results demonstrate the importance of the choice of the initial point in PPR-WF towards good convergence properties and recovery performance.
Overall, left and right-kernel Sylvester initializations systematically outperform spectral and random phase strategies.
While the left-kernel approach displays a slight advantage over the right-kernel approach in terms of residual decrease, it involves a much more important computational cost than its right-kernel counterpart.
This explains why we recommend to use right-kernel Sylvester initialization with PPR-WF for the best trade-off between algorithmic recovery performance and computational time.

\subsection{Recovery performance with noisy measurements}
\label{sub:noisyPerf}
\begin{figure}[t!]
  \centering
  \includegraphics[width=.75\textwidth]{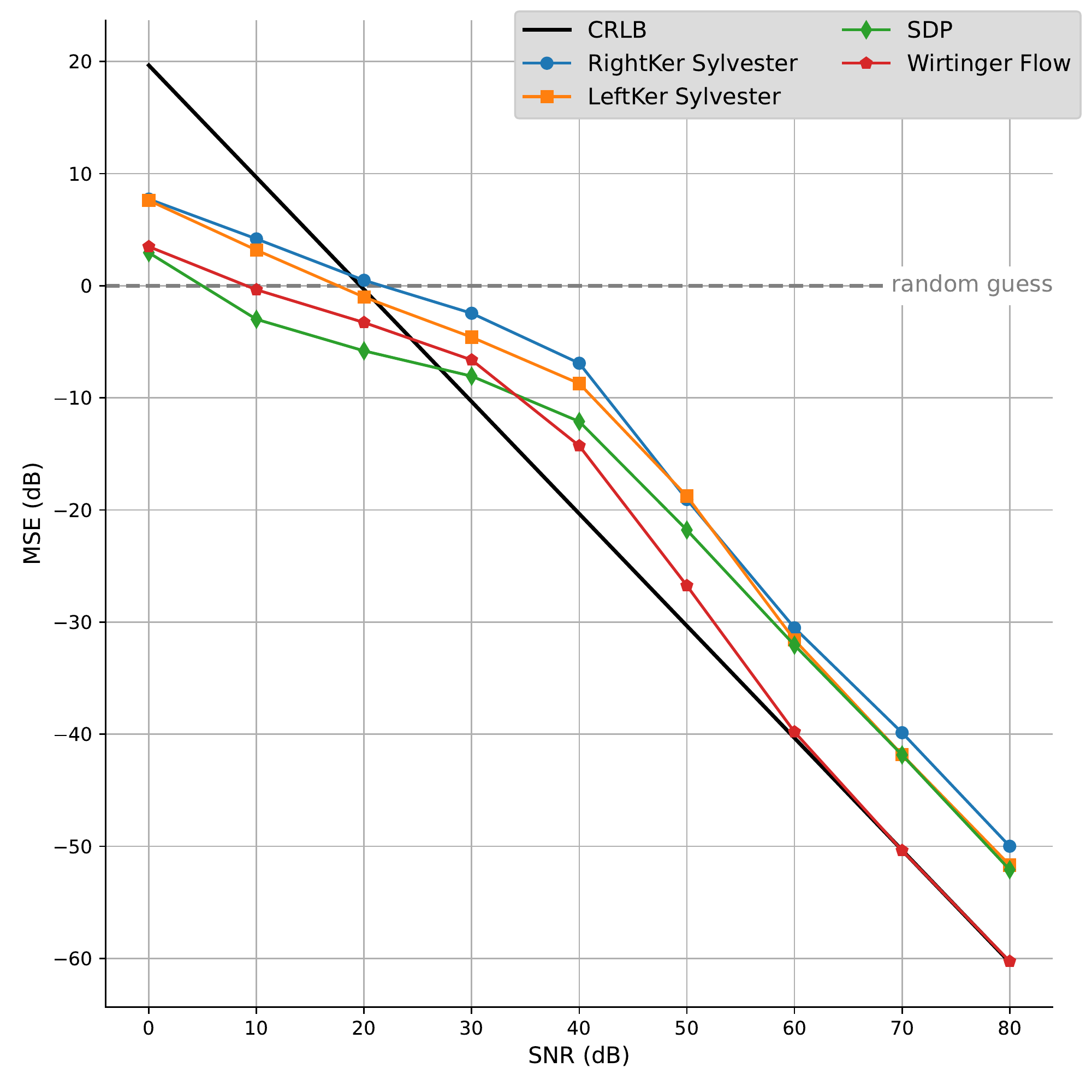}
  \caption{Evolution of the MSE with the SNR for the four PPR reconstruction methods proposed in this paper. Ground truth is randomly generated bivariate signal with $N=32$. Simple measurement scheme for $M=2N-1$ and $P=4$ was used.Thick black line indicate the corresponding Cramèr-Rao lower bound analytically derived in \ref{app:cramerRaoBoundPPR}.}
  \label{fig:noisyPerf}
\end{figure}

We now investigate the recovery performances of the different proposed algorithms for \ref{prob:PPR} when dealing with noisy measurements.
We consider an additive white Gaussian noise model \eqref{eq:noisyMeasuremntsPPR} for which the SNR is defined in \eqref{eq:SNRdefAWGNmodel}.
We generated a ground truth signal $\bfX \in \bbC^{N\times 2}$ with i.i.d. Gaussian entries of length $N = 32$ such that $\Vert \bfX \Vert_F = 1$ and selected the simple, $M=2N-1, P=4$ measurement scheme \eqref{eq:measurementSchemeSimple}.
For a given SNR value, the MSE associated with each one of the proposed methods to solve \ref{prob:PPR} was obtained by averaging of 100 independent reconstructions.
Note that we used our recommended right-kernel Sylvester initialization with PPR-WF, as explained in Section \ref{sub:initWFexp}.

Figure \ref{fig:noisyPerf} displays the evolution of MSE for values of SNR ranging from $0$ dB to $80$ dB.
As expected, the MSE decreases as the SNR increases, independently from the considered method.
Overall, algorithmic methods (PPR-WF and SDP) outperform algebraic (left and right-kernel Sylvester) ones in terms of MSE values.
More precisely, algebraic methods are not informative in the ``low-SNR'' regime (SNR $\leq 30$ dB) as they provide (relative) MSE values above $0$ dB, meaning that they do not provide a better reconstruction than a simple i.i.d. random guess scaled to the ground truth norm.
Furthermore we observe that SDP is more robust to noise than PPR-WF.
This agrees with the fact that SDP methods are known to be robust to noise in general.
Remarkably, the high-SNR regime ($\geq 60$ dB) highlights several distinctive behaviors.
First, we observe that beyond SNR $ = 40$ dB, PPR-WF outperforms all other methods, including SDP, by a few dB up to about 10 dB of relative MSE in the asymptotic regime.
Second, SDP do not longer outperforms left-kernel Sylvester, and only improves from the right-kernel Sylvester approach by a small margin.
This shows that, in this high-SNR regime, the computational burden associated to the SDP approach becomes prohibitive as 1) it provides no clear advantage over computationally cheaper algebraic methods and 2) it clearly underperforms PPR-WF.

For completeness, we also provide the Cramèr-Rao lower bound (CRLB) for the PRR measurement model \eqref{eq:noisyMeasuremntsPPR} to characterize a lower bound on the MSE of any unbiased estimator of the ground truth signal.
An analytical derivation of the resulting CRLB is given in \ref{app:cramerRaoBoundPPR}.
Figure \ref{fig:noisyPerf} displays the CRLB on top of MSE values obtained for each reconstruction method.
We observe that the CRLB is not informative below SNR $\leq 20$ dB as all methods provide smaller MSE values -- it simply means that the CRLB is particularly pessimistic in this regime.
On the contrary, the CRLB provides a meaningful lower bound in the high-SNR regime.
Importantly, it demonstrates that PPR-WF is an asymptotically optimal reconstruction method for \ref{prob:PPR} since it attains the CRLB for SNR $\geq 60$ dB.

\subsection{Influence of number of measurements}
\label{sub:nbMeasurementPerf}
\begin{figure}[t]
  \includegraphics[width=\textwidth]{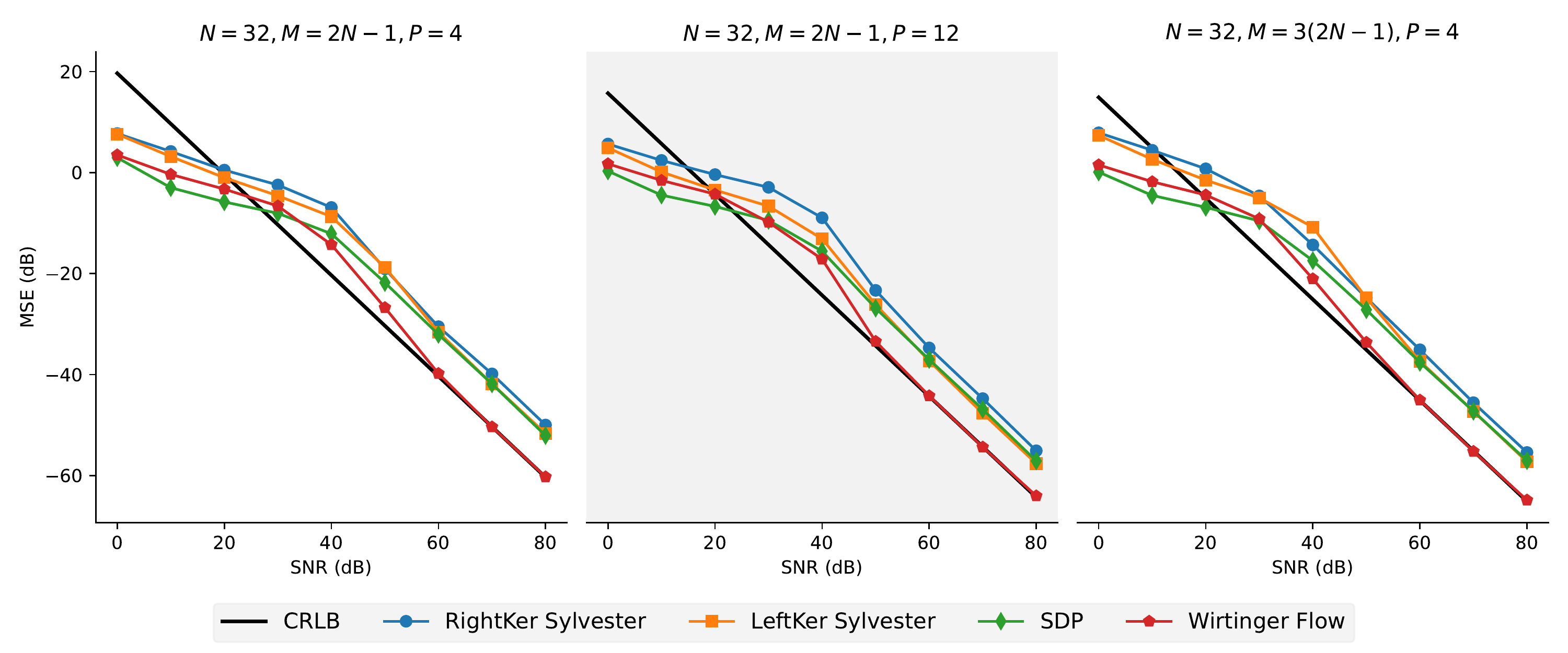}
  \caption{Comparison of the evolution of the MSE with respect to SNR for three measurements scheme $M=2N-1, P=4$ (left), $M = 2N-1, P=12$ (center) and $M=3(2N-1), P=4$ (right). Experiments follow the same protocol as described in Section \ref{sub:noisyPerf}.}\label{fig:comparisonMK}
\end{figure}

\begin{figure}[htbp]
  \includegraphics[width=\textwidth]{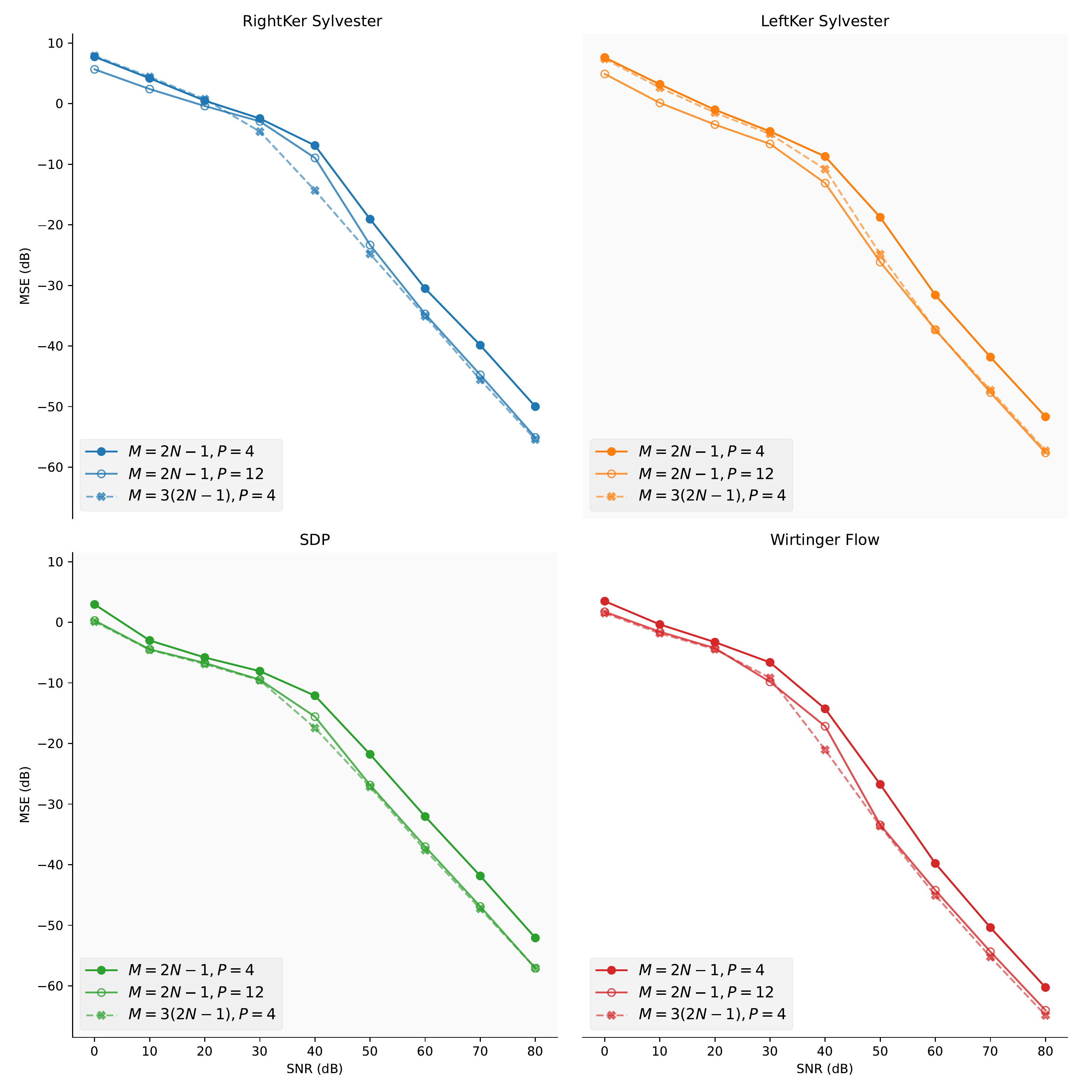}
  \caption{Side-by-side comparison of the behavior of each proposed reconstruction method for the three measurements scheme $M=2N-1, P=4$, $M = 2N-1, P=12$ and $M=3(2N-1), P=4$.}\label{fig:comparisonIndividualAlgoMK}
\end{figure}
One of the key advantages of the polarimetric measurement model in \ref{prob:PPR} is that one can easily increase the number of measurements $MP$ by performing more polarimetric projections, \ie by increasing $P$.
In fact, in practical experiments it may be oftentimes easier to set up a new polarizer state $\bfb_p$ than to change the actual detector, which would be required if one desires to increase the number of Fourier measurements $M$.
Therefore, a natural question is the following: if one desires to increase the total number of measurements $MP$, is it better -- in terms of MSE -- to increase the number of Fourier measurements $M$ or to increase the number of polarimetric projections $P$?
This is a vast topic related to the question of experimental design, which requires a specific treatment which is outside the scope of the present paper.
Nonetheless, we provide in the sequel a first study of the influence of the number of measurements in \ref{prob:PPR} for completeness.

Following the MSE performance analysis in Section \ref{sub:noisyPerf}, we use the same randomly generated ground truth signal $N=32$ and investigate the performances for two cases, \ie $M = 2N-1, P=12$ and $M=3(2N-1), P=4$, which lead to the same total number of measurements $MP$.
More precisely, the measurement scheme corresponding to each case is:
\begin{itemize}
  \item $M = 2N-1, P=12$ case: we use the correspondence between the 2-sphere and $\bbC^2$ to take advantage of optimal spherical tesselations such as HEALPix \cite{gorski2005healpix}.
        In physical terms, it can interpreted as finding one of the many possible Jones vector $\bfb_p$ corresponding to the Stokes parameters defining the rank-one matrix $\bfb_p\bfb_p^\herm$.
        Formally, given Cartesian coordinates $(s^x_p, s^y_p, s_p^z) \in \bbR^3$ of a point on the unit 2-sphere, we define the projection vector $\bfb_p$ as:
        \begin{equation}
          \bfb_p \triangleq \frac{1}{\sqrt{2}\sqrt{1+s_p^z}}\begin{bmatrix}
            \bmj s_p^x \\
            s_p^y+ (1+s^z_p)\bmj
          \end{bmatrix}\quad \text{if } s_p^z \neq -1, \quad \bfb_p \triangleq \begin{bmatrix}
            \bmj \\
            0
          \end{bmatrix}\quad \text{if } s_p^z = -1\:.
        \end{equation}
        Note that our choice of $P=12$ corresponds to the first level of HEALPix sphere discretization.

  \item $M=3(2N-1), P=4$ case: we keep the simple polarimetric measurement scheme \eqref{eq:projectionSchemeSimple} and increase the number $M$ of Fourier domain measurements.
\end{itemize}

Figure \ref{fig:comparisonMK} depicts MSE as a function of SNR for the two measurement setups described above, where results from the experiment in Section \ref{sub:noisyPerf} have been reproduced for better comparison.
As expected, increasing the total number of measurements $MP$ improves overall performance: this can be directly checked by remarking that the CRLB corresponding to $M = 2N-1, P=12$ and $M = 3(2N-1), P=4$ cases is lower that of the $M = 2N-1, P=4$ setup presented in Figure \ref{fig:noisyPerf}.
Moreover, the different proposed reconstructions method for \ref{prob:PPR} behave similarly with one another as in our description made in Section \ref{sub:noisyPerf}.
In particular, we note that PPR-WF also attains the CRLB in these two new setups, proving again that it establishes a versatile approach to solve \ref{prob:PPR}.

Figure \ref{fig:comparisonIndividualAlgoMK} provides a side-by-side comparison of these three measurement schemes for each reconstruction method.
First, remark that $M = 2N-1, P=12$ and $M = 3(2N-1), P=4$ scheme have similar CRLB MSE bounds, with a slight advantage to the $M = 3(2N-1), P=4$ case which can be observed on the PPR-WF panel.
Second, we note that for algorithmic approaches (SDP and PPR-WF), the difference concentrates in the mid-SNR regime, \ie between $30$ dB and $50$ dB, where oversampling in the Fourier domain offers slightly MSE improvement over increasing the number of polarimetric projections.
On the other hand, for algebraic approaches we observe that performing more polarimetric measurements usually improves the performance in the the low-SNR regime (SNR $\leq 30$ dB), eventhough algebraic approaches do not perform well in this scenario.
This performance improvement can be explained by the two-step nature of algebraic methods, which first need to reconstruct autocorrelation polynomials from polarimetric projections: in this case more polarimetric projections enable to reduce the reconstruction error in this first step.

\section{Conclusion}
\label{sec:conclusion}

In this paper, we have introduced a new model for Fourier phase retrieval called \ref{prob:PPR} that takes advantage of polarization measurements in applications involving polarized light.
Remarkably, we have shown that the problem of reconstructing a 1D bivariate signal from Fourier magnitude enjoys nice uniqueness properties which contrast with the non-uniqueness of standard 1D Fourier phase retrieval.
The theoretical study of \ref{prob:PPR} has been made possible by carefully drawing equivalences with two related problems, namely \ref{prob:BPR} and \ref{prob:PAF}.
Moreover, the derivation of uniqueness thanks to the polynomial factorization representation \ref{prob:PAF} motivated the development of two algebraic reconstructions methods for \ref{prob:PPR} using approximate greatest common divisor computations and Sylvester-like matrices.
We also carefully adapted SDP and Wirtinger-Flow methods to solve the \ref{prob:PPR} problem.
The extensive numerical experiments demonstrated the pros and cons of each approach.
It also allowed us to establish a scalable, computationaly efficient and robust to noise reconstruction strategy that combines both algebraic (right-kernel Sylvester initialization) and algorithmic (Wirtinger-Flow iterates) approaches.

Our conviction is that the new model \ref{prob:PPR} opens promising new avenues in for exploiting polarization in Fourier phase retrieval problems. For instance, an important challenge to be addressed lies in improving the performance of algebraic methods at low SNR, e.g. with more robust estimation of the measurement polynomials or adding some prior information about the signal to be recovered (e.g. smoothness).
These questions will be addressed in future work.

\appendix

\section{Polynomials and roots at infinity}\label{app:polynomials}
Let $\bbF = \bbR$ or $\bbC$ (or, in general, a field).
The vector space $\bbC_{\le D}[z]$ of polynomials of degree less than $D$ in subsection~\ref{sub:polynomials} is, in fact, isomorphic to the vector space $\bbC^{D+1}$ via the following one-to-one map:
\[
  \bfa =
  \begin{bmatrix}
    a[0] & a[1] & \cdots & a[D]
  \end{bmatrix}^{\transp} \mapsto
  A(z) = a[0] + z a[1] + \cdots + z^D a[D].
\]
In the literature (for example in algebraic geometry) a common way is to deal with bivariate homogeneous polynomials, that is polynomials of the form
\[
  y^D a[0] + y^{D-1} z a[1] + \cdots + z^D a[D],
\]
which belong to the space $\bbC_{=D}[z,y]$.
The roots of such polynomials belong to the projective space $\bbP^{1}$, which corresponds to the extended complex plane $\bbC \cup \{ \infty \}$.
However, for simplicity, we prefer to work with the polynomials in $\bbF_{\le D}[z]$ instead, and we refer the reader to  \cite[Ch. 8]{cox_ideals_1997} for details on homogeneous polynomials.

When working with the space $\bbF_{\le D}[z]$, the multiplication operation becomes the map from $\bbF_{\le D_1}[z] \times \bbF_{\le D_2}[z]$ to $\bbF_{\le (D_{1}+D_{2})}[z]$:
\begin{align*}
  (A(z), B(z)) & \mapsto C(z) = A(z)B(z),
\end{align*}
which in coordinates (i.e., for the vectors of coefficients $\bfa \in \bbF^{D_1+1}$ and $\bfb \in \bbF^{D_2+1}$) can be expressed as
\begin{equation}\label{eq:polyMultMatrixProduct}
  (\bfa, \bfb)  \mapsto  \bfc =  \multmat{b}{D_1} \bfa=  \multmat{a}{D_2} \bfb,
\end{equation}
where $\multmat{a}{L}$ is the multiplication matrix defined for any vector of coefficients
\[
  \bfa = \begin{bmatrix} a[0] & a[1] & \cdots & a[D]  \end{bmatrix},
\]
as the following matrix $(D+ L +1) \times (L+1)$
\begin{equation}\label{eq:multiplicationMatrix}
  \multmat{a}{L} \triangleq
  \underbrace{\begin{bmatrix}
      \pco{a}{0} &        &            \\
      \vdots     & \ddots &            \\
      \pco{a}{D} &        & \pco{a}{0} \\
                 & \ddots & \vdots     \\
                 &        & \pco{a}{D}
    \end{bmatrix}}_{L+1 \text{ columns}}.
\end{equation}
Armed with the definition of the multiplication, we can now give a formal justification to \eqref{eq:polyFactorizationFTA}.
Let us formally define
\[
  (z-\infty) \triangleq 0 \cdot z + 1  \in \bbC_{\le 1} [z].
\]
Then, thanks to the fundamental theorem of algebra, any nonzero polynomial $A \in \bbC_{\le D} [z]$ can be uniquely (up to permutation of roots) factorized as
\[
  A(z) = \lambda (z - \alpha_1) \cdots (z - \alpha_D),
\]
where for any $k$, $\alpha_k \in \bbC \cup \{\infty \}$.

Finally, we remark on the notion of the greatest common divisor, which, for two nonzero polynomials $A_1,A_2 \in \bbC_{\le D}[z]$ is a polynomial $H \in \bbC_{\le D'}[z]$  with highest possible $D'$, which is a divisor of both $A_1(z)$ and $A_2(z)$.
The GCD is defined uniquely up to a multiplication by a scalar in $\bbC \setminus \{ 0 \}$.
The same notion can be defined for several polynomials, see \cite[Section 2]{usevich_variable_2017} for more details.

\section{Relation between Fourier measurements and correlation polynomials}
\label{app:Fourier2polynomials}
\begin{proof}[{Proof of Lemma~\ref{lem:Fourier2polynomials}}]
  The first part of lemma follows from the correspondence between multiplication of polynomials and discrete convolution of two vectors, see
  \eqref{eq:polyMultMatrixProduct}.
  Next, recall that the discrete Fourier transform of $\lbrace \bfx[n]\rbrace_{n=0,1, \ldots, N-1}$ is denoted by $\fourierv{X}[m] = [\fourier{X}_1[m], \fourier{X}_2[m]]^\transp$ for $m=0, 1, \ldots, M-1$, see \eqref{eq:defDFT}.
  Then the
  Fourier entries can be related to polynomials $X_1(z)$ and $X_2(z)$ as follows:
  \begin{align*}
    \fourier{X}_1[m]  = X_1\left(e^{-\bmj 2\pi \frac{m}{M}}\right), \quad  \fourier{X}_2[m] = X_2\left(e^{-\bmj 2\pi \frac{m}{M}}\right),
  \end{align*}
  for any $m=0, 1, \ldots, M-1$. Similarly, thanks to \eqref{eq:conjReverse}, their conjugates can be expressed through the conjugate reflection polynomials $\widetilde{X}_1(z)$ and $\widetilde{X}_2(z)$
  \begin{align*}
    \conj{\fourier{X}_1[m]} & = \overline{X_1\left(e^{-\bmj 2\pi \frac{m}{M}}\right)} =
    \sum_{n=0}^{N-1} \conj{x_1[n]}e^{2\pi\bmj \frac{nm}{M}} = e^{\bmj 2\pi \frac{m(N-1)}{M}}\widetilde{X}_1\left(e^{-\bmj 2\pi \frac{m}{M}}\right), \\
    \conj{\fourier{X}_2[m]} & = \overline{X_2\left(e^{-\bmj 2\pi \frac{m}{M}}\right)} =
    \sum_{n=0}^{N-1} \conj{x_2[n]}e^{2\pi\bmj \frac{nm}{M}} = e^{\bmj 2\pi \frac{m(N-1)}{M}}\widetilde{X}_2\left(e^{-\bmj 2\pi \frac{m}{M}}\right).
  \end{align*}
  As a result, thanks to \eqref{eq:Gamma_m}, \ref{prob:BPR} measurements can be expressed in terms of measurement polynomials $\Gamma_{ij}(z)$ as follows:
  \begin{align*}
    \FourierGamma[m] & =
    \begin{bmatrix}
      \vert \fourier{X}_1[m]\vert^2           & \fourier{X}_1[m]\conj{\fourier{X}_2[m]} \\
      \fourier{X}_2[m]\conj{\fourier{X}_1[m]} & \vert \fourier{X}_2[m]\vert^2
    \end{bmatrix}                                                                                                                                                                                                                                                                                                                                                                                                                                   \\
                     & = e^{\bmj 2\pi \frac{m(N-1)}{M}} \begin{bmatrix}
                                                          X_1\left(e^{-\bmj 2\pi \frac{m}{M}}\right)  \widetilde{X}_1\left(e^{-\bmj 2\pi \frac{m}{M}}\right) & X_1\left(e^{-\bmj 2\pi \frac{m}{M}}\right)  \widetilde{X}_2\left(e^{-\bmj 2\pi \frac{m}{M}}\right) \\
                                                          X_2\left(e^{-\bmj 2\pi \frac{m}{M}}\right)  \widetilde{X}_1\left(e^{-\bmj 2\pi \frac{m}{M}}\right) & X_2\left(e^{-\bmj 2\pi \frac{m}{M}}\right)  \widetilde{X}_2\left(e^{-\bmj 2\pi \frac{m}{M}}\right)
                                                        \end{bmatrix}
    =
    e^{\bmj 2\pi \frac{m(N-1)}{M}} \boldsymbol{\Gamma}(e^{-\bmj 2\pi \frac{m}{M}}),
  \end{align*}
  which completes the proof.
\end{proof}

\begin{proof}[{Proof of Theorem~\ref{thm:BPR_eq_PAF}}]
  Here, we make use of the two one-to-one correspondences.
  Note that the mapping between $\bbC^{N+1}$ and $\bbC_{\le N}$ is a linear one-to-one map (and is an isomorphism), see \ref{app:polynomials}.
  Hence, the signals $X_1, X_2$ can be uniquely recovered from the polynomials and vice versa.

  Similarly, thanks to \eqref{eq:Gamma_evaluations}, the Fourier covariance measurements $\FourierGamma[m]$ are a linear transformation of the sequence
  \[
    \lbrace\boldsymbol{\Gamma}(e^{-\bmj 2\pi \frac{m}{M}})\rbrace_{m=0, \ldots, M-1}
  \]
  of evaluations of the matrix polynomial $\boldsymbol{\Gamma}(z)$ at a set of $M$ distinct points $\{e^{-\bmj 2\pi \frac{m}{M}}\}_{m=0, \ldots, M-1}$ on the complex plane.
  If $M \ge 2N-1$ (the degree of the polynomials + 1), then it is known that the coefficients of the polynomials can be uniquely recovered from the evaluations at $M$ distinct points, and therefore the following map is an injection
  \begin{align*}
    \bbC^{2\times 2}_{\le 2N-2} & \to  (\bbC^{2\times 2})^{M}                                \\
    \boldsymbol{\Gamma}(z)      & \mapsto \lbrace\FourierGamma[m]\rbrace_{m=0, \ldots, M-1},
  \end{align*}
  which completes the proof.

\end{proof}

\section{Proof of Theorem \ref{theorem:explicitFormPAF}}
\label{app:proof:theoremUniqueness}

Suppose that $H(z) = \gcd(\Gamma_{11}, \Gamma_{12}, \Gamma_{21}, \Gamma_{2}) =Q(z)\tilde{Q}(z)$ with $D$ pair roots $(\delta_i, \conj{\delta}_i^{-1})$.
We further assume that $(0, \infty)$ is not a pair root of $H(z)$.
Let $X_1(z) = Q(z)R_1(z)$ and $X_2(z) = Q(z)R_2(z)$.
By \cref{lemma:PAFspectralFactorisation}, the polynomials $R_1$ and $R_2$ can be determined up to one multiplicative constant by \eqref{eq:determinationR1R2_H}.
Let us denote by $\alpha_{1i}$ (resp. $\alpha_{2i}$) the $L-D-1$ roots of $R_1(z)$ (resp. $R_2(z)$), such that
\begin{equation*}
  R_1(z)  = \lambda_1 \prod_{i=1}^{N-D-1}(z-\alpha_{1i}), \qquad R_2(z) = \lambda_2 \prod_{i=1}^{N-D-1}(z-\alpha_{2i})
\end{equation*}
where $\lambda_1, \lambda_2 \in \mathbb{C}$ are constants, to be derived hereafter.
The recovery of $Q(z)$ from $Q(z)\tilde{Q}(z)$ is identical to the univariate case, see e.g. \cite{beinert2015ambiguities, boche_fourier_2017}.
Denoting by $(\delta_i, \overline{\delta_i}^{-1})$ the pair roots of $Q(z)\tilde{Q}(z)$, $Q(z)$ can be written as
\begin{equation}
  Q(z) = \prod_{i=1}^D(z-\beta_{i}), \quad \beta_i \in (\delta_i, \overline{\delta_i}^{-1})
\end{equation}
As explained in \cref{theorem:PAFuniqueness}, the number of different solutions for $Q(z)$ dictates the number of solutions for the \ref{prob:PAF} problem.
Thus, if polynomials $(X_1'(z), X'_2(z))$ are solutions to \ref{prob:PAF} then they can be expressed as
\begin{align}
  X_1'(z) & = \lambda_1 \prod_{i=1}^D(z-\beta_{i})\prod_{i=1}^{N-D-1}(z-\alpha_{1i}) \\
  X_2'(z) & = \lambda_2 \prod_{i=1}^D(z-\beta_{i})\prod_{i=1}^{N-D-1}(z-\alpha_{2i})
\end{align}
where $\lambda_1, \lambda_2$ remain to be determined.

To this aim, one writes the expression of the measurements polynomials in terms of $X_1'(z)$ and $X_2'(z)$ above. For instance:
\begin{equation}
  \begin{split}
    \Gamma_{11}(z) &= X_1'(z)z^{N-1}\overline{X_1'(\overline{z}^{-1})} \\
    &= \vert \lambda_1\vert^2 \prod_{i=1}^D(z-\beta_{i})\prod_{i=1}^{N-D-1}(z-\alpha_{1i}) \prod_{i=1}^D(1-\overline{\beta_{i}}z)\prod_{i=1}^{N-D-1}(1-\overline{\alpha_{1i}}z)
  \end{split}
\end{equation}
Using that $\Gamma_{11}(z) := \sum_{n=0}^{2N-2} \gamma_{11}[n-N+1]z^n$, identifying leading order coefficients yields
\begin{equation}
  \gamma_{11}[N-1]= \vert \lambda_1\vert^2(-1)^{N-1}\prod_{i=1}^D\overline{\beta_{i}}\prod_{i=1}^{N-D-1}\overline{\alpha_{1i}}
\end{equation}
Similarly, one gets
\begin{align}
  \gamma_{22}[N-1]= \vert \lambda_2\vert^2(-1)^{N-1}\prod_{i=1}^D\overline{\beta_{i}}\prod_{i=1}^{N-D-1}\overline{\alpha_{2i}} \\
  \gamma_{12}[N-1] = \lambda_1\overline{\lambda}_2(-1)^{N-1}\prod_{i=1}^D\overline{\beta_{i}}\prod_{i=1}^{N-D-1}\overline{\alpha_{2i}}
\end{align}
These relations determine uniquely the amplitudes of $\lambda_1, \lambda_2$ as well as the phase difference between $\lambda_1$ and $\lambda_2$.
Thus $\lambda_1, \lambda_2$ are unique up to a global phase factor $\exp(\bmj \theta), \: \theta \in [-\pi, \pi)$.
One obtains eventually the following expressions
\begin{align}
  \lambda_1 = e^{\bmj \theta}\left(|\gamma_{11}[N-1]| \prod_{i=1}^{D} | \beta_{i}| ^{-1}\prod_{i=1}^{N-D-1} | \alpha_{1i}|^{-1}\right)^{1/2} \\
  \lambda_2 = e^{\bmj (\theta - \Delta)}\left(|\gamma_{22}[N-1]| \prod_{i=1}^{D} | \beta_{i}| ^{-1}\prod_{i=1}^{N-D-1} | \alpha_{2i}|^{-1}\right)^{1/2}
\end{align}
with
\begin{align}
  \Delta & = \arg (\lambda_1\overline{\lambda_2})                                                                \\
         & = \pi(N-1)+ \arg  \gamma_{12}[N-1] + \sum_{i=1}^D\arg \beta_i + \sum_{i=1}^{N-D-1}\arg \alpha_{2i}\:.
\end{align}

\section{Sylvester matrices and greatest common divisors}\label{app:sylvesterAGCD}

\begin{proof}[Proof of Proposition~\ref{prop:sylvesterLeftKernel}]
  We first note that the result on the rank of $\sylv{D}(A,B)$  is known (see, for example, \cite[Theorem 4.7]{usevich_variable_2017}).
  Thus, we are left prove the second part, which is somewhat related to  \cite[Remark 4.8]{usevich_variable_2017}.

  We write $A(z) = F(z)H(z)$, $B(z) = G(z) H(z)$, so that $\gcd(A,B) = 1$ and $F,G \in \bbC_{\le L-K}[z]$.
  Consider the following multiplication matrix
  \[
    \multmat{h}{2L-D-K} =
    \underbrace{\begin{bmatrix}
        \pco{h}{0} &        &            \\
        \vdots     & \ddots &            \\
        \pco{h}{K} &        & \pco{h}{0} \\
                   & \ddots & \vdots     \\
                   &        & \pco{h}{K}
      \end{bmatrix}}_{2L-D-K+1 \text{ columns}},
  \]
  and our first goal is to show that the range of $\sylv{D}  (A,B)$ is a subset of the range  of $\multmat{h}{2L-D-K}$.
  Indeed, the range of $\sylv{D}  (A,B)$ corresponds to all polynomials $R(z) \in \bbC_{\le 2L-D}[z]$ that can be represented as
  \begin{equation}\label{eq:polyRangeSyld}
    R(z) = U(z) A(z) + V(z) B(z) = H(z) (U(z) F(z) + V(z)G(z)),
  \end{equation}
  and therefore any element in the range of  $\sylv{D}  (A,B)$ belongs to the range  of $\multmat{h}{2L-D-K}$ (since the range of $\multmat{h}{2L-D-K}$ corresponds to all polynomials of the form $H(z)Q(z)$ with $Q \in \bbC_{\le 2L-D-K}[z]$).

  Next we note that  $\multmat{h}{2L-D-K}$ is full column rank and therefore the ranks of $\colspan(\sylv{D}  (A,B))$ and $\colspan (\multmat{h}{2L-D-K})$ are equal.
  Hence the ranges of the two matrices coincide, as well as the left kernels; in particular the following equivalence holds true
  \[
    \bfu^{\transp} \sylv{D}  (A,B) = 0 \iff \bfu^{\transp} \multmat{h}{2L-D-K} = 0.
  \]
  Finally, easy algebraic calculations (see also, for instance, \cite[eqn. (33)]{usevich_variable_2017}) show that
  \[
    \bfu^{\transp} \multmat{h}{2L-D-K}  = \bfh^{\transp} \hankel{K+1}(\bfu),
  \]
  which completes the proof.
\end{proof}

\section{Cramèr-Rao bound for PPR}
\label{app:cramerRaoBoundPPR}
Several authors have considered Cramèr-Rao bounds for the classical phase retrieval problem with additive white gaussian noise \cite{balan2016reconstruction,bandeira_saving_2014,qian2016phase}.
These results directly apply to the additive Gaussian noise PPR model \eqref{eq:noisyMeasuremntsPPR} since it can be equivalently rewritten as a particular one-dimensional noise model thanks to  \ref{prob:PPR-1D} model introduced in Section \ref{sub:1DequivalentModelphysicsModel}.
For completeness, we provide below an alternative derivation of the Cramèr-rao bound described in \cite{qian2016phase}, where we use a full complex-domain approach instead of considering separate Cramèr-Rao bounds on amplitude and phase.
Since measurement noise $n_{m,p}$ is i.i.d. Gaussian distributed with variance $\sigma^2$, the pdf of the vector of observations $\bfy$ is given by
\begin{align}
  p(\bfy \vert \vectx) & = \prod_{m=0}^{M-1} \prod_{p=0}^{P-1} p(y_{m,p}\vert \vectx)                                                                                                 \\
                       & = \prod_{m=0}^{M-1} \prod_{p=0}^{P-1} \frac{1}{\sqrt{2\pi}\sigma}\exp\left[-\frac{\left(y_{m,p}-\vectx^\herm \bfC_{m,p}\vectx\right)^2}{2\sigma^2}\right]\:.
\end{align}
where we recall that $\bfC_{m,p} \triangleq \bfc_{m,p}\bfc_{m,p}^\herm$ with $\bfc_{m,p}=\overline{\bfb}_p\otimes \bfa_m$ by definition.
One obtains the log-likelihood of observations as
\begin{equation}
  \log p(\bfy \vert \vect{ \bfx}) = -\frac{MP}{2}\log(2\pi\sigma^2) - \frac{1}{2\sigma^2}\sum_{m=0}^{M-1} \sum_{p=0}^{P-1} \left(y_{m,p}-\vectx^\herm \bfC_{m,p}\vectx\right)^2
\end{equation}
Since one wants to estimate the complex parameter vector $\vectx$, it is necessary to use the complex Fisher Information Matrix (FIM) \cite{van1994cramer,loesch2012cramer,ollila2008cramer}, which reads
\begin{equation}
  \calJ_{\vectx} = \begin{bmatrix} \calI_{\vectx}        & \calP_{\vectx}        \\
                \conj{\calP_{\vectx}} & \conj{\calI_{\vectx}}
  \end{bmatrix} \in \bbC^{4N\times 4N}\label{eq:complexFIMdefinition}
\end{equation}
where entries are defined using Wirtinger derivatives \cite{kreutz-delgado_complex_2009} since $\vectx$ is a complex vector:
\begin{align}
  \calI_{\vectx} & = \bfE\left[\left(\nabla_{\conj{\vectx}} \log p(\bfy \vert \vectx)\right)\left(\nabla_{\conj{\vectx}} \log p(\bfy \vert \vectx)\right)^\herm\right]   \\
  \calP_{\vectx} & = \bfE\left[\left(\nabla_{\conj{\vectx}} \log p(\bfy \vert \vectx)\right)\left(\nabla_{\conj{\vectx}} \log p(\bfy \vert \vectx)\right)^\transp\right]
\end{align}
Note that the FIM $\calJ_{\vectx}$ defined in \eqref{eq:complexFIMdefinition} is isomorphic to the real FIM which would have been obtained by stacking the real and imaginary parts of $\vectx$ in a single long vector \cite{loesch2012cramer}.
This explains why $\calJ_{\vectx}$ has dimension $4N \times 4N$.
Using properties of Wirtinger derivatives, we obtain
\begin{equation}
  \nabla_{\conj{\vectx}} \log p(\bfy \vert \vectx) = -\frac{1}{\sigma^2}\sum_{m=0}^{M-1} \sum_{p=0}^{P-1}(y_{m,p}-\vectx^\herm \bfC_{m,p}\vectx)\bfC_{m,p}\vectx\:.
\end{equation}
This allows to compute explicitly the block terms $\calI_{\vectx}$ and $\calP_{\vectx}$ that define $\calJ_{\vectx}$.
Using noise independence, one gets
\begin{align}
  \calI_{\vectx} & = \frac{1}{\sigma^4}\bfE\left[\left(\sum_{m,p} (y_{m,p}-\vectx^\herm \bfC_{m,p}\vectx)\bfC_{m,p}\vectx\right)\left(\sum_{m',p'} (y_{m',p'}-\vectx^\herm \bfC_{m', p'}\vectx)\vectx^\herm \bfC_{m', p'}\right)\right] \\
                 & = \frac{1}{\sigma^4}\sum_{m,p,m',p'}\bfE\left[n_{m,p}n_{m',p'}\right]\bfC_{m,p}\vectx \vectx^\herm \bfC_{m',p'}                                                                                                      \\
                 & = \frac{1}{\sigma^2}\sum_{m, p}\bfC_{m,p}\vectx \vectx^\herm \bfC_{m,p}                                                                                                                                              \\
                 & = \frac{1}{\sigma^2}\sum_{m, p}\vert\bfc_{m,p}^\herm \vectx\vert^2 \bfc_{m,p}\bfc_{m,p}^\herm
\end{align}
Similar calculations leads to:
\begin{equation}
  \calP_{\vectx} = \frac{1}{\sigma^2}\sum_{ij}\bfC_{m,p}\vectx (\vectx)^\transp \bfC_{m,p}^\transp = \frac{1}{\sigma^2}\sum_{m, p}\left(\bfc_{m,p}^\herm \vectx\right)^2 \bfc_{m,p}\bfc_{m,p}^\transp
\end{equation}

A key result \cite{ollila2008cramer} is that the inverse of the complex FIM \eqref{eq:complexFIMdefinition} provides a lower bound on the covariance and pseudo-covariance of any unbiased estimator $\hat{\boldsymbol{\xi}}$ of the complex parameter $\vectx$:
\begin{equation}
  \begin{bmatrix}
    \mathrm{cov}\:\hat{\boldsymbol{\xi}}     & \mathrm{pcov}\:\hat{\boldsymbol{\xi}}     \\
    \conj{\mathrm{pcov}\:\hat{\boldsymbol{\xi}}} & \conj{\mathrm{cov}\:\hat{\boldsymbol{\xi}}}
  \end{bmatrix} \succeq \calJ_{\vectx}^{-1}
\end{equation}
When the complex FIM is singular -- as in phase retrieval \cite{balan2016reconstruction,bandeira_saving_2014} --, one can show its pseudo-inverse remains a valid lower bound for the MSE; following the discussion in \cite{qian2016phase}, we still refer to the resultant bound as the CRB with little abuse.
In particular, we obtain the following bound on the MSE on any unbiased PPR estimator $\hat{\bfX}$ for the model \eqref{eq:noisyMeasuremntsPPR}:
\begin{equation}
  \bfE \Vert \hat{\bfX} - \bfX\Vert_F^2 = \bfE \Vert \hat{\boldsymbol{\xi}} - \vectx\Vert_2^2 = \trace  \mathrm{cov}\hat{\boldsymbol{\xi}} \geq \trace \left(\left[\calJ_{\vectx}^{\dagger}\right]_{[:2N, :2N]}\right)
\end{equation}
where the subscript $\phantom{0}_{[:2N, :2N]}$ denotes the restriction to the upper-left block of $\calJ_{\vectx}^{\dagger}$.

\bibliography{refs.bib}

\end{document}